\newtheorem{theorem}{Theorem}
\newtheorem{lemma}{Lemma}
\newtheorem{proposition}{Proposition}
\newtheorem{observation}{Observation}
\newtheorem{algorithm}{Algorithm}
\newcommand{\opt}{{\rm OPT}}
\newcommand{\area}{{\rm area}}
\newcommand{\inter}{{\rm int}}
\newcommand{\NN}{\mathbb{N}} 
\newcommand{\RR}{\mathbb{R}} 
\newcommand{\eps}{\varepsilon}
\def\Q{\mathcal Q}
\def\F{\mathcal F}
\def\eg{{e.g.}}
\def\ie{{i.e.}}
\newcommand{\later}[1]{{}}
\newcommand{\old}[1]{{}}
\long\def\ignore#1{}
\title{Anchored Rectangle and Square Packings\thanks{This work was
    partially supported by the NSF awards CCF-1422311 and CCF-1423615.}}
\author{%
Kevin Balas\thanks{Department of Mathematics, California State
  University Northridge, Los Angeles, CA, USA\@.}\ \thanks{Mathematics
  Department, Los Angeles Mission College, Sylmar, CA, USA\@. Email:
  \texttt{balask@lamission.edu}.}
\qquad
Adrian Dumitrescu\footnote{Department of Computer Science, University
  of Wisconsin--Milwaukee, WI, USA\@.
Email: \texttt{dumitres@uwm.edu}.}
\qquad
Csaba D. T\'oth\footnotemark[1]\ \thanks{Department of Computer Science,
Tufts University, Medford, MA, USA\@. Email:~\texttt{cdtoth@acm.org}.
}}
\begin{document}

\maketitle

\begin{abstract}
For points $p_1,\ldots , p_n$ in the unit square $[0,1]^2$, an
\emph{anchored rectangle packing} consists of interior-disjoint
axis-aligned empty rectangles $r_1,\ldots , r_n\subseteq [0,1]^2$ such
that point $p_i$ is a corner of the rectangle $r_i$ (that is, $r_i$ is
\emph{anchored} at $p_i$) for $i=1,\ldots, n$. We show that for every
set of $n$ points in $[0,1]^2$, there is an anchored rectangle packing
of area at least $7/12-O(1/n)$, and for every $n\in \NN$, there are
point sets for which the area of every anchored rectangle packing is
at most $2/3$. The maximum area of an anchored \emph{square} packing
is always at least $5/32$ and sometimes at most $7/27$.

The above constructive lower bounds immediately yield constant-factor
approximations, of $7/12 -\eps$ for rectangles and $5/32$ for squares,
for computing anchored packings of maximum area in $O(n\log n)$ time.
We prove that a simple greedy strategy achieves a $9/47$-approximation
for anchored square packings, and $1/3$ for lower-left anchored square packings.
Reductions to maximum weight independent set (MWIS) yield a QPTAS
and a PTAS for anchored rectangle and square packings in $n^{O(1/\eps)}$
and $\exp({\rm poly}(\log (n/\eps)))$ time, respectively.

\medskip
\noindent\textbf{\small Keywords}:
Rectangle packing,
anchored rectangle,
greedy algorithm,
charging scheme,
approximation algorithm.

\end{abstract}

\section{Introduction} \label{sec:intro}

Let $P=\{p_1,\ldots , p_n\}$ be a finite set of points in an
axis-aligned bounding rectangle $U$.
An \emph{anchored rectangle packing} for $P$ is a set of axis-aligned empty rectangles
$r_1,\ldots , r_n$ that lie in $U$, are interior-disjoint, and $p_i$ is one of the
four corners of $r_i$ for $i=1,\ldots , n$;
rectangle $r_i$ is said to be \emph{anchored} at $p_i$.
For a given point set $P\subset U$, we wish to find the maximum total
area $A(P)$ of an anchored rectangle packing of $P$. Since the ratio
between areas is an affine invariant, we may assume that $U=[0,1]^2$.
However, if we are interested in the maximum area of an \emph{anchored
  square packing}, we \emph{must} assume that $U=[0,1]^2$ (or that the
aspect ratio of $U$ is bounded from below by a constant; otherwise,
with an \emph{arbitrary} rectangle $U$, the guaranteed area is zero).

Finding the maximum area of an anchored rectangle packing
of $n$ given points is suspected but not known to be NP-hard.
Balas and T\'oth~\cite{BT15} observed that the number of distinct
rectangle packings that attain the maximum area, $A(P)$, can be
exponential in $n$. From the opposite direction, the same authors~\cite{BT15}
proved an exponential upper bound on the number of maximum area configurations,
namely $2^n C_n= \Theta(8^n/{n^{3/2}})$,
where $C_n = \frac{1}{n+1} {2n \choose n} = \Theta(4^n/{n^{3/2}})$
is the $n$th Catalan number. Note that a greedy strategy may fail to
find $A(P)$; see Fig.~\ref{fig:greedyfail}. 

\begin{figure}[hptb]
\centering
\includegraphics[scale=1.2]{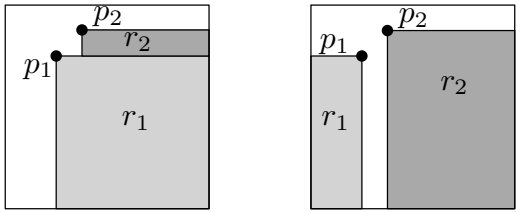}
\caption{For $P=\{p_1,p_2\}$, with $p_1 =(\frac{1}{4},\frac{3}{4})$
  and $p_2 = (\frac{3}{8},\frac{7}{8})$,
a greedy algorithm selects rectangles of area
$\frac{3}{4}\cdot \frac{3}{4} + \frac{1}{8}\cdot \frac{5}{8} = \frac{41}{64}$ (left),
which is less than the area
$\frac{1}{4}\cdot \frac{3}{4} + \frac{5}{8}\cdot \frac{7}{8} = \frac{47}{64}$
of the packing on the right.}
\label{fig:greedyfail}
\end{figure}

\paragraph{Variants and generalizations.}
We consider three additional variants of the problem.
An \emph{anchored square packing} is an anchored rectangle packing
in which all rectangles are squares; a \emph{lower-left anchored rectangle packing}
is a rectangle packing where each point $p_i\in r_i$ is the lower-left corner of $r_i$;
and a \emph{lower-left anchored square packing} has both properties.

We suspect that all
variants, with rectangles or with squares, are NP-hard.
Here, we put forward several approximation algorithms, while it is understood that
the news regarding NP-hardness can occur at any time
or perhaps take some time to establish.

The problem can be generalized to other geometric shapes with distinct representatives.
Let $P=\{p_1,\ldots,p_n\}$ be a finite set of points in a compact domain $U\subset \RR^d$,
and let $\F=\{\F_1,\ldots,\F_n\}$ be $n$ families of measurable sets
(\eg, rectangles, squares, or disks) such that for all $r\in \F_i$,
we have $p_i\in r\subseteq U$ and $\mu(r)\geq 0$ is the measure of $r$.
An \emph{anchored packing for $(P,\F)$} is a set of pairwise interior-disjoint
representatives $r_1,\ldots , r_n$ with $r_i\in \F_i$ for $i=1,\ldots, n$.
We wish to find an anchored packing for $(P,\F)$ of maximum measure $\sum_{i=1}^n\mu(r_i)$.
While some variants are trivial (\eg, when $U=[0,1]^2$ and $\F_i$ consists of all rectangles
containing $p_i$), there are many interesting and challenging variants
(\eg, when $\F_i$ consists of disks containing $p_i$; or when $U$ is nonconvex).
In this paper we assume that the domain $U$ and the families $\F$ are
axis-aligned rectangles or axis-aligned squares in the plane.

\begin{table}[hptb]
\begin{center}
 \begin{tabular}{|l||c|c|c|c|}
   \hline 
Anchored packing with & rectangles & squares &  LL-rect.  & LL-sq. \\ \hline
\rule{0pt}{2.5ex}Guaranteed max.~area & $\frac{7}{12}-O(\frac{1}{n})\leq A(n)\leq
\frac{2}{3}$ & $\frac{5}{32}\leq A_\textup{sq}(n) \leq \frac{7}{27}$ &
$0$ & $0$\\
Greedy approx.~ratio  & $7/12-\eps$  & $9/47$ & $0.091$~\cite{DT15}  & $1/3$\\
Approximation scheme        & QPTAS & PTAS & QPTAS & PTAS\\
   \hline
 \end{tabular}
\caption{Table of results for the four variants studied in this
  paper. The last two columns refer to lower-left anchored rectangles
  and lower-left anchored squares, respectively.\label{table:1}}
\end{center}
\end{table}

\vspace{-2\baselineskip}
\paragraph{Contributions.} Our results are summarized in Table~\ref{table:1}.

(i) We first deduce upper and lower bounds on the maximum area of
an anchored rectangle packing of $n$ points in $[0,1]^2$. For $n\in
\NN$, let $A(n)=\inf_{|P|=n} A(P)$. We prove that
$\frac{7}{12}-O(1/n)\leq A(n)\leq \frac{2}{3}$ for all $n\in \NN$
(Sections~\ref{sec:upper} and~\ref{sec:lower}).

\smallskip
(ii) Let $A_\textup{sq}(P)$ be the maximum area of an anchored square
packing for a point set $P$, and $A_\textup{sq}(n)=\inf_{|P|=n} A_\textup{sq}(P)$.
We prove that $\frac{5}{32}\leq A_\textup{sq}(n) \leq \frac{7}{27}$ for all $n$
(Sections~\ref{sec:upper} and~\ref{sec:lower:sq}).

\smallskip
(iii) The above constructive lower bounds immediately yield
constant-factor approximations for computing anchored packings of maximum area
($7/12 -\eps$ for rectangles and $5/32$ for squares) in $O(n\log n)$
time (Sections~\ref{sec:lower} and \ref{sec:lower:sq}). 
In Section~\ref{sec:approx} we show that a (natural) greedy algorithm for
anchored square packings achieves a better approximation ratio, namely
$9/47=1/5.22\ldots$, in $O(n^2)$ time.
By refining some of the tools developed for this bound, in Section~\ref{sec:approx:ll}
we prove a tight bound of $1/3$ for the approximation ratio of a greedy algorithm for
lower-left anchored square packings.

\smallskip
(iv) We obtain a polynomial-time approximation scheme (PTAS)
for the maximum area anchored square packing problem,
and a quasi-polynomial-time approximation scheme (QPTAS)
for the maximum area anchored rectangle packing problem,
via a reduction to the maximum weight independent set (MWIS)
problem for axis-aligned squares~\cite{Cha03} and rectangles~\cite{AW13},
respectively. Given $n$ points, an $(1-\eps)$-approximation for the anchored square packing
of maximum area can be computed in time $n^{O(1/\eps)}$; and for the
anchored rectangle packing  of maximum area, in time $\exp({\rm poly}(\log (n/\eps)))$.
Both results extend to the lower-left anchored variants (Section~\ref{sec:schemes}).

\paragraph{Motivation and related work.}
Packing axis-aligned rectangles in a rectangular container, albeit
without anchors, is the unifying theme of several classic optimization
problems. The 2D knapsack problem, strip packing, and 2D bin packing
involve arranging a set of given rectangles in the most economic
fashion~\cite{AW13,BK14}. The maximum area independent set (MAIS) problem for
rectangles (squares, or disks, etc.),  is that of selecting a maximum area
packing from a given set~\cite{AW15}; see classic papers
such as~\cite{Aj73,Ra49,Ra51,Ra68,Ra28} and also
more recent ones~\cite{BDJ10a,BDJ10b,DJ13b} for
quantitative bounds and constant approximations.
These optimization problems are NP-hard,
and there is a rich literature on approximation algorithms.
Given an axis-parallel rectangle $U$ in the plane containing $n$ points,
the problem of computing a maximum-area empty axis-parallel
sub-rectangle contained in $U$ is one of the oldest problems studied
in computational geometry~\cite{AS87,CDL86}; the higher dimensional
variant has been also studied~\cite{DJ13a}.
In contrast, our problem setup is fundamentally different: the rectangles (one for
each anchor) have variable sizes, but their location is constrained by the anchors.

Map labeling problems in geographic information systems (GIS)~\cite{KT13,KR92,KSW99}
call for choosing interior-disjoint rectangles that are incident to a given set
of points in the plane. GIS applications often impose 
constraints on the label boxes, such as aspect ratio, minimum and maximum size,
or priority weights.
Most optimization problems of such variants are known
to be NP-hard~\cite{FW91,IL03,JC04,KNN+02}.
In this paper, we focus on maximizing the total area of an anchored rectangle packing.

In a restricted setting where each point $p_i$ is the lower-left corner
of the rectangle $r_i$ and $(0,0)\in P$,
Allen Freedman~\cite{Tu69,Wi07} conjectured almost 50 years ago that
there is a lower-left anchored rectangle packing of area at least $1/2$.
The current best lower bound on the area under these conditions
is (about) $0.091$, as established in~\cite{DT15}.
The analogous problem of estimating the total area for lower-left
anchored square packings is much easier.
If $P$ consists of the $n$ points $(i/n,i/n)$, $i=0,1,\ldots,n-1$,
then the total area of the $n$ anchored squares is at most $1/n$, and so it
tends to zero as $n$ tends to infinity.
A looser anchor restriction, often appearing in map labeling problems with square labels,
requires the anchors to be contained in the boundaries of the squares,
however the squares need to be congruent; see~\eg,~\cite{ZJ06}.

In the context of \emph{covering} (as opposed to \emph{packing}),
the problem of covering a given polygon by disks of given centers
and varying radii such that the sum of areas of the disks is
minimized has been considered in~\cite{ACK+11,BCK+05}.
In particular, covering $[0,1]^2$ with $\ell_\infty$-disks of given centers
and minimal area as in~\cite{BVX13,BVX13b} is dual
to the anchored square packings studied in this paper.

\paragraph{Notation.}
Given an $n$-element point set $P$ contained in $U=[0,1]^2$,
denote by $\opt=\opt(P)$ a packing (of rectangles or squares, as the
case may be) of maximum total area.  An algorithm for a packing
problem has approximation ratio $\alpha$ if the packing it computes,
$\Pi$, satisfies $\area(\Pi) \geq \alpha \, \area(\opt)$, for some $\alpha \leq 1$.
A set of points is \emph{in general position} if no two points have the
same $x$- or $y$-coordinate. The boundary of a planar body $B$
is denoted by $\partial B$, and its interior by $\inter(B)$.

\section{Upper Bounds} \label{sec:upper}

\begin{proposition} \label{prop:2/3}
For every $n\in \NN$, there exists a point set $P_n$ such that every
anchored rectangle packing for $P_n$ has area at most $\frac{2}{3}$.
Consequently, $A(n) \leq \frac{2}{3}$.
\end{proposition}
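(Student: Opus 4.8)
The plan is to exhibit a single point set $P_n$ (for each $n$) whose every anchored rectangle packing wastes at least a $1/3$ fraction of the unit square. The natural idea is to force the anchors to "compete" for the same region, so that no matter how the rectangles are assigned, a constant fraction of the area must remain uncovered. A clean way to do this is to place all $n$ points in a tiny cluster near a single location inside $[0,1]^2$ — say, near the center or near a corner — so that the anchored rectangles effectively behave like rectangles anchored at one common point, up to an $O(1/n)$ perturbation. If all points coincide at a single point $q=(a,b)$, then each rectangle $r_i$ has $q$ as one of its four corners, and the rectangles must be interior-disjoint; this reduces the problem to packing interior-disjoint empty rectangles all sharing the corner $q$.

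**Analyzing the single-point (clustered) configuration.**

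First I would analyze the idealized case where all anchors sit at one point $q$. Each $r_i$ occupies one of the four quadrants determined by the horizontal and vertical lines through $q$, and within a single quadrant the anchored rectangles sharing corner $q$ must be interior-disjoint, hence nested or staircase-arranged; the maximal area attainable in one quadrant is simply the area of that quadrant. So the total coverable area is at most the sum of the quadrant areas actually used, which is at most the area of $[0,1]^2$ — that gives the trivial bound $1$, not $2/3$. The key refinement is that a \emph{single} rectangle anchored at $q$ cannot simultaneously fill a quadrant \emph{and} leave room for the others: I would argue that the best one can do is choose the location $q$ so that the four quadrants are balanced, and then show a combinatorial constraint (only one rectangle per "direction" can reach the far boundary) forcing the uncovered area up to $1/3$. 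Concretely, I expect the extremal placement to be $q$ at a point dividing $[0,1]^2$ into pieces whose ratios yield the $2/3$ bound; the computation of the optimal $q$ and the resulting maximum packed area is a bounded optimization over a small number of parameters.

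**Passing from the limit to finite $n$ and the main obstacle.**

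Once the single-point bound of $2/3$ is established, I would cluster the $n$ points in a disk of radius $\delta$ around $q$ with $\delta = \Theta(1/n)$ (or smaller), and argue that any anchored packing for $P_n$ differs from an admissible packing for the coincident configuration by at most $O(n\delta)=O(1)$ in aggregate — but more carefully, by choosing $\delta$ small enough relative to $n$, the perturbation in total packed area is $o(1)$, so the maximum packed area is still at most $2/3$ in the limit, and for each fixed $n$ is at most $2/3$ outright if the clustering is exact enough. The cleanest route is probably to place the points so that the \emph{combinatorial} structure (which quadrant each rectangle can use, and which can reach a far boundary) is forced to match the coincident case exactly, making the bound hold for every finite $n$ without an error term. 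The main obstacle is the analysis of the idealized single-point packing: I must rule out clever interleaved arrangements across the four quadrants that might beat $2/3$, which requires a careful case analysis of how interior-disjoint rectangles sharing a common corner can tile the square, and identifying precisely which far boundaries can be reached. Establishing that $2/3$ (and not something larger) is the true supremum for the clustered configuration — i.e., showing the matching upper bound on what any packing can achieve — is the crux; the reduction from finite $n$ to the limit is comparatively routine.
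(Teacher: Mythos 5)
Your plan has a fatal flaw at its crux: the clustered (and, in the limit, coincident) configuration does \emph{not} force any wasted area, so the ``key refinement'' you hope for does not exist. If all anchors sit at a single point $q=(a,b)$, the four closed quadrants determined by $q$ \emph{partition} $[0,1]^2$, and one rectangle per quadrant (each with corner $q$) covers the whole square; the remaining $n-4$ rectangles can be taken degenerate, which is permitted (the paper itself uses zero-area rectangles, e.g.\ in the proof of Theorem~\ref{thm:7/12}). Your hoped-for constraint that ``only one rectangle per direction can reach the far boundary'' is exactly satisfied by this packing of area $1$. The finite-$n$ clustered version fails the same way: if the $n$ points lie in a disk of radius $\delta$ around $q$, let $p_a$ maximize $x+y$; then $[x_a,1]\times[y_a,1]$ is empty (any point interior to it would have a larger coordinate sum), and symmetrically the three other extremal points of the cluster anchor empty rectangles toward the other three corners of $U$. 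After trimming overlaps near the cluster, these four rectangles pack area $1-O(\delta)$. So every clustered point set admits packings of area arbitrarily close to $1$, and no constant upper bound below $1$ can be certified this way; the supremum you would need to be $2/3$ is in fact $1$.

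The paper's construction is structurally opposite to yours: instead of concentrating the points, it \emph{spreads} them along a geometric progression on the diagonal, $p_i=(2^{-i},2^{-i})$, so that each point individually caps the rectangles anchored at it. If $p_i$ is the upper-right corner of $r_i$, then $r_i\subseteq[0,2^{-i}]^2$ and $\area(r_i)\le 4^{-i}$; in each of the other three orientations one side of $r_i$ is at most $2^{-i}$ (either bounded by the coordinates of $p_i$ itself, or, for the lower-left orientation, because $r_i$ must be empty and hence cannot contain $p_{i-1}$ in its interior), giving $\area(r_i)\le 2^{-i}(1-2^{-i})$. Summing $2^{-i}-4^{-i}$ over $i$ yields $\frac{2}{3}-2^{-n}+\frac{1}{3}\cdot 4^{-n}\le\frac{2}{3}$, with no limiting or perturbation argument needed. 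The lesson is that the obstruction must come from each anchor's position relative to the boundary and to the \emph{other} points (which constrain emptiness), not from competition around a common hub --- that competition is resolved perfectly by the quadrant decomposition.
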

\begin{proof}
Consider the point set $P = \{p_1,\ldots ,p_n\}$, where
$p_i = (x_i,y_i) =(2^{-i},2^{-i})$, for $i=1,\ldots,n$;
see Fig.~\ref{f3}\,(left).
Let $R = \{r_1,\ldots ,r_n\}$ be an anchored rectangle packing for $P$.
Since $p_1 = (\frac{1}{2},\frac{1}{2})$, any rectangle anchored at
$p_1$ has height at most $\frac{1}{2}$, width at most
$\frac{1}{2}$, and hence area at most $\frac{1}{4}$.

For $i=2,\ldots , n$, the $x$-coordinate of $p_i$, $x_i$, is halfway
between 0 and $x_{i-1}$, and $y_i$ is halfway between 0 and
$y_{i-1}$. Consequently, if $p_i$ is the lower-right, lower-left or
upper-left corner of $r_i$, then $\area(r_i) \leq
(\frac{1}{2^i})(1- \frac{1}{2^i})= \frac{1}{2^i} - \frac{1}{4^i}$. If, $p_i$
is the upper-right corner of $r_i$, then $\area(r_i) \leq
\frac{1}{4^i}$. Therefore, in all cases, we have $\area(r_i) \leq
\frac{1}{2^i} - \frac{1}{4^i}$. The total area of an anchored rectangle packing
is bounded from above as follows:
\begin{equation*}
A(P) \leq \sum_{i=1}^{n} \left(\frac{1}{2^i} - \frac{1}{4^i} \right)
=\left (1 - \frac{1}{2^n}\right ) - \frac{1}{3} \left (1 - \frac{1}{4^n} \right )
= \frac{2}{3} - \frac{1}{2^n} + \frac{1}{3 \cdot 4^n} \leq \frac{2}{3}.
\tag*{\qedhere}
\end{equation*}
\end{proof}

\begin{figure}[htbp]
\centerline{\epsfxsize=3.3in \epsffile{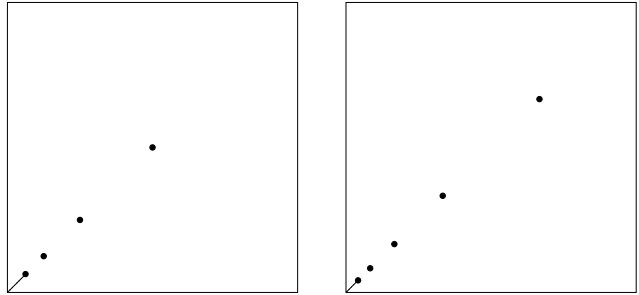}}
\caption{Left: $2/3$ upper bound construction for anchored rectangles.
Right: $7/27$ upper bound construction for anchored squares.}
\label{f3}
\end{figure}

\begin{proposition} \label{prop:7/27}
For every $n\in \NN$, there exists a point set $P_n$ such that every
anchored square packing for $P_n$ has area at most $\frac{7}{27}$.
Consequently, $A_\textup{sq}(n) \leq \frac{7}{27}$.
\end{proposition}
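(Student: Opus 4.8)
The plan is to mirror the proof of Proposition~\ref{prop:2/3}: exhibit an explicit point set $P_n$ (the configuration in Fig.~\ref{f3}, right) and bound the area contributed by each anchored square using only two local facts — the distance from its anchor to $\partial[0,1]^2$ and the \emph{emptiness} requirement that the square contain no other point of $P_n$ in its interior. The essential new difficulty compared with rectangles is that a square is determined by a single side length, so placing points along the main diagonal alone (as in Proposition~\ref{prop:2/3}) does not suffice: the two \emph{off-diagonal} orientations of an anchored square, reaching into the half-planes above and below the line $y=x$, remain unobstructed by other anchors and are limited only by the boundary. In particular the diagonal points $(2^{-i},2^{-i})$ give only the weaker bound $\sum_i 4^{-i}=\tfrac13$, so a genuinely two-dimensional construction is required to push the guarantee down to $7/27$.

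I would therefore build $P_n$ recursively, placing points so that around every anchor there are neighbouring points in all four quadrants, capping the side length of an anchored empty square in each of the four orientations. Concretely, I expect the construction to consist of a constant-size gadget in a corner square together with a copy of itself scaled by a fixed ratio $\rho<1$ and recursed into the complementary region, the anchors at one scale blocking the large orientations of the anchors at the next. For each anchor $p_i$ I would then prove $\area(r_i)\le s_i^2$, where $s_i$ is the smallest of the four quadrant gaps (distance to the nearest blocking anchor, or to $\partial[0,1]^2$), exactly as the three non-upper-right cases were handled in Proposition~\ref{prop:2/3}.

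Summing these per-anchor bounds should yield a geometric series in $\rho$ whose value I would tune, through the choice of gadget and ratio, to equal $7/27$. One subtlety is that bounding each square in isolation may overcount, because the per-orientation maxima at neighbouring scales cannot be attained simultaneously by interior-disjoint squares. If the naive per-anchor sum overshoots $7/27$, I would strengthen the estimate by charging squares against a partition of $[0,1]^2$ into diagonal cells and the two off-diagonal triangles and bounding the packed area region by region — as in the elementary two-point analysis, where the two anchors of $\{(\tfrac13,\tfrac13),(\tfrac23,\tfrac23)\}$ can jointly claim at most one ``middle'' cell, giving total area $\tfrac29$. The bound $7/27$ is then the supremum of these regionwise estimates over all admissible packings.

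The main obstacle is the four-orientation, cross-scale bookkeeping: unlike the rectangle case, where each anchor is charged once against a single neighbour, here a square anchored at a fine-scale point can protrude into a coarser region, so the blocking relation is not purely local. Verifying that the chosen gadget and ratio simultaneously (i) obstruct all four orientations at every scale and (ii) force the extremal interior-disjoint packing to sum to exactly $7/27$ — so that the bound is tight for $P_n$ and the construction cannot be improved within this family — is the delicate part; the remaining steps are routine geometric-series computations.
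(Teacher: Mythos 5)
Your plan founders on its central premise. You assert that diagonal point sets cannot do better than $1/3$ and that ``a genuinely two-dimensional construction is required,'' but the paper's $P_n$ is precisely a one-dimensional diagonal construction: $p_i=(\frac43\cdot 2^{-i},\frac43\cdot 2^{-i})$, i.e., the set from Proposition~\ref{prop:2/3} rescaled by the factor $4/3$ (Fig.~\ref{f3}, right). Your computation that $(2^{-i},2^{-i})$ yields only $\sum_i 4^{-i}=\frac13$ is correct, but your diagnosis of the cause is not. For a diagonal point $(x_i,x_i)$ the two off-diagonal orientations are in fact tightly capped, not ``unobstructed'': an upper-left- or lower-right-anchored square has side at most $x_i$, its distance to the bottom or left side of $[0,1]^2$, which is the same cap as for the other two orientations. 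The inefficiency of the plain diagonal sits entirely at $p_1=(\frac12,\frac12)$, whose cap of $\frac12$ (area $\frac14$) is too generous. Rescaling moves $p_1$ to $(\frac23,\frac23)$ and $p_2$ to $(\frac13,\frac13)$: every orientation at $p_1$ is now capped at side $\frac13$ (by the top/right boundary in two orientations, by the coordinates in the mixed ones, and by $p_2$ blocking the upper-right-anchored square), and likewise at $p_2$, so each contributes at most $\frac19$; for $i\geq 3$ every orientation at $p_i$ is capped at $x_i=\frac{4}{3\cdot 2^i}$, contributing at most $\frac{16}{9\cdot 4^i}$. Summing gives at most $\frac29+\frac19\sum_{j\geq 1}4^{-j}=\frac29+\frac{1}{27}=\frac{7}{27}$, with each square bounded in isolation exactly as in Proposition~\ref{prop:2/3} --- no cross-scale charging, no region decomposition, no interaction argument is needed.

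Beyond this misdirection, the proposal is not a proof: the gadget, the ratio $\rho$, the per-anchor caps, and the regionwise charging scheme are all left unspecified, and the step you yourself flag as delicate --- verifying that all four orientations are obstructed at every scale and that the resulting series equals $7/27$ --- is the entire content of the statement. You also impose an unnecessary burden in requiring the bound to be \emph{tight} for $P_n$; the proposition only needs an upper bound, and the paper's estimate is in fact strict. Finally, the cross-scale protrusion problem you identify (a fine-scale square reaching into coarse-scale territory) would genuinely bite in the two-dimensional recursion you sketch, which is exactly why the right move is the paper's: choose the scale so that boundary distances and nearest-neighbor gaps coincide ($1-x_1=x_1-x_2=x_2$ forces $x_1=\frac23$), making every bound local and orientation-by-orientation.
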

\begin{proof}
Consider the point set $P = \{p_1,\ldots ,p_n\}$, where $p_i =
(\frac43\cdot 2^{-i},\frac43\cdot 2^{-i})$, for $i=1,\ldots ,n$;
see Fig.~\ref{f3}\,(right).
Let $S = \{s_1,\ldots ,s_n\}$ be an anchored square packing for $P$. Since
$p_1=(\frac23,\frac23)$ and $p_2=(\frac13,\frac13)$, any square
anchored at $p_1$ or at $p_2$ has side-length at most $\frac{1}{3}$,
hence area at most $\frac{1}{9}$. For $i=3,\ldots , n$, the
$x$-coordinate of $p_i$, $x_i$, is halfway between 0 and $x_{i-1}$,
and $y_i$ is halfway between 0 and $y_{i-1}$. Hence any square
anchored at $p_i$ has side-length at most $x_i=y_i=\frac{4}{3\cdot
  2^i}$, hence area at most $\frac{16}{9\cdot 4^i}$. The total area of
an anchored square packing is bounded from above as follows:
\begin{equation*}
A_\textup{sq}(P) \leq \frac29+\frac19\sum_{j=1}^{n-1}\frac{1}{4^j}
< \frac29+\frac19\sum_{j=1}^{\infty}\frac{1}{4^j}
= \frac29 + \frac19 \cdot \frac13 = \frac{7}{27}.
\tag*{\qedhere}
\end{equation*}
\end{proof}

\paragraph{Remark.}
Stronger upper bounds hold for small $n$, \eg, $n\in \{1,2\}$.
Specifically, $A(1)=A_\textup{sq}(1)=1/4$
attained for the center $(\frac12,\frac12)\in [0,1]^2$, and $A(2)=4/9$ and
$A_\textup{sq}(2)=2/9$ attained for $P=\{(\frac13,\frac13),\frac23,\frac23)\}$.

\section{Lower Bound for Anchored Rectangle Packings} \label{sec:lower}

In this section, we prove that for every set $P$ of $n$ points in $[0,1]^2$,
we have $A(P)\geq \frac{7n-2}{12(n+1)}$. Our proof is constructive;
we give a divide \& conquer algorithm that partitions $U$ into horizontal
strips and finds $n$ anchored rectangles of total area bounded
from below as required. We start with a weaker lower bound, of about
$1/2$, and then sharpen the argument to establish the main result of
this section, a lower bound of about $7/12$.

\begin{proposition}\label{Pn/2(n+1)}
For every set of $n$ points in the unit square $[0,1]^2$,
an anchored rectangle packing of area at least $\frac{n}{2(n+1)}$
can be computed in $O(n\log n)$ time.
\end{proposition}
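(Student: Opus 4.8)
The plan is to give a direct, non-recursive construction that confines each anchored rectangle to a horizontal slab, so that interior-disjointness and emptiness are automatic. First I would sort the points by $y$-coordinate as $y_{(1)}\le \cdots \le y_{(n)}$ (breaking ties arbitrarily) and set $y_{(0)}=0$, $y_{(n+1)}=1$. This cuts $[0,1]^2$ into $n+1$ horizontal slabs $\sigma_j=[0,1]\times[y_{(j)},y_{(j+1)}]$, $j=0,\ldots,n$, whose heights $h_j=y_{(j+1)}-y_{(j)}$ are nonnegative and sum to $1$. The interior of each slab is free of points of $P$, so any rectangle placed inside a single slab is empty, and rectangles lying in distinct slabs are interior-disjoint.

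The key observation is that the point of rank $i$ (the one lying on the shared edge $y=y_{(i)}$) can be anchored either in the slab $\sigma_{i-1}$ directly below it or in the slab $\sigma_i$ directly above it. I would therefore match points to slabs by choosing a cut index $j^\star$ and assigning each point of rank $i\le j^\star$ to the slab below it ($\sigma_{i-1}$) and each point of rank $i>j^\star$ to the slab above it ($\sigma_i$). The low ranks use exactly $\sigma_0,\ldots,\sigma_{j^\star-1}$ and the high ranks use exactly $\sigma_{j^\star+1},\ldots,\sigma_n$, so this is a perfect matching of the $n$ points onto the $n$ slabs $\{\sigma_0,\ldots,\sigma_n\}\setminus\{\sigma_{j^\star}\}$, leaving only $\sigma_{j^\star}$ unused. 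Since $j^\star$ may be any index in $\{0,\ldots,n\}$, I would choose $\sigma_{j^\star}$ to be a \emph{thinnest} slab. Within its assigned slab $\sigma_j$, a point with $x$-coordinate $x$ sits on the top or bottom edge, and the vertical line through it splits $\sigma_j$ into two rectangles of areas $x\,h_j$ and $(1-x)\,h_j$; I assign it the wider one, which has the point as a corner and area at least $\frac12 h_j$.

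Summing over the matched slabs and using that the thinnest of $n+1$ slabs has height at most $\frac{1}{n+1}$ by averaging, the total area is
\[
\area \ \ge\ \frac12\sum_{j\neq j^\star} h_j \ =\ \frac12\bigl(1-h_{j^\star}\bigr)\ \ge\ \frac12\left(1-\frac{1}{n+1}\right)\ =\ \frac{n}{2(n+1)},
\]
and the running time is dominated by the sort, hence $O(n\log n)$. I expect the only real point to check is the combinatorial claim that leaving the thinnest slab empty still yields a valid perfect matching; this is precisely the reason the cut index $j^\star$ is unconstrained, and it is what lets us discard the smallest slab rather than an arbitrary one. The remaining ingredients — emptiness of each slab, interior-disjointness across distinct slabs, the fact that the wider half captures at least $\frac12 h_j$, and the handling of coincident $y$-coordinates (which only produce zero-height slabs and can only help) — are routine.
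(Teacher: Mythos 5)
Your proposal is correct and takes essentially the same approach as the paper's proof: horizontal strips through all points sorted by $y$-coordinate, discarding a thinnest strip, matching each remaining strip to the point on its bottom or top edge according to whether it lies above or below the discarded strip, and keeping the wider of the two rectangles cut off by the vertical line through the anchor. Your averaging bound $h_{j^\star}\le \frac{1}{n+1}$, the resulting total area $\frac12\bigl(1-\frac{1}{n+1}\bigr)=\frac{n}{2(n+1)}$, and the $O(n\log n)$ sorting bound all coincide with the paper's argument.
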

\begin{proof}
Let $P = \{p_1,\ldots ,p_n\}$ be a set of points in the unit square
$[0,1]^2$ sorted by their $y$-coordinates. Draw a horizontal line through each point in $P$;
see Fig.~\ref{fig:horizontalstrips}\,(left). These lines divide $[0,1]^2$ into $n+1$
horizontal strips.  A strip can have zero width if two points have the
same $x$-coordinate.  We leave a narrowest strip empty and assign the
remaining strips to the $n$ points such that each rectangle above
(resp., below) the chosen narrowest strip is assigned to a point of
$P$ on its bottom (resp., top) edge. For each point divide the
corresponding strip into two rectangles with a vertical line through
the point. Assign the larger of the two rectangles to the point.

\begin{figure}[hptb]
\centering
\includegraphics[scale=1]{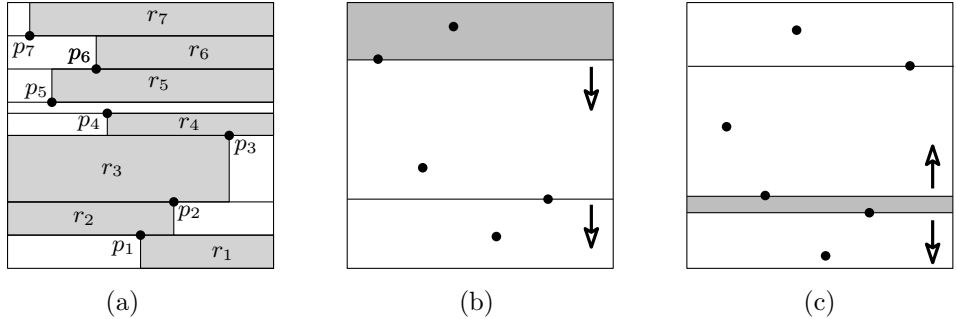}
\caption{Left: Horizontal strips with anchored rectangles for 7 points.
Middle: an example of the partition for odd $n$ (here $n=5$).
Right: an example of the partition for even $n$ (here $n=6$).
The strip that is discarded is shaded in the drawing.}
\label{fig:horizontalstrips}
\end{figure}

The area of narrowest strip is at most $\frac{1}{n+1}$. The rectangle
in each of the remaining $n$ strips covers at least $\frac{1}{2}$ of
the strip. This yields a total area of at least $\frac{n}{2(n+1)}$.
\end{proof}

For a stronger lower bound, our key observation is that for \emph{two}
points in a horizontal strip, one can always pack two anchored
rectangles in the strip that cover strictly more than half the area of
the strip. Specifically, we prove the following easy-looking statement
with $2$ points in a rectangle (however, we did not find an easy proof!).
The proof of Lemma~\ref{lem:rectangle} is deferred to Appendix~\ref{sec:rectangle}.

\begin{lemma} \label{lem:rectangle}
Let $P=\{p_1,p_2\}$ be two points in an axis-parallel rectangle $R$
such that $p_1$ lies on the bottom side of $R$.
Then there exist two empty rectangles in $R$
anchored at the two points of total area at least $\frac{7}{12} \, \area(R)$,
and this bound is the best possible.
\end{lemma}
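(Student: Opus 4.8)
The plan is to first reduce the optimization to a one-dimensional one by a separation argument, and then to minimize over the positions of the two points. By affine invariance of area ratios we may assume $R=[0,1]^2$; write $p_1=(a,0)$ and $p_2=(b,c)$, and use a reflection in the line $x=\tfrac12$ to assume $a\le b$. Since $p_1$ lies on the bottom edge, it can only be a \emph{bottom} corner of its rectangle, so the rectangle anchored at $p_1$ always has the full height of $R$ available to it.

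The structural key is that two interior-disjoint axis-parallel rectangles can always be separated by an axis-parallel line: writing $r_i=I_i\times J_i$, interior-disjointness forces the interiors of $I_1,I_2$ or of $J_1,J_2$ to be disjoint. Consequently the maximum total area is attained by a \emph{guillotine} packing, namely the better of a single vertical cut $x=t$ and a single horizontal cut $y=t$; in each case the two rectangles are the largest anchored rectangles of $p_1$ and $p_2$ inside the two resulting slabs. Writing $m_x:=\max\{x,1-x\}$, a vertical cut (with $p_1$ on the left and $p_2$ on the right, which forces $t\in[a,b]$) has total area $\max\{a,t-a\}+m_c\max\{b-t,1-b\}$, and a horizontal cut (with $p_1$ below and $p_2$ above, so $t\in[0,c]$) has total area $m_a\,t+m_b\max\{c-t,1-c\}$. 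Both are convex in $t$, so each is maximized at an endpoint. Hence the optimum equals the maximum of the four explicit quantities obtained at $t\in\{a,b\}$ and $t\in\{0,c\}$; in particular it dominates the area $f:=\max\{a,b-a\}+m_c(1-b)$ of the vertical cut through $p_2$ and the area $h:=m_a c+m_b(1-c)$ of the horizontal cut through $p_2$.

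For the lower bound I would show $\max\{f,h\}\ge\tfrac{7}{12}$. Here the essential point is the coupling through $c$: one cannot simply set $c=\tfrac12$, because the value of $c$ that shrinks $h$ (pushing it toward $\min\{m_a,m_b\}$) simultaneously enlarges $f$ through the factor $m_c$. I would fix $a,b$ and minimize over $c$, using that $f$ is convex in $c$ with its minimum at $c=\tfrac12$ while $h$ is affine in $c$; the minimum of $\max\{f,h\}$ over $c$ is therefore attained either at $c=\tfrac12$ or at a point where $f=h$, and in both cases a short case distinction on the orders of $a$ versus $b-a$ and of $a,b$ versus $\tfrac12$ yields $\max\{f,h\}\ge\tfrac{7}{12}$. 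I expect this non-smooth joint optimization to be the main obstacle --- it is exactly where the ``easy-looking'' statement resists an easy proof, since several of the piecewise-linear branches must be checked and the worst configuration turns out to be an interior balance point rather than a corner case.

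For tightness I would exhibit $p_1=(\tfrac13,0)$ and $p_2=(\tfrac12,\tfrac12)$. A direct check shows that every vertical cut $t\in[\tfrac13,\tfrac12]$ gives total area exactly $\tfrac{7}{12}$ and that the best horizontal cut (at $t=\tfrac12$) also gives $\tfrac{7}{12}$; by the separation lemma no packing can exceed the best guillotine packing, so the maximum total area for this configuration is exactly $\tfrac{7}{12}\,\area(R)$, proving the bound cannot be improved.
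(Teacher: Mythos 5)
Your structural reduction is correct and is a genuinely nicer framework than the paper's: the projection argument (interiors of $I_1,I_2$ or of $J_1,J_2$ disjoint) does show every packing is guillotine-separable, the two cut functions are convex in $t$, and so for $a<b$, $c>0$ the optimum equals $\max\{V(a),V(b),H(0),H(c)\}$ where $V(t)=\max\{a,t-a\}+m_c\max\{b-t,1-b\}$ and $H(t)=m_a t+m_b\max\{c-t,1-c\}$. In particular your tightness argument at $p_1=(\tfrac13,0)$, $p_2=(\tfrac12,\tfrac12)$ is complete and arguably cleaner than the paper's, which establishes the upper bound only by a ``verify'' remark. The fatal problem is in the lower-bound half: the inequality $\max\{f,h\}\ge\tfrac{7}{12}$ that you commit to proving is simply false. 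Take $(a,b,c)=(0.45,\,0.9,\,0.95)$: then $f=\max\{0.45,0.45\}+0.95\cdot 0.1=0.545$ and $h=0.55\cdot 0.95+0.9\cdot 0.05=0.5675$, both strictly below $\tfrac{7}{12}\approx 0.5833$ (with $c=1$ both drop to $0.55$), and this is an open region of parameters, not a degenerate boundary case. The reason is geometric: when $p_2$ sits near the upper-right corner, both cuts \emph{through} $p_2$ leave $p_2$ only a sliver of width $1-b$ or height $1-c$, while $\max\{a,b-a\}$ and $m_a$ stay near $\tfrac12$; the good packing there is the vertical cut through $p_1$, namely $V(a)=a+m_c\max\{b-a,1-b\}=0.8775$ in this example, which you derived but then dropped from the optimization.

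So the proof cannot be completed as proposed; you must carry $V(a)$ along (one can check $H(0)=m_b m_c\le\max\{f,V(a)\}$ always, so the correct target is $\max\{V(a),f,h\}\ge\tfrac{7}{12}$), and then your two-function ``balance point in $c$'' scheme no longer applies: minimizing the max of two convex-in-$c$ functions and one affine one, over all of $0\le a\le b\le 1$, forces a genuine piecewise case analysis in all three variables. That is in effect what the paper does in Appendix~A: its eight candidate packings $r_{124},r_{134},r_{145},r_{146},r_{235},r_{256},r_{1256},r_{356}$ are exactly the endpoint guillotine packings of your framework (\eg, $a_{1256}=(1-x_1)y_2+x_2(1-y_2)$ is your $h$, and $a_{146}=x_1+(1-x_2)y_2$ is a branch of your $V(a)$, which the paper invokes precisely in the regime of the counterexample above, Cases 3--6 with $x_1$ large), together with the identity $a_{1256}+a_{134}=1+a_1$ and a six-case split on $x_1$. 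Your separation lemma would streamline the paper's enumeration of the eight packings and its tightness verification, but the core multi-case optimization it was meant to avoid reappears intact.
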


In order to partition the unit square into strips that contain two
points, one on the boundary, we need to use parity arguments.
Let $P$ be a set of $n$ points in $[0,1]^2$ with $y$-coordinates
$0 \leq y_1 \leq y_2 \leq \cdots \leq y_n \leq 1$.
Set $y_0=0$ and $y_{n+1}=1$.
For $i=1,\ldots,n+1$, put $h_i=y_i -y_{i-1}$, namely $h_i$ is the
$i$th vertical gap. Obviously, we have
\begin{equation} \label{eq:1}
h_i \geq 0 \text{ for all } i=1,\ldots,n+1, \text{ and } \sum_{i=1}^{n+1} h_i =1.
\end{equation}

Parity considerations are handled by the following lemma.
\begin{lemma} \label{lem:partition}
{\rm (i)} If $n$ is odd, then at least one of the following $(n+1)/2$
inequalities is satisfied: 
\begin{equation} \label{eq:2}
h_i + h_{i+1} \leq \frac{2}{n+1}, \text{ for (odd) } i=1,3,\ldots,n-2,n.
\end{equation}

{\rm (ii)} If $n$ is even, then at least one of the following
$n+2$ inequalities is satisfied:
\begin{equation}\label{eq:3}
h_1 \leq \frac{2}{n+2}, \hspace{6mm}
h_{n+1} \leq \frac{2}{n+2},\hspace{6mm}
h_i + h_{i+1} \leq \frac{2}{n+2}, \text{ for } i=1,2,\ldots,n.
\end{equation}
\later{
\begin{align} \label{eq:3}
h_1 &\leq \frac{2}{n+2}, \\ \nonumber
h_{n+1} &\leq \frac{2}{n+2},\\ \nonumber
h_i + h_{i+1} &\leq \frac{2}{n+2}, \text{ for } i=1,2,\ldots,n.
\end{align}
} 
\end{lemma}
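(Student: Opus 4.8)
The plan is to prove both parts by a single elementary averaging (pigeonhole) argument that exploits~\eqref{eq:1}, namely $\sum_{i=1}^{n+1} h_i = 1$. The common principle is this: if we can exhibit a \emph{family} of the listed left-hand expressions that partitions the index set $\{1,\ldots,n+1\}$ (each gap $h_j$ appearing in exactly one expression of the family), then the values of that family sum to exactly $1$ by~\eqref{eq:1}, and hence the smallest expression in the family is at most the average, i.e.\ at most $1$ divided by the number of expressions. Choosing the family so that this average equals the target bound will finish the proof. Equivalently, one may argue by contradiction: assume every listed inequality fails (so each expression strictly exceeds the bound) and sum a covering subfamily to obtain a total strictly greater than $1$, contradicting~\eqref{eq:1}.

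For part~(i), since $n$ is odd, $n+1$ is even, so the pairs $(h_i,h_{i+1})$ with odd $i=1,3,\ldots,n$ --- namely $(h_1,h_2),(h_3,h_4),\ldots,(h_n,h_{n+1})$ --- form a perfect matching of $\{1,\ldots,n+1\}$ into $(n+1)/2$ disjoint pairs. Their values sum to $\sum_{i=1}^{n+1} h_i = 1$, so the minimum pair-sum is at most $\frac{1}{(n+1)/2}=\frac{2}{n+1}$, which is exactly the conclusion~\eqref{eq:2}.

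For part~(ii), $n$ is even, so $n+1$ is odd and the gaps cannot be matched into consecutive pairs alone; this parity is precisely why the singleton inequalities $h_1\leq\frac{2}{n+2}$ and $h_{n+1}\leq\frac{2}{n+2}$ are included in the statement. I would take the family consisting of the $n/2$ pairs $(h_1,h_2),(h_3,h_4),\ldots,(h_{n-1},h_n)$ together with the singleton $h_{n+1}$. These $\frac{n}{2}+1=\frac{n+2}{2}$ expressions are all among those listed in~\eqref{eq:3}, and they partition $\{1,\ldots,n+1\}$, so their values sum to $1$. Hence the smallest of them is at most $\frac{1}{(n+2)/2}=\frac{2}{n+2}$, establishing one of the inequalities in~\eqref{eq:3}. (Symmetrically, one could instead use the singleton $h_1$ together with the even-indexed pairs $(h_2,h_3),\ldots,(h_n,h_{n+1})$.)

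I do not anticipate a genuine obstacle; the argument is pure counting. The only point requiring care is the bookkeeping in part~(ii): among the $n+2$ listed inequalities one must single out a subfamily that covers each index exactly once and has cardinality precisely $(n+2)/2$, so that the average matches the stated bound $\frac{2}{n+2}$. The parity of $n+1$ forces such a subfamily to mix one singleton with consecutive pairs, which explains why the singleton inequalities appear in the even case but are both absent and unnecessary in the odd case.
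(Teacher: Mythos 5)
Your proof is correct and uses essentially the same pigeonhole/summation argument as the paper: part (i) coincides with the paper's proof, and in part (ii) the paper simply sums all $n+2$ negated inequalities (each gap is then counted exactly twice, yielding $2\sum_{i=1}^{n+1} h_i = 2 > (n+2)\cdot\frac{2}{n+2}$, a contradiction), whereas you average over a partition subfamily of $(n+2)/2$ of the listed expressions. Your bookkeeping in (ii) is equally valid and in fact gives marginally more — one of your chosen $(n+2)/2$ inequalities must already hold, not merely one of all $n+2$.
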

\begin{proof}
Assume first that $n$ is odd. Put $a=\frac{2}{n+1}$ and assume that
none of the inequalities in~\eqref{eq:2} is satisfied. Summation
would yield
$ \sum_{i=1}^{n+1} h_i > \frac{n+1}{2} \, a =1, $
which is an obvious contradiction.

Assume now that $n$ is even. Put $a=\frac{2}{n+2}$ and assume that
none of the inequalities in~\eqref{eq:3} is satisfied. Summation would yield
$ 2\sum_{i=1}^{n+1} h_i > (n+2)a =2, $
an obvious contradiction.
\end{proof}

We can now prove the main result of this section.

\begin{theorem} \label{thm:7/12}
For every set of $n$ points in the unit square $[0,1]^2$,
an anchored rectangle packing of area
at least $\frac{7(n-1)}{12(n+1)}$ when $n$ is odd and
$\frac{7n}{12(n+2)}$ when $n$ is even
can be computed in $O(n\log n)$ time.
\end{theorem}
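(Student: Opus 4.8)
The plan is to combine the parity lemma (Lemma~\ref{lem:partition}) with the two-point packing result (Lemma~\ref{lem:rectangle}) in a divide-and-conquer scheme, pairing up the points into horizontal strips so that each strip contains exactly two points and can be packed at a $\frac{7}{12}$ density. First I would treat the odd case. By Lemma~\ref{lem:partition}(i), there is an odd index $i$ with $h_i + h_{i+1} \leq \frac{2}{n+1}$; this identifies a thin pair of adjacent gaps straddling the point $p_i$. I would discard the horizontal strip of total height $h_i + h_{i+1}$ (the two strips bounded by $y_{i-1}, y_i, y_{i+1}$), removing $p_i$ from consideration, and partition the remaining $[0,1]^2$ into horizontal strips each containing exactly two of the remaining $n-1$ points, one on the bottom edge. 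The point is that after deleting the single point $p_i$, the remaining $n-1$ points split naturally (by consecutive $y$-coordinates) into $(n-1)/2$ pairs, each assigned a strip bounded by horizontal lines through consecutive retained points.

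Within each such two-point strip, I would apply Lemma~\ref{lem:rectangle}, which guarantees empty anchored rectangles covering at least $\frac{7}{12}$ of the strip's area (the lemma requires one point on the bottom side, which is exactly how the strips are arranged). Summing over all strips, the packed area is at least $\frac{7}{12}$ times the total area of the retained strips, which is $1 - (h_i + h_{i+1}) \geq 1 - \frac{2}{n+1} = \frac{n-1}{n+1}$. This yields area at least $\frac{7}{12}\cdot\frac{n-1}{n+1} = \frac{7(n-1)}{12(n+1)}$, as claimed.

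For the even case I would argue similarly using Lemma~\ref{lem:partition}(ii), but now there are three flavors of satisfied inequality. If $h_i + h_{i+1} \leq \frac{2}{n+2}$ for some interior index, discarding that pair of strips removes point $p_i$ and leaves an odd number $n-1$ of points; these do not split cleanly into pairs, so I would instead discard one additional narrowest remaining strip (as in Proposition~\ref{Pn/2(n+1)}) to reduce to an even count, or more carefully pair $n-2$ of them and handle one leftover point with the simple half-strip bound. The boundary cases $h_1 \leq \frac{2}{n+2}$ or $h_{n+1} \leq \frac{2}{n+2}$ are cleaner: discarding the bottom (resp.\ top) strip of height $h_1$ (resp.\ $h_{n+1}$) removes no point, leaving all $n$ points, which pair into $n/2$ two-point strips over retained area at least $1 - \frac{2}{n+2} = \frac{n}{n+2}$, giving $\frac{7}{12}\cdot\frac{n}{n+2} = \frac{7n}{12(n+2)}$.

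The main obstacle I anticipate is the even interior case, where discarding a single point leaves parity wrong for clean pairing; the bookkeeping to extract the correct bound $\frac{7n}{12(n+2)}$ without losing a factor requires care in choosing which strip to discard and how to assign the leftover point. The running time is straightforward: sorting the points by $y$-coordinate takes $O(n\log n)$, scanning the gaps to find a satisfied inequality is $O(n)$, and each strip's two-rectangle packing is computed in $O(1)$ per the constructive content of Lemma~\ref{lem:rectangle}, for $O(n\log n)$ total.
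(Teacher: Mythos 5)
Your odd case and the even boundary cases ($h_1 \leq \frac{2}{n+2}$ or $h_{n+1} \leq \frac{2}{n+2}$) coincide with the paper's argument and are correct. The genuine gap is the even interior case, which you flagged but did not resolve, and neither of your two fallback fixes can deliver the claimed bound $\frac{7n}{12(n+2)} = \frac{7}{12}\bigl(1-\frac{2}{n+2}\bigr)$. That identity shows the total discarded area budget is exactly $\frac{2}{n+2}$ once every retained strip is covered at density $\frac{7}{12}$; discarding the full two-gap slab of height $h_i+h_{i+1}$ already exhausts it. Your first fix (discard an additional narrowest strip to restore parity) throws away extra area: among the parity-correct candidate gaps the narrowest is only bounded by roughly $2/n$, so you obtain at best $\frac{7}{12}\bigl(1-\frac{2}{n+2}-\Theta(\frac{1}{n})\bigr)$, strictly below the target. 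Your second fix (pair $n-2$ points and give the leftover point the half-strip bound) covers one strip at density $\frac{1}{2} < \frac{7}{12}$, and since nothing controls the height of that strip, the loss is not negligible. Note also that there is no three-point analogue of Lemma~\ref{lem:rectangle} to absorb a stray point into a neighboring strip --- extending that lemma from two to three points is explicitly posed as an open problem in the paper's conclusion.

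The missing idea is an \emph{asymmetric} discard: when $h_i + h_{i+1} \leq \frac{2}{n+2}$, do not remove the whole two-gap slab (which strands $p_i$ and breaks parity); instead leave empty only \emph{one} of the two gaps, chosen by parity --- the strip of height $h_i$ (between $y=y_{i-1}$ and $y=y_i$) if $i$ is odd, and the strip of height $h_{i+1}$ (between $y=y_i$ and $y=y_{i+1}$) if $i$ is even. Since $h_i \leq h_i+h_{i+1} \leq \frac{2}{n+2}$ and likewise for $h_{i+1}$, the single discarded strip still respects the budget, and crucially no point is removed: $p_i$ lies on the boundary of the empty strip, hence on the bottom (or top) edge of the adjacent retained strip. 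With this choice the point counts below and above the empty strip are both even (for $i$ odd there are $i-1$ points below and $n-i+1$ above, both even because $n$ is even), so all $n$ points pair into $n/2$ strips, each containing two points with one on its bottom or top edge, and Lemma~\ref{lem:rectangle} applies to every strip, giving exactly $\frac{7}{12}\bigl(1-\frac{2}{n+2}\bigr) = \frac{7n}{12(n+2)}$. Your running-time analysis is fine and matches the paper.
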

\begin{proof}
Let $P = \{p_1,\ldots ,p_n\}$ be a set of points in the unit square $[0,1]^2$
sorted by their $y$-coordinates with the notation introduced above.
By Lemma~\ref{lem:partition}, we find a horizontal strip
corresponding to one of the inequalities that is satisfied.

Assume first that $n$ is odd. Draw a horizontal line through each
point in $p_j\in P$, for $j$ even, as shown in Fig.~\ref{fig:horizontalstrips}. These
lines divide $[0,1]^2$ into $\frac{n+1}{2}$ rectangles (horizontal
strips).
Suppose now that the satisfied inequality is $h_i + h_{i+1} \leq
\frac{2}{n+2}$ for some odd $i$.
Then we leave a rectangle between $y=y_{i-1}$ and $y=y_{i+1}$ empty,
\ie, $r_i$ is a rectangle of area $0$.
For the remaining rectangles, we assign two consecutive points of $P$
such that each strip above $y=y_{i+1}$ (resp., below $y=y_{i-1}$) is
assigned a point of $P$ on its bottom (resp., top) edge.
Within each rectangle $R$, we can choose two anchored rectangles of
total area at least $\frac{7}{12} \, \area(R)$ by Lemma~\ref{lem:rectangle}.
By Lemma~\ref{lem:partition}(i), the area of the narrowest strip is at
most $\frac{2}{n+1}$. Consequently, the area of the anchored
rectangles is at least $\frac{7}{12}(1-\frac{2}{n+1})
=\frac{7(n-1)}{12(n+1)}$.

Assume now that $n$ is even. If the selected horizontal strip
corresponds to the inequality $h_1 \leq \frac{2}{n+2}$, then divide
the unit square along the lines $y=y_i$, where $i$ is odd. We leave
the strip of height $h_1$ empty, and assign pairs of points to all
remaining strips such that one of the two points lies on the top edge
of the strip. We proceed analogously if the inequality $h_{n+1} \leq
\frac{2}{n+2}$ is satisfied.
Suppose now that the satisfied inequality is $h_i + h_{i+1} \leq \frac{2}{n+2}$.
If $i$ is odd, we leave the strip of height $h_i\leq \frac{2}{n+2}$
(between $y=y_{i-1}$ and $y=y_i$) empty;
if $i$ is even, we leave the strip of height $h_{i+1}\leq
\frac{2}{n+2}$ (between $y=y_i$ and $y_{i+1}$) empty. Above and below
the empty strip, we can form a total of $n/2$ strips, each containing
two points of $P$, with one of the two points lying on the bottom or
the top edge of the strip.
By Lemma~\ref{lem:partition}(i), the area of the empty strip is at most
$\frac{2}{n+2}$. Consequently, the area of the anchored rectangles is
at least $\frac{7}{12}(1-\frac{2}{n+2}) =\frac{7n}{12(n+2)}$, as claimed.
\end{proof}

\section{Lower Bound for Anchored Square Packings} \label{sec:lower:sq}

Given a set $P \subset U=[0,1]^2$ of $n$ points, we show there is an anchored
square packing of large total area. The proof we present is constructive;
we give a recursive partitioning algorithm (as an inductive argument)
based on a quadtree subdivision of $U$ that finds $n$ anchored squares
of total area at least $5/32$. We need the following easy fact:
\begin{observation}\label{obs:adjacent1}
Let $u,v \subseteq U$ be two congruent axis-aligned interior-disjoint squares
sharing a common edge such that
$u \cap P \neq \emptyset$ and $\inter(v)\cap P = \emptyset$.
Then $u \cup v$ contains an anchored empty square whose area is at least $\area(u)/4$.
\end{observation}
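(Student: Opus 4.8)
The plan is to normalize the configuration and then anchor a side-$\tfrac{s}{2}$ square at a carefully chosen point. By symmetry (a $90^\circ$ rotation handles a shared horizontal edge), I would assume the common edge is vertical with $u$ on the left, so that $u=[0,s]\times[0,s]$ and $v=[s,2s]\times[0,s]$, where $s$ is the common side length. Then $\area(u)=s^2$, and it suffices to exhibit an empty square of side $\tfrac{s}{2}$ anchored at a point of $P$.

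The key idea is to anchor at the \emph{rightmost} point of $P$ inside $u$. Let $p=(p_x,p_y)$ be a point of $u\cap P$ (nonempty by hypothesis) of maximum $x$-coordinate; note $p_x\le s$. I would then grow a square of side $\tfrac{s}{2}$ from $p$ into the positive-$x$ direction, choosing its vertical orientation to keep it inside the height-$s$ strip: if $p_y\le\tfrac{s}{2}$, take $Q=[p_x,p_x+\tfrac{s}{2}]\times[p_y,p_y+\tfrac{s}{2}]$ (so $p$ is the lower-left corner of $Q$); otherwise take $Q=[p_x,p_x+\tfrac{s}{2}]\times[p_y-\tfrac{s}{2},p_y]$ (so $p$ is the upper-left corner). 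In either case the horizontal extent satisfies $p_x+\tfrac{s}{2}\le\tfrac{3s}{2}\le 2s$, and the case split keeps the vertical extent inside $[0,s]$, so $Q\subseteq u\cup v=[0,2s]\times[0,s]$ and $\inter(Q)$ has $0<y<s$.

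It remains to verify emptiness, and this is where anchoring at the rightmost point pays off. Since $\inter(Q)\subseteq\{x>p_x\}$, I would split $\inter(Q)$ along the line $x=s$. Its part with $x\le s$ lies in $u$ and has $x>p_x$, hence contains no point of $P$ because $p$ is the rightmost point of $u\cap P$; its part with $x>s$ lies in $(s,2s)\times(0,s)=\inter(v)$, which is empty by hypothesis. Therefore $\inter(Q)\cap P=\emptyset$, and $Q$ is an empty square anchored at $p$ with $\area(Q)=\tfrac{s^2}{4}=\area(u)/4$, as required.

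The only real obstacle is this emptiness condition: a priori many other points of $P$ may clutter $u$. The whole argument dissolves it by opening the square entirely into the half-plane to the right of $p$, which is free of points of $P$ (using that $p$ is rightmost in $u$ together with $\inter(v)\cap P=\emptyset$). The up/down case distinction on $p_y$ is mere bookkeeping to keep $Q$ within the strip of height $s$ and contributes nothing essential.
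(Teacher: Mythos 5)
Your proof is correct and follows essentially the same route as the paper's: both pick the rightmost point $p$ of $P$ in $u$ and anchor a square of side $s/2$ at $p$ as its lower-left or upper-left corner according to whether $p$ lies in the lower or upper half of $u$, with emptiness following from the rightmost choice together with $\inter(v)\cap P=\emptyset$. Your write-up merely makes explicit the coordinate normalization and the split of $\inter(Q)$ along $x=s$ that the paper's brief argument leaves implicit.
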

\begin{proof}
Let $a$ denote the side-length of $u$ (or $v$).
Assume that $v$ lies right of $u$. Let $p\in P$ be the rightmost point in $u$.
If $p$ lies in the lower half-rectangle of $u$ then the square of side-length $a/2$
whose lower-left anchor is $p$ is empty and has area $a^2/4$.
Similarly, if $p$ lies in the higher half-rectangle of $u$ then the square
of side-length $a/2$ whose upper-left anchor is $p$ is empty and has area $a^2/4$.
\end{proof}

\begin{theorem}\label{thm:5/32}
For every set of $n$ points in $U=[0,1]^2$, where $n \geq 1$,
an anchored square packing of total area at least $5/32$
can be computed in $O(n \log{n})$ time.
\end{theorem}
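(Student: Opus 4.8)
The plan is to prove Theorem~\ref{thm:5/32} by a recursive quadtree subdivision of the unit square, using Observation~\ref{obs:adjacent1} as the workhorse for the base cases. The main idea is to set up an induction on a quadtree cell $Q$ of side-length $a$ containing some nonempty subset of points, and to show that we can always pack anchored squares inside $Q$ whose total area is at least $(5/32)\,\area(Q) = (5/32)a^2$. Applying this to $Q = U$ with side-length $1$ would give the claimed bound. At each level, I would split the current square $Q$ into its four congruent children (quadrants), each of side-length $a/2$ and area $a^2/4$, and analyze the packing based on how many of the four children contain points of $P$.

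\textbf{Case analysis by occupancy.} First I would handle the easy cases. If all four children are nonempty, recurse into each child: by induction each child contributes at least $(5/32)(a^2/4)$, for a total of $4 \cdot (5/32)(a^2/4) = (5/32)a^2$, which already meets the target. The interesting cases are when one or more children are empty, because then an empty child can be ``donated'' to a neighboring nonempty child to enlarge an anchored square via Observation~\ref{obs:adjacent1}. For instance, if exactly one child is empty, I would pair it (as the empty square $v$) with an adjacent nonempty child (as $u$) to extract one anchored square of area at least $\area(u)/4 = a^2/16$ from that pair, while recursing on the remaining nonempty children. The delicate bookkeeping is to choose the pairing so that the guaranteed area, summed over the donated pair plus the recursive contributions of the other nonempty children, still reaches $(5/32)a^2$. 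The constant $5/32$ presumably arises precisely as the worst configuration in this case analysis; I would expect the tight case to be one where only a single child is occupied, forcing reliance on Observation~\ref{obs:adjacent1} (which yields $a^2/16 = 2/32 \cdot a^2$ per pair) combined with a cleverer argument to push past $2/32$ up to $5/32$.

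\textbf{The main obstacle} will be the case where most children are empty --- in particular when only one child contains points --- since then we cannot rely on simply recursing into four nonempty quadrants, and the naive single application of Observation~\ref{obs:adjacent1} gives only $1/16 < 5/32$. The fix is presumably to combine the one nonempty child together with all three empty children into one large empty region, so that a point in the nonempty child can anchor a square spanning substantially more than a single quadrant. Concretely, if the occupied child shares edges with two empty children, a suitably placed point can anchor a square of side larger than $a/2$; quantifying the best guaranteed such square, and verifying it clears $(5/32)a^2$ together with a recursive call on the lone occupied child for its remaining points, is the crux of the argument. I would also need to make the induction bottom out correctly: the base case $n=1$ gives an anchored square of area $\area(Q)/4 = 1/4 \geq 5/32$ for $Q=U$, and more generally a single point in a cell anchors a square filling one quadrant of that cell's bounding region.

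\textbf{Finally,} for the running time I would note that the quadtree recursion performs $O(1)$ work of geometric casework per node and that the total number of relevant nodes is $O(n)$ (each point descends along one root-to-leaf path, and subdivision stops once a cell holds a single point), so with an initial $O(n\log n)$ sort of the coordinates the whole algorithm runs in $O(n\log n)$ time as claimed.
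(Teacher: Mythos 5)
Your overall architecture---quadtree induction on cells, Observation~\ref{obs:adjacent1} to exploit empty children, plain recursion when all four children are occupied, and an $O(n\log n)$ quadtree-based implementation---is exactly the paper's, and your base cases and running-time accounting are fine. But you have located the difficulty in the wrong case, and in the case where the difficulty actually lives, the bookkeeping you describe provably cannot reach $5/32$. The three-empty-children case that you call the crux is in fact the easy one: assuming (by symmetry) the occupied child is the lower-left quadrant, take a point $(a,b)$ there maximizing $a+b$; then the square $[a,a+\tfrac12]\times[b,b+\tfrac12]$ (scaled to the cell) is anchored, empty, and contained in the cell, giving area $\tfrac14\,\area(Q)\ge \tfrac{5}{32}\,\area(Q)$ outright, with no recursive contribution needed. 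This is precisely the paper's Case~3, and since $5/32<1/4$ it is never the bottleneck.

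The genuine obstruction is in your one-empty and two-empty cases. Run your own scheme with induction constant $c$: one empty child paired with an adjacent occupied child via Observation~\ref{obs:adjacent1}, plus recursion on the two remaining children, yields (per unit of cell area) $2\cdot\frac{c}{4}+\frac{1}{16}$, and this is at least $c$ only when $c\le\frac18$; at $c=\frac{5}{32}$ it gives $\frac{5}{64}+\frac{1}{16}=\frac{9}{64}<\frac{5}{32}$. Two empty children give only $2\cdot\frac{1}{16}=\frac18$. So no amount of ``delicate bookkeeping'' at a single level of subdivision pushes past $1/8$, which is exactly the preliminary bound the paper first establishes. The paper closes the gap to $5/32$ by descending one further level of the quadtree in these two cases: it subdivides the occupied quadrants into sub-quadrants, classifies configurations by how many sub-quadrants are empty, and carries out a lengthy case analysis (Appendix~\ref{app:5/32}), using a rectangular strip variant (Observation~\ref{obs:adjacent2}) and extreme-point anchors, with the binding subcases yielding $\frac{9}{64}+\frac{1}{64}=\frac{5}{32}$; the hardest configuration matches the upper-bound construction of Proposition~\ref{prop:7/27}. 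Without some such second-level (or otherwise refined) analysis, your induction does not close, and as written your proposal proves only a $1/8$ bound.
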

\begin{proof}
We first derive a lower bound of $1/8$ and then sharpen it to $5/32$.
We proceed by induction on the number of points $n$ contained in
$U$ and assigned to $U$; during the subdivision process, the
r\^ole of $U$ is taken by any subdivision square.
If all points in $P$ lie on $U$'s boundary, $\partial U$, pick one
arbitrarily, say, $(x,0)$; assume $x \leq 1/2$.
(All assumptions in the proof are made without loss of generality.)
Then the square $[x,x +1/2] \times [0,1/2]$ is empty and its area
is $1/4 > 5/32$, as required. Otherwise, discard the points in $P \cap \partial U$
and continue on the remaining points.

If $n=1$, we can assume that $x(p),y(p) \leq 1/2$.
Then the square of side-length $1/2$ whose lower-left anchor is $p$
  is empty and contained in $U$, as desired; hence $A_\textup{sq}(P)\geq 1/4$.
  If $n=2$ let $x_1,x_2, x_3$ be the widths of the
  $3$ vertical strips determined by the two points, where $x_1 + x_2 +x_3=1$.
  We can assume that $0 \leq x_1 \leq x_2 \leq x_3$; then there are
  two anchored empty squares with total area at least
  $x_2^2 + x_3^2 \geq 2/9 > 5/32$, as required.

Assume now that $n \geq 3$. Subdivide $U$ into four congruent squares,
$U_1,\ldots,U_4$, labeled counterclockwise around the center of $U$
according to the quadrant containing the square.
Partition $P$ into four subsets $P_1,\ldots,P_4$ such that $P_i \subset U_i$
for $i=1,\ldots,4$, with ties broken arbitrarily. 
We next derive the lower bound $A_\textup{sq}(P) \geq 1/8$.
We distinguish $4$ cases, depending on the number of empty sets $P_i$.

\smallskip\noindent{\bf Case 1: precisely one of $P_1,\ldots,P_4$ is empty.}
We can assume that $P_1=\emptyset$.
By Observation~\ref{obs:adjacent1}, $U_1\cup U_2$ contains an empty square
anchored at a point in $P_1\cup P_2$ of area at least $\area(U_1)/4=1/16$. By
 induction, $U_3$ and $U_4$ each contain an anchored square packing of area
 at least $c \cdot \area(U_3) = c \cdot \area(U_4)$.
 Overall, we have $A_\textup{sq}(P) \geq 2c/4 + 1/16 \geq c$, which holds
 for $c \geq 1/8$.

\smallskip\noindent{\bf Case 2: precisely two of $P_1,\ldots,P_4$ are empty.}
We can assume that the pairs $\{P_1,P_2\}$ and $\{P_3,P_4\}$
each consist of one empty and one nonempty set. By Observation~\ref{obs:adjacent1}
$U_1\cup U_2$ and $U_3\cup U_4$, respectively, contain a square anchored at a point in
$P_1\cup P_2$ and $P_3\cup P_4$ of area at least $\area(U_0)/4=1/16$.
Hence $A_\textup{sq}(P)\geq 2\cdot \frac{1}{16}=1/8$.

\smallskip\noindent{\bf Case 3: precisely three of $P_1,\ldots,P_4$ are empty.}
We can assume that $P_3 \neq \emptyset$.
Let $(a,b)\in P$ be a maximal point in the product order
(e.g., the sum of coordinates is maximum).
Then $s=[a,a+\frac12]\times [b,b+\frac12]$ is a square anchored at
$(a,b)$, $s\subseteq [0,1]^2$ since $(a,b)\in U_3$,
and $\inter(s)\cap P=\emptyset$. Hence $A_\textup{sq}(P)\geq \area(s)=1/4$.

\smallskip\noindent{\bf Case 4: $P_i \neq \emptyset$ for every $i=1,\ldots,4$.}
Note that $A_\textup{sq}(P)\geq \sum_{i=1}^4 A_\textup{sq}(P_i)$,
where the squares anchored at $P_i$ are restricted to $U_i$.
Induction completes the proof in this case.

\smallskip
In all four cases, we have verified that $A_\textup{sq}(P)\geq 1/8$, as claimed.
The inductive proof can be turned into a recursive algorithm
based on a quadtree subdivision of the $n$ points,
which can be computed in $O(n\log n)$ time~\cite{A05,C83}.
In addition, computing an extreme point (with regard to a specified axis-direction)
in any subsquare over all needed such calls can be executed within
the same time bound. Note that the bound in Case 3 is at least $5/32$
and Case 4 is inductive.

Sharpening the analysis of Cases 1 and 2
yields an improved bound $5/32$; since $5/32 <1/4$, the value $5/32$ is not a
bottleneck for Cases 3 and 4. Details are given in Appendix~\ref{app:5/32};
the running time remains $O(n\log n)$.
\end{proof}

\section{Constant-Factor Approximations for Anchored Square Packings}
\label{sec:approx}

In this section we investigate better approximations for square packings.
Given a finite point set $P\subset [0,1]^2$, perhaps the most natural greedy strategy
for computing an anchored square packing of large area is the following.

\begin{algorithm}\label{alg:sq}{\rm
Set $Q = P$ and $S=\emptyset$. While $Q \neq \emptyset$,
repeat the following. For each point $q \in Q$,
compute a \emph{candidate} square $s(q)$ such that
(i)~$s(q) \subseteq [0,1]^2$ is \emph{anchored} at $q$,
(ii)~$s(q)$ is empty of points from $P$ in its interior,
(iii)~$s(q)$ is interior-disjoint from all squares in $S$, and
(iv)~$s(q)$ has maximum area.
Then choose a largest candidate square $s(q)$, and a corresponding
point $q$, and set $Q \leftarrow Q\setminus \{q\}$
and $S \leftarrow S\cup \{s(q)\}$. When $Q=\emptyset$, return the set
of squares~$S$.
} 
\end{algorithm}

\paragraph{Remark.}
Let $\rho_{\ref{alg:sq}}$ denote the approximation ratio of Algorithm~\ref{alg:sq},
if it exists. The construction in Fig.~\ref{fig:rho2}(a--b) shows that
$\rho_{\ref{alg:sq}} \leq 1/4$. For a small $\eps>0$, consider the point set
$P = \{p_1,\ldots ,p_n\}$, where $p_1 = (1/2+\eps,1/2+\eps)$, $p_2 = (1/2,0)$,
$p_3 = (0,1/2)$, and the rest of the points lie on the lower side of $U$ in the
vicinity of $p_2$, \ie, $x_i \in (1/2-\eps/2, 1/2+\eps/2)$ and $y_i=0$
for $i=4,\ldots,n$.
The packing generated by Algorithm~\ref{alg:sq} consists of a single square of area
$(1/2+\eps)^2$, as shown in Fig.~\ref{fig:rho2}(a), while the packing in
Fig.~\ref{fig:rho2}(b) has an area larger than $1-\eps$.
By letting $\eps$ be arbitrarily small, we deduce that $\rho_{\ref{alg:sq}} \leq 1/4$.
\begin{figure}[htbp]
\centerline{\epsfxsize=6in \epsffile{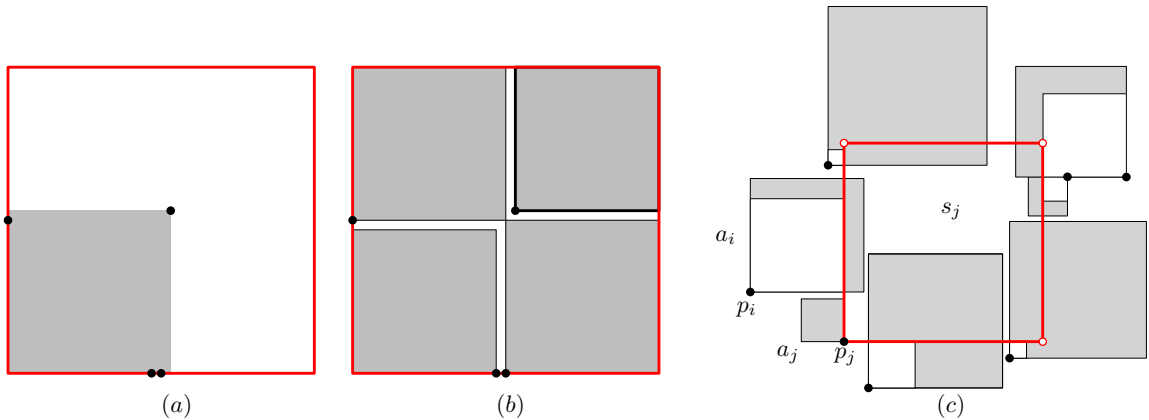}}
\caption{
(a--b) A $1/4$ upper bound for the approximation ratio of Algorithm~\ref{alg:sq}.
(c) Charging scheme for Algorithm~\ref{alg:sq}.
Without loss of generality, the figure illustrates
the case when $s_j$ is a lower-left anchored square.}
\label{fig:alg-sq}\label{fig:rho2}
\end{figure}

We first show that Algorithm~\ref{alg:sq} achieves a ratio of $1/6$ (Theorem~\ref{thm:1/6})
using a charging scheme that compares the greedy packing with an optimal packing.
We then refine our analysis and sharpen the approximation ratio to
$\frac{9}{47} =1/5.22\ldots$ (Theorem~\ref{thm:9/47}).

\paragraph{Charging scheme for the analysis of Algorithm~\ref{alg:sq}.}
Label the points in $P=\{p_1,\ldots , p_n\}$ and the squares in
$S=\{s_1,\ldots , s_n\}$ in the order in which they are processed
by Algorithm~\ref{alg:sq} with $q=p_i$ and $s_i=s(q)$.
Let $G=\sum_{i=1}^n\area(s_i)$ be the area of the greedy packing,
and let $\opt$ denote an optimal packing with
$A =\area(\opt)= \sum_{i=1}^n \area(a_i)$,
where $a_i$ is the square anchored at $p_i$.

We employ a charging scheme, where we distribute the area of every
optimal square $a_i$ with $\area(a_i)>0$ among some greedy squares;
and then show that the total area of the optimal squares charged to
each greedy square $s_j$ is at most $6 \, \area(s_j)$ for all $j=1,\ldots ,n$.
(Degenerate optimal squares, \ie, those with $\area(a_i)=0$ do not need to be charged).
For each step $j=1,\ldots , n$ of Algorithm~\ref{alg:sq}, we shrink some of the squares
$a_1,\ldots , a_n$, and charge the area-decrease to the greedy square $s_j$.
By the end (after the $n$th step), each of the squares $a_1,\ldots ,a_n$
will be reduced to a single point.

Specifically in step $j$, Algorithm~\ref{alg:sq} chooses a square $s_j$, and:
(1) we shrink square $a_j$ to a single point; and
(2) we shrink every square $a_i$, $i>j$ that intersects $s_j$ in
its interior until it no longer does so.
This procedure ensures that no square $a_i$, with $i<j$, intersects $s_j$ in
its interior in step $j$. Refer to Fig.~\ref{fig:alg-sq}(c).
Observe three important properties of the above iterative process:
\begin{itemize} \itemsep 0pt
\item[{\rm (i)}]  After step $j$, the squares $s_1,\ldots,s_j, a_1,\ldots,a_n$
  have pairwise disjoint interiors.
\item[{\rm (ii)}] After step $j$, we have $\area(a_j)=0$ (since $a_j$
  was shrunk to a single point).
\item[{\rm (iii)}] At the beginning of step $j$, if $a_i$ intersects
  $s_j$ in its interior (and so $i \geq j$), then $\area(a_i) \leq \area(s_j)$
  since $s_j$ is feasible for $p_i$ when $a_j$ is selected by Algorithm~\ref{alg:sq}
  due to the greedy choice.
\end{itemize}

\begin{lemma} \label{lem:ratio}
Suppose there exists a constant $\varrho \geq 1$ such that for every
$j=1\ldots n$, square $s_j$ receives a charge of at most $\varrho\, \area(s_j)$.
Then Algorithm~\ref{alg:sq} computes an anchored square packing whose area
$G$ is at least $1/\varrho$ times the optimal.
\end{lemma}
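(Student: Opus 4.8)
The plan is a straightforward charge-conservation (double-counting) argument: I will show that the total charge distributed by the scheme equals the whole optimal area $A$, and that by hypothesis this same total is at most $\varrho\,G$, whence $A \le \varrho\,G$ and $G \ge A/\varrho$.

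First I would track each optimal square $a_i$ through the $n$ steps. At the start of the process $a_i$ has area $\area(a_i)$, and under the procedure its area only ever decreases: in step $j=i$ it is shrunk to a single point (property (ii)), and in any step $j<i$ it may be shrunk further to remove an interior overlap with $s_j$; it is never enlarged. Hence after step $n$ every $a_i$ has area $0$. Since every area-decrease of $a_i$ occurring in step $j$ is, by construction, charged to the greedy square $s_j$, the sum over all steps of the area lost by $a_i$ equals its initial area $\area(a_i)$. (Degenerate optimal squares with $\area(a_i)=0$ contribute nothing and are never charged, consistent with the scheme.)

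Summing over all $i=1,\ldots,n$, the grand total of all charges equals $\sum_{i=1}^n \area(a_i) = A = \area(\opt)$. On the other hand, regrouping the \emph{same} charges by the greedy square that receives them and invoking the hypothesis that $s_j$ receives at most $\varrho\,\area(s_j)$, the grand total is at most $\sum_{j=1}^n \varrho\,\area(s_j) = \varrho\,G$. Equating the two counts yields
\[
A \;=\; (\text{total charge}) \;\le\; \varrho\,G,
\]
and therefore $G \ge A/\varrho = (1/\varrho)\,\area(\opt)$, as claimed.

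The argument is essentially bookkeeping, so there is no serious obstacle in this lemma itself; the one point that must be checked is charge conservation — that no portion of any optimal square's area is discarded without being assigned to some $s_j$. This is precisely what steps (1) and (2) of the shrinking procedure guarantee, since every shrinking event in step $j$ is charged to $s_j$, and the two steps together drive each $a_i$ down to a single point by step $i$. The genuinely hard part lies \emph{outside} this lemma: verifying that a concrete constant $\varrho$ (namely $\varrho=6$ in Theorem~\ref{thm:1/6} and $\varrho=47/9$ in Theorem~\ref{thm:9/47}) actually bounds the per-square charge, which will rely on the greedy maximality encoded in properties (i)--(iii) of the process.
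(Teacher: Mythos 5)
Your proof is correct and follows essentially the same route as the paper, whose proof of Lemma~\ref{lem:ratio} is exactly this double count: regroup the charges by receiving square to obtain $A=\sum_{i=1}^n \area(a_i)\leq \varrho\sum_{j=1}^n \area(s_j)=\varrho\, G$. Your explicit verification of charge conservation (that the shrinking procedure reduces every $a_i$ to a point and every area-decrease is charged to some $s_j$) merely spells out what the paper's one-line argument leaves implicit.
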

\begin{proof}
Overall, each square $s_j$ receives a charge of at most $\varrho \, \area(s_j)$
from the squares in an optimal solution. Consequently,
$ A =\area(\opt) = \sum_{i=1}^n \area(a_i) \leq \varrho \sum_{j=1}^n \area(s_j) =\varrho \, G, $
and thus $G \geq A/\varrho$, as claimed.
\end{proof}

In the remainder of this section, we bound the charge received by one square $s_j$,
for $j=1,\ldots n$. We distinguish two types of squares $a_i$, $i>j$,
whose area is reduced by $s_j$:
\begin{itemize} \itemsep 0pt
\item[] $\Q_1=\{a_i: i>j$, the area of $a_i$ is reduced by $s_j$, and $\inter(a_i)$
  contains no corner of $s_j\}$,
\item[] $\Q_2=\{a_i:i>j$, the area of $a_i$ is reduced by $s_j$, and
  $\inter(a_i)$ contains a corner of $s_j\}$.
\end{itemize}

It is clear that if the insertion of $s_j$ reduces the area of $a_i$, $i>j$,
then $a_i$ is in either $\Q_1$ or $\Q_2$. Note that the area of $a_j$ is also
reduced to 0, but it is in neither $\Q_1$ nor $\Q_2$.

\begin{lemma}\label{lem:6}
Each square $s_j$ receives a charge of at most $6 \, \area(s_j)$.
\end{lemma}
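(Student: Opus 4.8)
The goal is to show that the total charge placed on a fixed greedy square $s_j$ is at most $6\,\area(s_j)$. The charge comes from three sources: the square $a_j$ that is shrunk to a point, and the two families $\Q_1$ and $\Q_2$ of optimal squares whose area is reduced by the insertion of $s_j$. The plan is to bound each contribution separately. By property~(ii), $a_j$ contributes at most $\area(a_j)$; and since $a_j$ is feasible at the moment $s_j$ is chosen by Algorithm~\ref{alg:sq}, the greedy choice gives $\area(a_j)\le \area(s_j)$. So $a_j$ contributes at most $\area(s_j)$, leaving a budget of $5\,\area(s_j)$ for $\Q_1\cup\Q_2$.

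For the squares in $\Q_2$, I would argue by a corner-counting argument. Each $a_i\in\Q_2$ has a corner of $s_j$ in its interior, and $s_j$ has only four corners. The key geometric fact is that the squares $a_i$ have pairwise disjoint interiors (property~(i), after the relevant shrinking), so no two of them can contain the \emph{same} corner of $s_j$ in their interior; hence $|\Q_2|\le 4$. Combined with property~(iii), which says each such $a_i$ (being reduced by $s_j$ at a time when it intersected $s_j$'s interior) satisfies $\area(a_i)\le\area(s_j)$, the total charge from $\Q_2$ is at most $4\,\area(s_j)$. This still overshoots the remaining budget, so the real work is showing $\Q_1$ is cheap and that the bound $|\Q_2|\le 4$ can be combined with $\Q_1$ to stay within $5\,\area(s_j)$.

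For $\Q_1$, each $a_i$ has its interior meeting $s_j$ but contains \emph{no} corner of $s_j$. Geometrically, when two axis-aligned squares have overlapping interiors and one contains no corner of the other, the overlap forces $a_i$ to ``straddle'' one side of $s_j$: one full side of $a_i$ (or a slab of $a_i$) must cross $s_j$, so $a_i$ has side-length at least that of $s_j$ along one axis and thus $\area(a_i)\ge\area(s_j)$; but since these $a_i$ have disjoint interiors and each spans across $s_j$, only a bounded number of them can fit. The charge from $\Q_1$ is only the \emph{area-decrease}, i.e.\ the part of $a_i$ that overlaps $s_j$, not the whole of $a_i$. The cleanest accounting is: the total area removed from all squares in $\Q_1$ is at most the area of $s_j$ swept by them. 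Since the $a_i$'s are interior-disjoint, the total overlap with $s_j$ is at most $\area(s_j)$, so $\Q_1$ contributes at most $1\,\area(s_j)$ in aggregate. This gives $a_j+\Q_1+\Q_2 \le 1+1+4 = 6$ times $\area(s_j)$, as required.

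\textbf{Main obstacle.} The delicate point is the geometric bookkeeping for $\Q_2$: verifying rigorously that each corner of $s_j$ lies in the interior of at most one $a_i\in\Q_2$ (this uses interior-disjointness from property~(i)) and that the \emph{full} area $\area(a_i)\le\area(s_j)$ is a legitimate bound on the charge (the charge is the total area-decrease of $a_i$ across the whole run, which is at most $\area(a_i)$). One must be careful that $\Q_1$ and $\Q_2$ are disjoint and exhaust all reduced squares other than $a_j$, and that the $\Q_1$-bound genuinely uses that we charge only the swept area, not the whole square; otherwise the four corners of $\Q_2$ plus $\Q_1$ would blow past $6$. I expect the argument is engineered precisely so that charging $\Q_2$ by full area ($\le 4\,\area(s_j)$) and $\Q_1$ by swept area ($\le \area(s_j)$) are compatible and sum, with the $a_j$ term, to exactly $6$.
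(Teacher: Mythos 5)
There is a genuine gap here---in fact two errors that happen to cancel numerically, so your total of $6$ is an artifact rather than a proof. First, your bound of $\area(s_j)$ on the $\Q_1$ charge is invalid: the charge assigned to $s_j$ from a square $a_i\in\Q_1$ is the \emph{entire} area-decrease $a_i\setminus a_i'$, which is an L-shaped region, and this region is \emph{not} contained in $s_j$ (when $a_i$ is shrunk toward its anchor until it clears $\inter(s_j)$, part of the lost L-shape lies outside $s_j$). Your interior-disjointness argument only controls the portions of these L-shapes lying inside $s_j$. The paper's proof repairs exactly this: for a square of side $c_i$ penetrating $s_j$ to depth $d_i$, the decrease is $2c_id_i-d_i^2$ while the part inside $s_j$ is $c_id_i$, so at least half of each L-shape lies in $s_j$; disjointness then gives a $\Q_1$ charge of at most $2\,\area(s_j)$, not $1\,\area(s_j)$.

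Second, your corner count $|\Q_2|\le 4$ is correct but too weak: combined with the corrected $\Q_1$ bound it yields $1+2+4=7$, overshooting the claim. The paper gets $|\Q_2|\le 3$ by observing that the anchor of $s_j$ is the point $p_j\in P$, and every square in the optimal packing is empty, i.e., contains no point of $P$ in its interior; hence no $a_i$ can contain the anchor corner of $s_j$ in its interior, and only the other three corners are available. This anchor-emptiness observation is the missing idea in your proposal, and it is essential: without it the argument as you set it up cannot reach $6$, and indeed the paper's later refinement to $9/47$ (Lemma~\ref{lem:47/9}) also relies on exactly three, not four, corner squares. Your bookkeeping of the three charge sources ($a_j$, $\Q_1$, $\Q_2$) and the use of the greedy property (iii) to bound individual squares by $\area(s_j)$ are otherwise in line with the paper's scheme.
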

\begin{proof}
Consider the squares in $\Q_1$. Assume that $a_i$ intersects the
interior of $s_j$, and it is shrunk to $a'_i$. The area-decrease
$a_i \setminus a'_i$ is an L-shaped region, at least half of which lies
inside $s_j$; see Fig.~\ref{fig:alg-sq}. By property (i), the L-shaped regions
are pairwise interior-disjoint; and hence the sum of their areas is at
most $2 \, \area(s_j)$. Consequently, the area-decrease caused by $s_j$
in squares in $\Q_1$ is at most $2 \, \area(s_j)$.

Consider the squares in $\Q_2$. There are at most three squares $a_i$,
$i>j$, that can contain a corner of $s_j$ since the anchor of $s_j$ is
not contained in the interior of any square $a_i$.
Since the area of each square in $\Q_2$ is at most $\area(s_j)$ by property~(iii),
the area decrease is at most $3 \, \area(s_j)$, and so is the charge received
by $s_j$ from squares.

Finally, $\area(a_j)\leq \area(s_j)$ by property~(iii),
and this is also charged to $s_j$. Overall $s_j$ receives a charge of
at most $6 \, \area(s_j)$. 
\end{proof}

The combination of Lemmas~\ref{lem:ratio} and~\ref{lem:6} readily implies the following.
\begin{theorem} \label{thm:1/6}
Algorithm~\ref{alg:sq} computes an anchored square packing whose area
is at least $1/6$ times the optimal.
\end{theorem}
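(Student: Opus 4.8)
The plan is to combine the two lemmas just established. The statement asserts that Algorithm~\ref{alg:sq} has approximation ratio at least $1/6$, and both ingredients are already in place: Lemma~\ref{lem:6} supplies the per-square charging bound, while Lemma~\ref{lem:ratio} converts any such bound into an approximation guarantee. So the proof amounts to matching the hypothesis of one lemma to the conclusion of the other, with no new calculation required.

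Concretely, I would first invoke Lemma~\ref{lem:6}, which states that each greedy square $s_j$ receives a charge of at most $6\,\area(s_j)$ from the (shrinking) optimal squares. This is exactly the hypothesis of Lemma~\ref{lem:ratio} with the constant $\varrho = 6$; note $\varrho = 6 \geq 1$, so the lemma is indeed applicable. Applying Lemma~\ref{lem:ratio} then yields $G \geq A/6$, where $G$ is the area of the greedy packing and $A = \area(\opt)$, which is precisely the claimed $1/6$-approximation.

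There is no genuine obstacle in this final step — it is a one-line deduction. All the difficulty is concentrated upstream in Lemma~\ref{lem:6}, where the charge to $s_j$ is split into the contribution of the $\Q_1$ squares (bounded by $2\,\area(s_j)$ via the pairwise interior-disjoint L-shaped regions guaranteed by property~(i)), the $\Q_2$ squares (bounded by $3\,\area(s_j)$, since at most three optimal squares can cover a corner of $s_j$ and each has area at most $\area(s_j)$ by property~(iii)), and the self-charge of $a_j$ (at most $\area(s_j)$, again by property~(iii)), summing to $6\,\area(s_j)$.
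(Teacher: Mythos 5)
Your proposal is correct and matches the paper exactly: the paper derives Theorem~\ref{thm:1/6} as an immediate combination of Lemma~\ref{lem:6} (each greedy square $s_j$ is charged at most $6\,\area(s_j)$) and Lemma~\ref{lem:ratio} (which converts a per-square charge bound $\varrho$ into a $1/\varrho$ approximation guarantee). Your one-line deduction, including the observation that all the substance lives upstream in the charging lemma, is precisely the paper's argument.
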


\paragraph{Refined analysis of the charging scheme.}
We next improve the upper bound for the charge received by $s_j$;
we assume for convenience that $s_j=U=[0,1]^2$.
For the analysis, we use only a few properties of the optimal solution.
Specifically, assume that $a_1,\ldots ,a_m$ are interior-disjoint squares
such that each $a_i$:
(a)~intersects the interior of $s_j$;
(b)~has at least a corner in the exterior of $s_j$;
(c)~does not contain $(0,0)$ in its interior; and
(d)~$\area(a_i) \leq \area(s_j)$.

The intersection of any square $a_i$ with $\partial U$
is a polygonal line on the boundary $\partial U$, consisting of one or two segments.
Since the squares $a_i$ form a packing, these intersections are interior-disjoint.

Let $\Delta_1(x)$ denote the maximum area-decrease of a set of
squares $a_i$ in $\Q_1$, whose intersections with $\partial U$ have
total length $x$. Similarly, let $\Delta_2(x)$ denote the maximum
area-decrease of a set of squares $a_i$ in $\Q_2$, whose intersections with
$\partial U$ have total length $x$. By adding suitable squares to $\Q_1$,
we can assume that $4-x$ is the total length of the intersections $a_i
\cap \partial U$ over squares in $\Q_2$ (\ie, the squares in $\Q_1
\cup \Q_2$ cover the entire boundary of $U$).
Consequently, the maximum total area-decrease is given by
\begin{equation} \label{eq:10}
  \Delta(x)= \Delta_1(x) + \Delta_2(4-x), \text{ and }
  \Delta =\sup_{0 \leq x \leq 4} \Delta(x).
\end{equation}

We now derive upper bounds for $\Delta_1(x)$ and $\Delta_2(x)$ independently,
and then combine these bounds to optimize $\Delta(x)$. Since the total perimeter
of $U$ is 4, the domain of $\Delta(x)$ is $0\leq x\leq 4$.

\begin{lemma} \label{lem:Delta1}
The following inequalities hold:
\begin{align}
  \Delta_1(x) &\leq 2, \label{eq:6} \\
  \Delta_1(x) &\leq x, \label{eq:7} \\
  \Delta_1(x) &\leq 1+(x-1)^2, \text{ for } 1 \leq x \leq 2, \label{eq:8} \\
  \Delta_1(x) &\leq 1 + \frac{\lfloor x \rfloor}{4} + \frac{(x -\lfloor x \rfloor)^2}{4},
\text{ for } 0 \leq x \leq 4. \label{eq:9}
  \end{align}
\end{lemma}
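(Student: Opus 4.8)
The plan is to reduce all four inequalities to two elementary facts about a single square of $\Q_1$, one global packing constraint, and a standard sum-of-squares maximization. First I would pin down the geometry of a square $a_i\in\Q_1$. Since $a_i$ contains no corner of $U=s_j$ in its interior and has $\area(a_i)\le\area(s_j)=1$ by condition~(d), it cannot cross two opposite sides of $U$ (that would force side length $\ge 1$, hence a contained corner) nor two adjacent sides (that again forces a corner inside). Hence $a_i$ crosses exactly one side of $U$, its intersection $a_i\cap\partial U$ is a single segment whose length $\ell_i$ equals the side length $t_i$ of $a_i$, and $\ell_i=t_i\le 1$. Writing $d_i$ for the penetration depth of $a_i$ into $U$ (so $0<d_i<t_i$, the lower bound by condition~(a) and the upper bound by condition~(b)), the part of $a_i$ inside $U$ is an $\ell_i\times d_i$ rectangle, so $\mathrm{inside}_i:=\area(a_i\cap U)=\ell_i d_i$. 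Because $s_j$ carries no point of $P$ in its interior, $p_i\notin\inter(s_j)$, so the anchor of $a_i$ is a corner lying outside $U$; shrinking $a_i$ to $a_i'$ about that corner removes an L-shaped region of area $\delta_i=t_i^2-(t_i-d_i)^2=2\ell_i d_i-d_i^2$, of which exactly $\mathrm{inside}_i=\ell_i d_i$ lies in $U$ and $\mathrm{outside}_i:=\delta_i-\mathrm{inside}_i=d_i(\ell_i-d_i)$ lies outside. Two consequences I will use repeatedly are $\delta_i\le\ell_i^2$ (equivalently $(\ell_i-d_i)^2\ge 0$) and $\mathrm{outside}_i=d_i(\ell_i-d_i)\le \ell_i^2/4$ by AM--GM, together with $\mathrm{outside}_i\le \mathrm{inside}_i$.

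Next I would record the single global constraint: since the squares $a_i$ are interior-disjoint and $a_i'\cap\inter(U)=\emptyset$, the inside rectangles $a_i\cap U$ are pairwise interior-disjoint subsets of $U$, whence $\sum_i \mathrm{inside}_i\le \area(U)=1$. With this in hand the four bounds fall out. For~\eqref{eq:6}, $\Delta_1(x)=\sum_i\delta_i=\sum_i(\mathrm{inside}_i+\mathrm{outside}_i)\le 2\sum_i\mathrm{inside}_i\le 2$, using $\mathrm{outside}_i\le\mathrm{inside}_i$. For~\eqref{eq:7}, $\Delta_1(x)=\sum_i\delta_i\le\sum_i\ell_i^2\le\sum_i\ell_i=x$, since each $\ell_i\le 1$.

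For the sharper bounds~\eqref{eq:8} and~\eqref{eq:9} I would invoke the maximization of $\sum_i\ell_i^2$ over $\ell_i\in[0,1]$ subject to $\sum_i\ell_i=x$: the objective is convex, so its maximum over this polytope is attained at a vertex, which has $\lfloor x\rfloor$ coordinates equal to $1$ and one coordinate equal to $x-\lfloor x\rfloor$, giving $\sum_i\ell_i^2\le \lfloor x\rfloor+(x-\lfloor x\rfloor)^2$. Then~\eqref{eq:8} follows from $\Delta_1(x)\le\sum_i\ell_i^2\le 1+(x-1)^2$ on $1\le x\le 2$ (where the vertex bound reads $1+(x-1)^2$), and~\eqref{eq:9} follows from $\Delta_1(x)=\sum_i\mathrm{inside}_i+\sum_i\mathrm{outside}_i\le 1+\tfrac14\sum_i\ell_i^2\le 1+\tfrac14\big(\lfloor x\rfloor+(x-\lfloor x\rfloor)^2\big)$.

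The routine parts are the per-square area bookkeeping and the sum-of-squares maximization. The step I expect to need the most care is the geometric classification that forces each $\Q_1$ square to cross $\partial U$ in a single segment of length exactly $t_i$, and the verification that its anchor is a corner outside $U$ so that shrinking yields the clean inside/outside split of the L-shaped area-decrease; this is precisely where conditions~(b)--(d) and the fact $p_i\notin\inter(s_j)$ all enter, and where a careless treatment of boundary cases (a square crossing near a corner of $U$, or with $d_i$ close to $t_i$) could invalidate the identity $\ell_i=t_i$ or the disjointness behind $\sum_i\mathrm{inside}_i\le 1$. I would therefore isolate the single-segment/anchor-outside claim as a short preliminary observation and prove it first, after which the four inequalities are immediate.
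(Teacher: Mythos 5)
Your proposal is correct and follows essentially the same route as the paper's proof: the same inside/outside decomposition of each L-shaped area-decrease (the paper's $\Delta_1^{\textup{in}}+\Delta_1^{\textup{out}}$ split), the observation that the inside parts are disjoint in $U$ so they sum to at most $1$, the per-square AM--GM bound $d_i(\ell_i-d_i)\leq \ell_i^2/4$ (the paper's $x_iz_i-z_i^2\leq x_i^2/4$), and the maximization of $\sum\ell_i^2$ at a vertex of the polytope, which is exactly the paper's ``weight-shifting argument'' and, on $[1,2]$, its pair of elementary inequalities $\sum x_i^2\leq(\sum x_i)^2$ and $x^2+y^2\leq 1+(x+y-1)^2$. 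Your explicit preliminary observation that each $\Q_1$ square crosses a single side with boundary trace of length equal to its side and shrinks about an anchor corner outside $\inter(U)$ is the same genericity assumption the paper makes implicitly, so the two arguments match in both substance and level of rigor.
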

\begin{proof}
  Inequality~\eqref{eq:6} was explained in the proof of Theorem~\ref{thm:1/6}.
  Inequalities~\eqref{eq:7} and~\eqref{eq:8} follow from the fact that
  the side-length of each square $a_i$ is at most~$1$ and from the
 fact that the area-decrease is at most the area (of respective squares);
 in addition, we use the inequality $\sum x_i^2 \leq (\sum x_i)^2$, for
 $ x_i \geq 0$ and $\sum x_i \leq 1$, and the inequality
 $x^2 + y^2 \leq 1 + (x+y-1)^2$, for $0 \leq x,y \leq 1$, and $x+y > 1$.

 Write
\begin{equation} \label{eq:13}
  \Delta_1(x) = \Delta_1^{\textup{in}}(x) + \Delta_1^{\textup{out}}(x),
\end{equation}
where $\Delta_1^{\textup{in}}(x)$ and $\Delta_1^{\textup{out}}(x)$
denote the maximum area-decrease contained in $U$ and the complement of $U$,
respectively, of a set of squares in $\Q_1$ whose intersections with $\partial U$ have
total length $x$, where $0 \leq x \leq 4$.
Obviously, $\Delta_1^{\textup{in}}(x) \leq \area(U)=1$. We next show that
$$ \Delta_1^{\textup{out}}(x) \leq  \frac{\lfloor x \rfloor}{4} +
\frac{(x -\lfloor x \rfloor)^2}{4}, $$
and thereby establish inequality~\eqref{eq:9}.

Consider a square $a_i$ of side-length $x_i \leq 1$ in $\Q_1$.
  Let $z_i$ denote the length of the shorter side of the rectangle $a_i \setminus U$.
  The area-decrease outside $U$ equals $x_i z_i - z_i^2$ and so it is
  bounded from above by $x_i^2/4$ (equality is attained when $z_i=x_i/2$).

  Consequently,
  $$  \Delta_1^{\textup{out}}(x) \leq  \sup \sum_{\substack {0 \leq
      x_i \leq 1\\ \sum x_i=x}} \frac{x_i^2}{4} =
  \frac{\lfloor x \rfloor}{4} + \frac{(x -\lfloor x \rfloor)^2}{4}, $$
where the last equality follows from a standard weight-shifting argument,
and equality is attained when $x$ is subdivided into $\lfloor x \rfloor$
unit length intervals and a remaining shorter interval of length $x - \lfloor x \rfloor$.
\end{proof}

Let $k \leq 3$ be the number of squares $a_i$ in $\Q_2$, where $i>j$.
We can assume that exactly $3$ squares $a_i$, with $i > j$, are in $\Q_2$, one for
each corner except the lower-left anchor corner of $U$, that is, $k=3$;
otherwise the proof of Lemma~\ref{lem:6} already yields an
approximation ratio of $1/5$. Clearly, we have $\Delta_2(x) \leq k \leq 3$, for any $x$.

We first bring the squares in $\Q_2$ into \emph{canonical position}:
$x$ monotonically decreases, $\Delta(x)$ does not decrease, and
properties (a--d) listed earlier are maintained.
Specifically, we transform each square $a_i\in \Q_2$ as follows
(refer to Fig.~\ref{fig:f7}): 
\begin{itemize} \itemsep 0pt
\item Move the anchor of $a_i$ to another corner if necessary so that one of its
  coordinates is contained in the interval $(0,1)$;
\item translate $a_i$ horizontally or vertically so that $a_i\cap U$ decreases to
  a skinny rectangle of width $\eps$, for some small $\eps>0$.
\end{itemize}

\begin{figure}[htbp]
\centerline{\epsfxsize=4.7in \epsffile{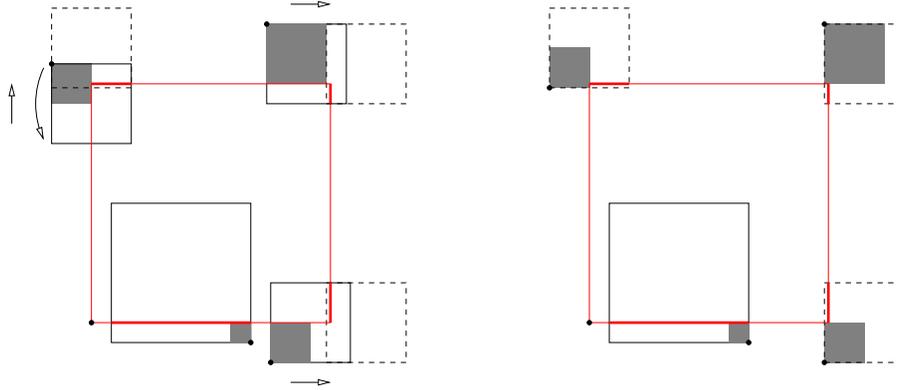}}
\caption{Bounding the area-decrease; moving the squares in $\Q_2$ into canonical position.
The parts of the boundary of $U$ covered by each square (after transformations)
are drawn in thick red lines.}
\label{fig:f7}
\end{figure}

\begin{lemma} \label{lem:Delta2}
The following inequality holds:
\begin{equation}
\Delta_2(x) \leq 2x -\frac{x^2}{3},  \text{ for } 0 \leq x \leq 4. \label{eq:11}
\end{equation}
\end{lemma}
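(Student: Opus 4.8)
The plan is to use the canonical position already set up above, in which each of the (at most three) squares of $\Q_2$ meets $\partial U$ in an essentially one‑dimensional way, and then to reduce the whole estimate to a single scalar optimization that concavity settles.

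First I would isolate the contribution of one square. Fix $a_i\in\Q_2$ in canonical position, so that $a_i\cap U$ is a skinny rectangle of width $\eps$ running along one edge of $U$, of length $t_i$ along that edge; up to the negligible $\eps$‑terms, $t_i$ is exactly the length of $a_i\cap\partial U$, i.e.\ the share of $x$ contributed by $a_i$. Because the anchor of $a_i$ has been moved to a corner that lies outside $U$ in the direction along which $a_i$ must be cleared, shrinking $a_i$ toward its anchor until it leaves $\inter(U)$ reduces its side‑length from $s_i$ to $s_i-t_i$, so the area‑decrease equals $s_i^2-(s_i-t_i)^2=2s_it_i-t_i^2$. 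Invoking property~(d), namely $s_i\le 1$, this is at most $2t_i-t_i^2$, and moreover $t_i<s_i\le 1$. Hence each square contributes at most $\phi(t_i)$ with $\phi(t)=2t-t^2$ and $0\le t_i\le 1$.

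Next I would sum and optimize. Assuming, as argued just before the lemma, that $\Q_2$ consists of exactly three squares, the total area‑decrease is at most $\sum_{i=1}^{3}\phi(t_i)$ subject to $\sum_{i=1}^{3}t_i=x$ and $0\le t_i\le 1$. Since $\phi$ is concave, Jensen's inequality yields $\sum_{i=1}^{3}\phi(t_i)\le 3\,\phi\!\left(\frac{x}{3}\right)=2x-\frac{x^2}{3}$, and for $x\le 3$ the equalizing choice $t_i=x/3$ is feasible, so the estimate cannot be improved in the worst case. Letting $\eps\to 0$ removes the auxiliary terms and gives $\Delta_2(x)\le 2x-\frac{x^2}{3}$ for $x\le 3$; since canonical configurations cover at most length $3$ of $\partial U$, this is the relevant range, and the inequality is recorded on all of $[0,4]$ for later use in optimizing $\Delta$ via~\eqref{eq:10}.

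The step I expect to be the crux is the per‑square claim that, after canonicalization, the area‑decrease of $a_i$ is at most $2t_i-t_i^2$ with $t_i$ its boundary share. This rests on verifying that the canonicalizing moves — relocating the anchor to a corner with one coordinate in $(0,1)$ and thinning $a_i\cap U$ to width $\eps$ — preserve properties~(a)–(d), do not decrease the aggregate area‑decrease $\Delta$, and only decrease $x$; these monotonicity facts are what let an upper bound proved for canonical configurations transfer back to arbitrary ones. Once the clearing is pinned to the long side of the skinny rectangle and $s_i\le 1$ is used, the remainder is the routine concavity computation above.
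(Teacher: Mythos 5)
Your proof is correct and essentially identical to the paper's: both work from the canonical position, write the per-square area-decrease as $s_i^2-(s_i-t_i)^2=2s_it_i-t_i^2\leq 2t_i-t_i^2$ using the side-length bound $s_i\leq 1$ from property~(d), and then maximize $\sum_{i=1}^{3}(2t_i-t_i^2)$ subject to $\sum_{i=1}^{3}t_i=x$, your Jensen step being exactly the paper's computation $2x-\inf\sum_i x_i^2 = 2x-\tfrac{x^2}{3}$. The canonicalization facts you flag as the crux are likewise asserted without detailed verification in the paper itself, so your argument matches it in both substance and level of rigor.
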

\begin{proof}
Assume that the squares in $\Q_2$ are in canonical position.
Let $y_i$ denote the side-length of $a_i$,
let $x_i$ denote the length of the longer side of the rectangle $a_i \cap U$ and
$z_i$ denote the length of the shorter side of the rectangle $a_i \setminus U$, $i=1,2,3$.
Since the squares in $\Q_2$ are in canonical position, we have $x_i +z_i=y_i \leq 1$,
for $i=1,2,3$. We also have $\sum_{i=1}^3 x_i =x -O(\eps)$. Letting
$\eps \to 0$, we have $\sum_{i=1}^3 x_i =x$.

\begin{align*}
\Delta_2(x) &=\sup_{x_i +z_i=y_i \leq 1} \sum_{i=1}^3 (y_i^2-z_i^2) =
\sup_{0 \leq x_1,x_2,x_3 \leq 1} \sum_{i=1}^3 (1-(1-x_i^2)) =
\sup_{0 \leq x_1,x_2,x_3 \leq 1} \sum_{i=1}^3 (2x_i -x_i^2) \\
&= 2x - \inf_{0 \leq x_1,x_2,x_3 \leq 1} \sum_{i=1}^3 x_i^2 =
2x - 3\frac{x^2}{9} = 2x - \frac{x^2}{3}.
\tag*{\qedhere}
\end{align*}
\end{proof}

Observe that the inequality $ \Delta_2(x) \leq 3$, for every $0 \leq x \leq 4$,
is implied by the inequality~\eqref{eq:11}.
Putting together the upper bounds in Lemmas~\ref{lem:Delta1} and~\ref{lem:Delta2}
yields Lemma~\ref{lem:Delta}:

\begin{lemma} \label{lem:Delta}
The following inequality holds:
\begin{equation}
\Delta \leq \frac{38}{9}. \label{eq:12}
\end{equation}
From the opposite direction, $\Delta \geq 4$ holds even in a geometric
setting, \ie, as implied by several constructions with squares.
\end{lemma}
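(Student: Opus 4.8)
The plan is to prove the upper bound $\Delta\leq \tfrac{38}{9}$ by substituting the estimate~\eqref{eq:11} for $\Delta_2$ into the definition~\eqref{eq:10} and then maximizing over $x$ using the sharpest available bound on $\Delta_1$ from Lemma~\ref{lem:Delta1}. First I rewrite $\Delta_2(4-x)\leq 2(4-x)-\tfrac{(4-x)^2}{3}=\tfrac{8+2x-x^2}{3}$. Then I split the domain $[0,4]$ at the integers and at the point $x^\ast=1+\tfrac{1}{\sqrt3}$, which is exactly where the bounds~\eqref{eq:8} and~\eqref{eq:9} exchange roles. On each piece I use whichever of~\eqref{eq:7},~\eqref{eq:8},~\eqref{eq:9} is smallest: one checks that~\eqref{eq:7} is tightest on $[0,1]$, that~\eqref{eq:8} is tightest on $[1,x^\ast]$, and that~\eqref{eq:9} is tightest on $[x^\ast,4]$. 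With these choices $\Delta(x)$ becomes an explicit quadratic on each subinterval.

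On the critical interval $[1,x^\ast]$, combining~\eqref{eq:8} with the $\Delta_2$-estimate gives
\[
\Delta(x)\leq 1+(x-1)^2+\frac{8+2x-x^2}{3}=\frac{2x^2-4x+14}{3},
\]
which is increasing for $x\geq 1$, so its maximum over this interval is attained at the right endpoint $x=x^\ast$, where a short computation gives exactly $\tfrac{38}{9}$. I then verify that every other piece stays below this value: on $[0,1]$ the bound $\tfrac{-x^2+5x+8}{3}$ is increasing and reaches only $4$ at $x=1$; on $[x^\ast,2]$, $[2,3]$, and $[3,4]$ the relevant quadratics obtained from~\eqref{eq:9} are all decreasing in $x$ and take their maxima $\tfrac{38}{9}$, $\tfrac{25}{6}$, and $\tfrac{41}{12}$ at the respective left endpoints. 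Since $4,\ \tfrac{25}{6},\ \tfrac{41}{12}$ are all below $\tfrac{38}{9}$, we conclude $\Delta=\sup_{0\le x\le 4}\Delta(x)=\tfrac{38}{9}$, attained at $x^\ast$.

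For the lower bound $\Delta\geq 4$, I would exhibit an explicit family of configurations. Fix $s_j=U=[0,1]^2$ with anchor $(0,0)$, and for small $\eps>0$ take three squares of side $1$, each containing one of the non-anchor corners $(1,0),(1,1),(0,1)$ of $U$ in its interior and meeting $U$ in a width-$\eps$ sliver along an incident edge (choosing the slivers along the right, top, and left edges so the squares are disjoint); these lie in $\Q_2$, and by the computation in Lemma~\ref{lem:Delta2} with $x_i\to 1$ each has area-decrease tending to $1$. On the remaining (bottom) edge place one square in $\Q_1$, namely $[0,1]\times[-\eps,1-\eps]$: it has a corner below $U$, fills $U$ up to an $\eps$-sliver, and has area-decrease $1-\eps^2\to 1$. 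As $\eps\to 0$ all overlaps shrink to $O(\eps)$-order slivers, the four squares become pairwise interior-disjoint and satisfy properties (a)--(d), and their total area-decrease tends to $3\cdot 1+1=4$. Hence $\Delta\geq 4$.

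The piecewise quadratic maximization is routine; the one genuinely delicate point in the upper bound is recognizing that the optimum sits exactly at the crossover $x^\ast=1+\tfrac{1}{\sqrt3}$, and that one must use~\eqref{eq:8} (not the weaker~\eqref{eq:9}) on $[1,x^\ast]$ — using~\eqref{eq:9} there would only yield the looser bound $\tfrac{17}{4}>\tfrac{38}{9}$. For the lower bound the main obstacle is ensuring the four squares can be made simultaneously interior-disjoint while each still realizes its full decrease in the $\eps\to 0$ limit, which the placement above handles by confining all overlaps to $\eps$-order slivers.
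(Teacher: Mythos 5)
Your proof of the upper bound $\Delta\leq\frac{38}{9}$ is correct and essentially identical to the paper's: the same substitution $\Delta_2(4-x)\leq\frac{8+2x-x^2}{3}$ from~\eqref{eq:11}, the same subdivision of $[0,4]$ at the integers and at the crossover point $x^\ast=1+\frac{1}{\sqrt{3}}$ where~\eqref{eq:8} and~\eqref{eq:9} exchange roles, and the same piecewise quadratics with extreme values $4$, $\frac{38}{9}$, $\frac{38}{9}$, $\frac{25}{6}$, $\frac{41}{12}$. One wording slip: your conclusion ``$\Delta=\sup_{0\le x\le 4}\Delta(x)=\frac{38}{9}$'' overstates what was shown — you have only bounded an upper envelope of estimates, so the argument yields $\Delta\leq\frac{38}{9}$; combined with the second half of the lemma one knows only $4\leq\Delta\leq\frac{38}{9}$, and the exact value of $\Delta$ remains open.

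The lower-bound construction, however, has a genuine gap: for every fixed $\eps>0$ your four squares are \emph{not} interior-disjoint, so the configuration is inadmissible in the supremum defining $\Delta$. Concretely, a $\Q_2$ corner square attains area-decrease close to $1$ only if its intersection with $\partial U$ has length $x_i\to 1$ (its decrease is $2x_i-x_i^2$ by the computation in Lemma~\ref{lem:Delta2}), so its width-$\eps'$ sliver must run along essentially the entire left or right edge of $U$; but your bottom square $[0,1]\times[-\eps,1-\eps]$ occupies all of $U$ below the line $y=1-\eps$, and hence overlaps each such sliver in a rectangle of area roughly $\eps'(1-\eps)>0$. Letting $\eps\to 0$ does not repair this: $\Delta$ is a supremum over exactly interior-disjoint families, and in the limit $\eps=0$ the corner squares no longer intersect $\inter(s_j)$ at all, so they contribute decrease $0$ rather than $1$ — at no point does your family contain an admissible configuration with total decrease near $4$. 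Nor can you shrink the offending slivers to fit above $y=1-\eps$, since that caps their boundary coverage, and hence their decrease, at $O(\eps)$, dropping the total to about $2$. The construction must be changed structurally: make the bottom square a sliver-type square as well, e.g., $[0,1]\times[\eps-1,\eps]$ anchored at its upper-left corner $(0,\eps)\in\partial U$ — every positive square anchored there inside it meets $\inter(U)$, so upon insertion of $s_j=U$ it shrinks to a point and its decrease is exactly $1$ — and arrange the four sliver squares in a pinwheel around $U$, which makes them genuinely pairwise interior-disjoint while each satisfies properties (a)--(d) and the total decrease is $4-O(\eps)$. This is in the spirit of the configurations the paper exhibits in Fig.~\ref{fig:f8}.
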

\begin{proof}
The lower bound $\Delta \geq 4$ is implied by any of the two
(obviously different!) configurations shown in Fig.~\ref{fig:f8}.
\begin{figure}[htbp]
\centerline{\epsfxsize=5in \epsffile{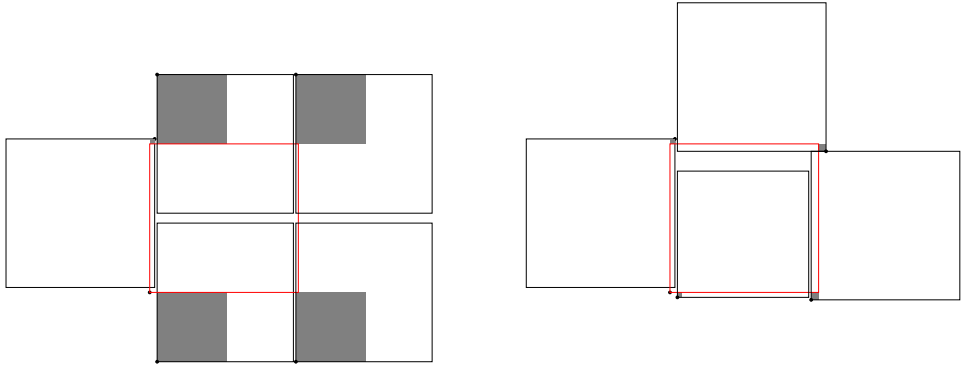}}
\caption{Two constructions with $\Delta \geq 4-\eps$.}
\label{fig:f8}
\end{figure}

We now prove the upper bound.  Recall~\eqref{eq:10}; and by~\eqref{eq:11} we obtain
  $$  \Delta_2(4-x) \leq 2(4-x) - \frac{(4-x)^2}{3}= \frac{8+2x -x^2}{3}. $$
We partition the interval  $[0,4]$ into several
subintervals, and select the best upper bound we have in each case.
We distinguish the following cases (marked with $\clubsuit$):

\medskip
$\clubsuit$ $0 \leq x \leq 1$: By~\eqref{eq:7}, $\Delta_1(x) \leq 1$;
since $\Delta_2(4-x) \leq 3$, we immediately get that
$$ \Delta = \Delta_1(x) + \Delta_2(4-x) \leq 1+3 = 4 < 38/9. $$

\smallskip
$\clubsuit$ $1 \leq x \leq 1 + \frac{\sqrt3}{3}$: Put $g_1(x)=14-4x+2x^2$.
On this interval we have
$$ 1+(x-1)^2 \leq \frac54 + \frac{(x-1)^2}{4}, $$
and so we can use the upper bound in~\eqref{eq:8} on
$\Delta_1(x)$ and obtain
\begin{align*}
\Delta(x) &=\Delta_1(x) + \Delta_2(4-x) \leq 1+(x-1)^2 + \frac{8+2x -x^2}{3}\\
&=\frac{14-4x+2x^2}{3} \leq \frac{g_1(1+1/\sqrt3)}{3}=\frac{38}{9}.
\end{align*}

\smallskip
$\clubsuit$ $1 + \frac{\sqrt3}{3} \leq x < 2$: Put $g_2(x)=50+2x-x^2$.
On this interval we have
$$ 1+(x-1)^2 \geq \frac54 + \frac{(x-1)^2}{4}, $$
and so we can use the upper bound in~\eqref{eq:9} on
$\Delta_1(x)$ and obtain
\begin{align*}
  \Delta(x) &= \Delta_1(x) + \Delta_2(4-x) \leq \frac54 +
  \frac{(x-1)^2}{4} + \frac{8+2x -x^2}{3}\\
&=\frac{50+2x-x^2}{12} \leq \frac{g_2(1+1/\sqrt3)}{12}=\frac{38}{9}.
\end{align*}

\smallskip
$\clubsuit$ $2 \leq x < 3$: Put $g_3(x)=62-4x-x^2$.
Using the upper bound in~\eqref{eq:9} on $\Delta_1(x)$ yields
\begin{align*}
  \Delta(x) &= \Delta_1(x) + \Delta_2(4-x) \leq \frac64 + \frac{(x-2)^2}{4} +
  \frac{8+2x -x^2}{3}\\
&=\frac{62-4x-x^2}{12} \leq \frac{g_3(2)}{12}=\frac{25}{6} < \frac{38}{9}.
\end{align*}

\smallskip
$\clubsuit$ $3 \leq x <4$: Put $g_4(x)=80-10x-x^2$.
Using the upper bound in~\eqref{eq:9} on $\Delta_1(x)$ yields
\begin{align*}
  \Delta(x) &= \Delta_1(x) + \Delta_2(4-x) \leq \frac74 + \frac{(x-3)^2}{4} +
  \frac{8+2x -x^2}{3}\\
&=\frac{80-10x-x^2}{12} \leq \frac{g_4(3)}{12}=\frac{41}{12} < 4 < \frac{38}{9}.
\end{align*}

\smallskip
$\clubsuit$ $x=4$: By~\eqref{eq:6}, $\Delta \leq \Delta_1(x) \leq 2 < 38/9$.

\smallskip\noindent
All cases have been checked, and so the proof of the upper bound is complete.
\end{proof}

\begin{lemma}\label{lem:47/9}
Each square $s_j$ receives a charge of at most $\frac{47}{9}\, \area(s_j)$.
\end{lemma}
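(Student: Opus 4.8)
The plan is to assemble the charge received by $s_j$ from its sources and to bound the sum using the machinery already developed. Normalizing so that $s_j = U = [0,1]^2$ and hence $\area(s_j)=1$, the charge to $s_j$ splits into two parts: (a) the total area-decrease inflicted on the optimal squares $a_i$ with $i>j$ whose interiors meet $s_j$, which are precisely the squares in $\Q_1 \cup \Q_2$; and (b) the full area $\area(a_j)$ of the square $a_j$, which is shrunk to a point in step $j$.

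First I would treat the case $k \le 2$, where $k=|\Q_2|$. Here the crude estimates from the proof of Lemma~\ref{lem:6} already suffice: $\Q_1$ contributes at most $2$, $\Q_2$ contributes at most $k \le 2$ (each of its squares has area at most $1$ by property~(iii)), and $a_j$ contributes at most $1$, for a total of at most $5 < \frac{47}{9}$. The binding case is therefore $k=3$, which is exactly the setting in which the refined analysis was carried out. For this case I would simply invoke Lemma~\ref{lem:Delta}: by the definition of $\Delta$ in~\eqref{eq:10}, the combined area-decrease from $\Q_1$ and $\Q_2$ equals $\Delta_1(x) + \Delta_2(4-x)$ for the appropriate boundary-length split $x$, and this is at most $\Delta \le \frac{38}{9}$. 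Adding the contribution of $a_j$, which is at most $\area(s_j)=1$ by property~(iii), gives a total charge of at most $\frac{38}{9} + 1 = \frac{47}{9}$, as claimed.

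There is no genuine obstacle here, since all the difficulty has been front-loaded into Lemma~\ref{lem:Delta}; what remains is essentially bookkeeping. The one point to verify carefully is the decomposition (a)+(b): every optimal square charged to $s_j$ must be counted exactly once, either as $a_j$ itself or as a member of $\Q_1 \cup \Q_2$, and one must confirm that padding $\Q_1$ so that $\Q_1 \cup \Q_2$ covers all of $\partial U$ (as in the setup preceding~\eqref{eq:10}) does not under-count any true area-decrease charged to $s_j$. Once this is in place, the conclusion reduces to the arithmetic identity $\frac{38}{9} + 1 = \frac{47}{9}$.
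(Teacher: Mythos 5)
Your proof is correct and takes essentially the same route as the paper: it invokes Lemma~\ref{lem:Delta} to bound the total area-decrease over $\Q_1\cup\Q_2$ by $\frac{38}{9}\,\area(s_j)$ and then adds the charge $\area(a_j)\leq \area(s_j)$ from property~(iii), yielding $\left(1+\frac{38}{9}\right)\area(s_j)=\frac{47}{9}\,\area(s_j)$. Your explicit dispatch of the case $k\leq 2$ (total charge at most $5<\frac{47}{9}$ by the crude bounds of Lemma~\ref{lem:6}) matches the paper's remark, made just before Lemma~\ref{lem:Delta2}, that one may assume $k=3$ since otherwise the earlier analysis already suffices.
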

\begin{proof}
By Lemma~\ref{lem:Delta}, the area-decrease is at most $38/9 \, \area(s_j)$,
and so is the charge received
by $s_j$ from squares in $\Q_1$ and from squares in $\Q_2$
with the exception of the case $i=j$. Adding this last charge yields a
total charge of at most
$\left( 1 + \frac{38}{9} \right) \, \area(s_j) = \frac{47}{9} \, \area(s_j)$.
\end{proof}

The combination of Lemmas~\ref{lem:ratio} and~\ref{lem:47/9} now
yields the following.
\begin{theorem} \label{thm:9/47}
Algorithm~\ref{alg:sq} computes an anchored square packing whose area
is at least $9/47$ times the optimal.
\end{theorem}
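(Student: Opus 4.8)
The plan is to assemble the theorem directly from the two structural lemmas already in hand, since all the genuine work has been pushed into the charging analysis. Recall that Lemma~\ref{lem:ratio} converts any uniform per-square charging bound into an approximation guarantee: if every greedy square $s_j$ absorbs a total charge of at most $\varrho\,\area(s_j)$ from the optimal squares, then $G \geq A/\varrho$. Thus the only remaining ingredient is a concrete admissible value of $\varrho$, and this is exactly what Lemma~\ref{lem:47/9} supplies, namely $\varrho=\frac{47}{9}$.

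Concretely, I would proceed as follows. First, invoke Lemma~\ref{lem:47/9} to record that, under the charging scheme of Section~\ref{sec:approx}, each greedy square $s_j$ receives a charge of at most $\frac{47}{9}\,\area(s_j)$; since $\frac{47}{9}\geq 1$, this value is legitimate as the constant in Lemma~\ref{lem:ratio}. Second, apply Lemma~\ref{lem:ratio} with $\varrho=\frac{47}{9}$ to conclude $G \geq \frac{9}{47}\,A = \frac{9}{47}\,\area(\opt)$, which is precisely the claimed bound.

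Since the heavy lifting is already done upstream, there is no real obstacle at this final step: the ratio $9/47$ is simply $1/\varrho$ read off from the charging constant $\varrho=\frac{47}{9}$. The genuine difficulty lies earlier, in the bound $\Delta \leq \frac{38}{9}$ of Lemma~\ref{lem:Delta}, which feeds into Lemma~\ref{lem:47/9} through the additional unit charge from the degenerate case $i=j$ (giving $1+\frac{38}{9}=\frac{47}{9}$). The sharpness of the resulting ratio is constrained by the lower bound $\Delta \geq 4$ of Lemma~\ref{lem:Delta}; narrowing the gap between $\Delta=4$ and $\Delta=\frac{38}{9}$ would be the natural avenue toward any improvement, but it is not required for the statement at hand.
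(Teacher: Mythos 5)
Your proposal is correct and matches the paper's own proof exactly: the paper likewise deduces Theorem~\ref{thm:9/47} by combining Lemma~\ref{lem:ratio} with the charging bound $\varrho=\frac{47}{9}$ from Lemma~\ref{lem:47/9}, yielding $G \geq \frac{9}{47}\,\area(\opt)$. Your additional remarks about where the real work lies (Lemma~\ref{lem:Delta} and the extra unit charge for $i=j$) are accurate as well.
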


\section{Constant-Factor Approximations for Lower-Left Anchored \\
  Square Packings}
\label{sec:approx:ll}

The following greedy algorithm, analogous to Algorithm~\ref{alg:sq}, constructs
a lower-left anchored square packing, given a finite point set $P\subset [0,1]^2$.

\begin{algorithm}\label{alg:llsq}{\rm
Set $Q = P$ and $S=\emptyset$. While $Q \neq \emptyset$,
repeat the following. For each point $q \in Q$,
compute a \emph{candidate} square $s(q)$ such that
(i)~$s(q) \subseteq [0,1]^2$ has $q$ as its \emph{lower-left} anchor,
(ii)~$s(q)$ is empty of points from $P$ in its interior,
(iii)~$s(q)$ is interior-disjoint from all squares in $S$, and
(iv)~$s(q)$ has maximum area.
Then choose a largest candidate square $s(q)$, and a corresponding
point $q$, and set $Q \leftarrow Q\setminus \{q\}$
and $S \leftarrow S\cup \{s(q)\}$. When $Q=\emptyset$, return the set
of squares~$S$.}
\end{algorithm}

\paragraph{Remark.}
Let $\rho_{\ref{alg:llsq}}$ denote the approximation ratio of Algorithm~\ref{alg:llsq}.
The construction in Fig.~\ref{fig:greedy-lowerleft} shows that $\rho_{\ref{alg:llsq}} \leq 1/3$.
Specifically, for $\eps>0$, with $\eps^{-1}\in \NN$,
consider the point set $P=\{(\eps,\eps), (0,\frac12),(\frac12,0)\}
\cup \{(\frac12+k\eps,\frac12+k\eps) : k=1,\ldots ,1/(2\eps)-1\}$.
Then the area of the packing in Fig.~\ref{fig:greedy-lowerleft}\,(right)
is $\frac34-O(\eps)$, but Algorithm~\ref{alg:llsq} returns the packing shown
in Fig.~\ref{fig:greedy-lowerleft}\,(left) of area $\frac14+O(\eps)$.
\begin{figure}[htbp]
\centerline{\epsfxsize=3in \epsffile{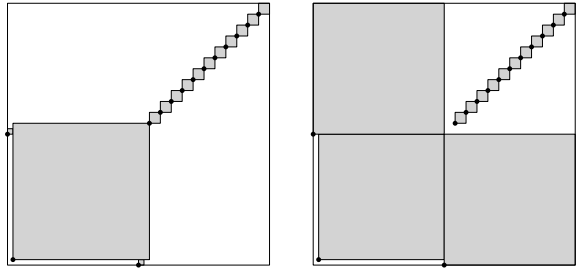}}
\caption{A $1/3$ upper bound for the approximation ratio of Algorithm~\ref{alg:llsq}.}
\label{fig:greedy-lowerleft}
\end{figure}

We next demonstrate that Algorithm~\ref{alg:llsq} achieves approximation ratio $1/3$.
According to the above example, this is the best possible for this algorithm.

\begin{theorem} \label{thm:1/3}
Algorithm~\ref{alg:llsq} computes a lower-left anchored square
packing whose area is at least $1/3$ times the optimal.
\end{theorem}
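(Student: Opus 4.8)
The plan is to re-use the charging scheme of Section~\ref{sec:approx} almost verbatim, adapting only the shrinking step to the lower-left setting, and to invoke Lemma~\ref{lem:ratio} with $\varrho=3$: it suffices to prove that every greedy square $s_j$ produced by Algorithm~\ref{alg:llsq} receives a total charge of at most $3\,\area(s_j)$. As before I would label the greedy and optimal squares $s_1,\dots,s_n$ and $a_1,\dots,a_n$ in processing order, and in step $j$ shrink $a_j$ to its anchor and shrink every $a_i$ with $i>j$ that meets $\inter(s_j)$ until it is disjoint from $\inter(s_j)$ --- the only difference being that each shrink now keeps the lower-left anchor fixed and merely decreases the side length, so the region removed from $a_i$ is the top-right L-shape. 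Properties (i)--(iii) carry over unchanged. Normalize so that $s_j=[0,1]^2$ with anchor at the origin.

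The first step is to extract the structure forced by the lower-left restriction. Since $s_j$ is empty of points of $P$ in its interior and each $a_i$ is anchored at its own lower-left corner $p_i$, no anchor $p_i$ (for $i>j$) lies in $\inter(s_j)$; combined with $\area(a_i)\le\area(s_j)$ (side at most $1$) this shows that every $a_i$ meeting $\inter(s_j)$ has its anchor to the left of or below $s_j$, hence enters $s_j$ through the left edge or the bottom edge. Two consequences are crucial. First, the anchor corner $(0,0)$ of $s_j$ is never interior to any $a_i$, and a square covering the opposite corner $(1,1)$ would need its anchor strictly inside $s_j$ --- impossible; hence $\Q_2$ contains at most two squares, one straddling $(1,0)$ (call it $b$) and one straddling $(0,1)$ (call it $c$). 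Second, $a_j=[0,s_0]^2\subseteq s_j$, so the whole of $\area(a_j)$ is charged inside $s_j$.

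Next I would account the charge as area. The charge to $s_j$ equals the total area of the removed regions, which are pairwise disjoint; splitting them into the part inside $s_j$ and the part lying in the four outer strips gives $\mathrm{charge}=(\text{inside removed})+(\text{outside removed})$. The inside pieces are disjoint and lie in $s_j$, so the inside removed --- which also absorbs all of $\area(a_j)$ --- is at most $\area(s_j)=1$. For the outside removed I would use elementary per-square estimates: a $\Q_1$ strip of side $s$ overhanging the boundary by $t$ loses $t(s-t)\le s^2/4$ outside and $s(s-t)\ge t(s-t)$ inside, so its outside loss is at most its inside loss; and each of the at most two corner squares $b,c$ loses at most its own area, hence at most $1$, outside.

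The main obstacle is that these bounds are individually too generous: a full bottom strip together with a full left strip and two fully protruding corner squares would over-count the outside loss to $5/2$ and the charge to $7/2$. What forces the true bound down to $3$ is the coupling imposed by disjointness: the interiors of a bottom strip and a left strip cannot overlap, so the two cannot simultaneously fill $s_j$, and since each $\Q_1$ square's outside loss is dominated by its inside loss, any large outside contribution from $\Q_1$ must be paid for out of the same inside budget $\area(s_j)=1$ that also supports $a_j$ and the corner squares. I would therefore set up a single optimization over the lengths $\beta_1,\ell_1$ covered on the bottom and left edges of $\partial s_j$ by the $\Q_1$ strips and the lengths $p_b,p_c$ covered there by $b,c$, subject to $\beta_1+p_b\le 1$ on the bottom edge, $\ell_1+p_c\le 1$ on the left edge, and the non-overlap of the bottom- and left-strip interiors; maximizing the total removed area under these constraints should yield exactly $3$. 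Verifying that this constrained maximum equals $3$ (and not $7/2$) is the crux of the argument, and the tightness of the $1/3$ construction in Fig.~\ref{fig:greedy-lowerleft} shows that the optimization cannot be loosened.
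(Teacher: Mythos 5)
Your setup is essentially right, and it correctly isolates the structure that the paper's proof also exploits: at most two optimal squares can hold the upper-left or lower-right corner of $s_j$ (the lower-left corner is the anchor $p_j\in P$ and hence not interior to any empty optimal square, and the upper-right corner is excluded by the side-length bound), each such square charges at most $\area(s_j)$; every other square charging $s_j$ crosses the bottom or left side of $s_j$ with no corner of $s_j$ in its interior. Your constant $3$ and your tightness remark are also correct. But the argument has a genuine gap exactly at what you yourself call the crux. The decisive inequality is that the \emph{total} loss of the non-corner squares (together with $a_j$), counting both their inside and outside parts, is at most $\area(s_j)$; in the notation with side $c_i$ and penetration depth $d_i$, this is $\sum_i (2c_i-d_i)d_i\le \area(s_j)$ for interior-disjoint squares satisfying your crossing condition. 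Your proposed route --- ``inside $\le \area(s_j)$ by disjointness, outside bounded per square, then a single optimization over the boundary lengths $\beta_1,\ell_1,p_b,p_c$'' --- cannot deliver this. The four scalar lengths do not determine the removed area (which depends on the individual $c_i$, $d_i$ and positions), and the constraint that actually forces the bound down from your own $7/2$ estimate is the full two-dimensional disjointness of the bottom and left families, which is not expressible as a relation among four lengths. With lengths alone the best available bounds are of the form $\sum_{\mathrm{bottom}}(2c_i-d_i)d_i\le\sum c_i^2\le\beta_1^2\le 1$ and likewise for the left side, yielding a total charge of $4$, not $3$. So the claim that the constrained maximum ``should yield exactly $3$'' is asserted, not proved, and it is the entire mathematical content of the theorem beyond bookkeeping.

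For comparison, the paper proves precisely this inequality as Lemma~\ref{lem:sides}, and its proof is not an optimization over a few parameters but a deformation argument: five area-monotone operations (translations increasing depth, translations toward the anchor corner, dilations, merging of adjacent squares on the same side, and a delicate Operation~5 merging a corner-incident pair straddling the bottom and left sides) that transform any admissible configuration into the single square $s$ while never decreasing $\sum_i(2c_i-d_i)d_i$. The paper also organizes the charging slightly differently --- statically, charging each optimal square $a_i$ to at most two greedy squares via the first cutter $s_{j(i)}$ (which receives the L-shape $a_i\setminus b_i$) and the first cutter $s_{k(i)}$ of the residual candidate $b_i$ (which is necessarily a corner-type charge), rather than your iterative shrinking --- but that difference is cosmetic: your scheme reduces to the same lemma, since the loss of each non-corner square under a lower-left-preserving shrink is exactly the L-shape $(2c_i-d_i)d_i$. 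To complete your proof you would need to supply a proof of Lemma~\ref{lem:sides} or an equivalent; nothing in your sketch currently substitutes for it.
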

\begin{proof}
Label the points in $P=\{p_1,\ldots , p_n\}$ and the squares in
$S=\{s_1,\ldots , s_n\}$ in the order in which they are processed by
Algorithm~\ref{alg:llsq} with $q=p_i$ and $s_i=s(q)$.
Let $G=\sum_{i=1}^n\area(s_i)$ be the area of the greedy packing,
and let $\opt$ denote an optimal packing with
$A =\area(\opt)= \sum_{i=1}^n \area(a_i)$, where $a_i$ is the
square anchored at $p_i$.

We charge the area of every optimal square $a_i$ to one or two
greedy squares $s_\ell$; and then show that the total area charged
to $s_\ell$ is at most $3 \, \area(s_\ell)$ for all $\ell=1,\ldots ,n$.
Consider a square $a_i$, $1\leq i\leq n$, with $\area(a_i)>0$.
Let $j=j(i)$ be the minimum index such that $s_j$ intersects the
interior of $a_i$. Let $b_i$ denote the candidate square associated
to $p_i$ in step $j+1$ of Algorithm~\ref{alg:llsq}.
Note that $b_i\subset a_i$, thus $\area(b_i)<\area(a_i)$.
If $\area(b_i)>0$, then let $k=k(i)$ be the minimum index such that
$s_k$ intersects the interior of $b_i$.
\begin{figure}[htbp]
\centerline{\epsfxsize=6in \epsffile{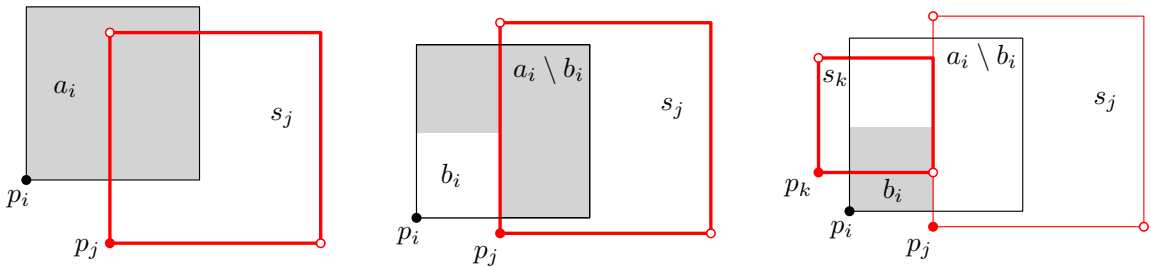}}
\caption{Left: $a_i$ contains the upper-left corner of $s_j$; and
  $\area(a_i)$ is charged to $s_j$.
Middle and Right: $a_i$ contains no corner of $s_j$, but it contains
the lower-right corner of $s_k$.
Then $\area(a_i\setminus b_i)$ is charged to $s_j$ and $\area(b_i)$ is
charged to $s_k$.}
\label{fig:ab}
\end{figure}

We can now describe our \emph{charging scheme}:
If $a_i$ contains the upper-left or lower-right corner of $s_j$,
then charge $\area(a_i)$ to $s_j$ (Fig.~\ref{fig:ab}, left).
Otherwise, charge $\area(a_i\setminus b_i)$ to $s_j$,
and charge $\area(b_i)$ to $s_k$ (Fig.~\ref{fig:ab}, middle-right).

We first argue that the charging scheme is well-defined, and the total
area of $a_i$ is charged to one or two squares ($s_j$ and possibly $s_k$).
Indeed, if no square $s_\ell$, $\ell<i$, intersects the interior of
$a_i$, then $a_i\subseteq s_i$, and $j(i)=i$;
and if $a_i\not\subseteq s_j$ and no square $s_\ell$,
$j<\ell<i$, intersects the interior of
$b_i$, then $b_i\subseteq s_i$ and $k(i)=i$.

\begin{figure}[htbp]
\centerline{\epsfxsize=2.5in \epsffile{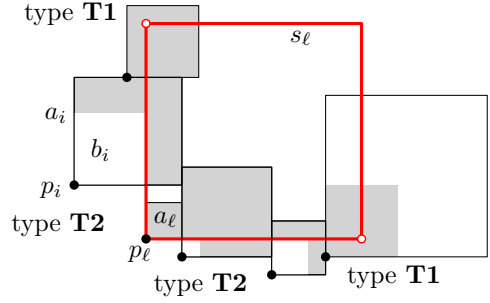}}
\caption{The shaded areas are charged to square $s_\ell$.}
\label{f5+}
\end{figure}

Note that if $\area(a_i)$ is charged to $s_j$, then $\area(a_i) \leq \area(s_j)$.
Indeed, if $\area(a_i) > \area(s_j)$, then $a_i$ is entirely free at step $j$,
so Algorithm~\ref{alg:llsq} would choose a square at least as large as $a_i$ instead of $s_j$,
which is a contradiction. Analogously, if $\area(b_i)$ is charged to $s_k$,
then $\area(b_i)\leq \area(s_k)$. Moreover, if $\area(b_i)$ is charged
to $s_k$, then the upper-left or lower-right corner of~$s_k$ is on the
boundary of $b_i$, and so this corner is contained in $a_i$;
refer to Fig.~\ref{fig:ab}\,(right).

Fix $\ell\in \{1,\ldots , n\}$. We show that the total area charged to
$s_\ell$ is at most $3 \, \area(s_\ell)$.
If a square $a_i$, $i=1,\ldots , n$, sends a positive charge to $s_\ell$,
then $\ell=j(i)$ or $\ell=k(i)$. We distinguish two types of squares $a_i$
that send a positive charge to $s_\ell$; refer to Fig.~\ref{f5+}:
\begin{enumerate}\itemsep -1pt
\item[{\bf T1}\textup{:}] $a_i$ contains the upper-left or lower-right
  corner of $s_\ell$ in its interior.
\item[{\bf T2}\textup{:}] $a_i$ contains neither the upper-left nor
  the lower-right corner of $s_\ell$.
\end{enumerate}

Since $\opt$ is a packing, at most one optimal square contains each corner of $s_\ell$.
Consequently, there is at most two squares $a_i$ of type {\bf T1}.
Since $\area(a_i)\leq \area(s_\ell)$, the charge received from the
squares of type {\bf T1} is at most $2 \, \area(s_\ell)$.

%
The following technical lemma is used for controlling the charge
that a greedy square $s$ receives from squares of an optimal solution.

\begin{lemma}\label{lem:sides}
Let $s$ be an axis-aligned square, and $a_1,a,\ldots , a_m$ be interior-disjoint
axis-aligned squares such that
\begin{itemize}\itemsep 0pt
\item[{\bf P1}\textup{:}]  each $a_i$ intersects the bottom or the left
  side of a square $s$, but
\item[{\bf P2}\textup{:}] the interior of $a_i$ does not contain any corner of $s$.
\end{itemize}
For $i=1,\ldots, m$, let $b_i$ be a maximal square contained in $a_i \setminus \inter(s)$.
Then $\sum_{i=1}^m\area(a_i\setminus b_i)\leq \area(s)$.
\end{lemma}
\begin{proof}
For $i=1, \ldots ,m$, let $c_i$ denote the side-length of $a_i$, and define the
\emph{depth} $d_i$ of $a_i$ as the length of a shortest side of $a_i\cap s$.
With this notation, we have $0\leq d_i\leq c_i\leq c$, the area of
the $L$-shaped region $a_i\setminus b_i$ is
$$ \area(a_i\setminus b_i) = \area(a_i)-\area(b_i)
=c_i^2-(c_i-d_i)^2 =2c_id_i-d_i^2 =(2c_i-d_i)d_i.$$

We apply a sequence of operations on the squares $a_1,\ldots ,a_m$ that
maintain properties {\bf P1} and {\bf P2}, and monotonically increase
$\sum_{i=1}^m (2c_i-d_i)d_i$. The operations successively transform the squares,
eventually eliminate $m-1$ squares, and ensure that the last surviving
square is $a_k=s$.
This implies $\sum_{i=1}^m 2c_id_i-d_i^2\leq \area(s)$, as required.

\begin{figure}[htbp]
\centerline{\epsfxsize=6.3in \epsffile{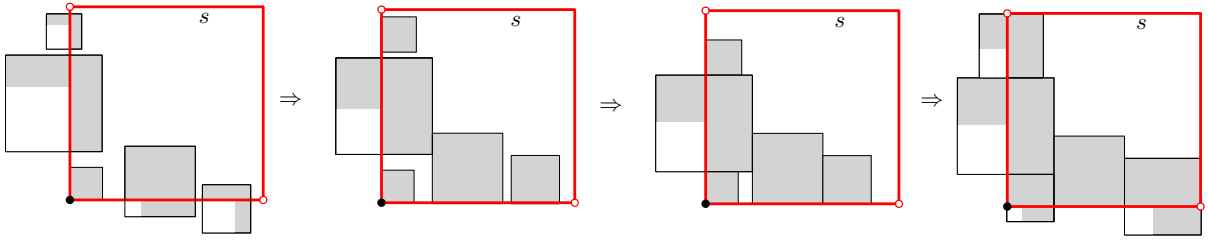}}
\caption{Operations 1, 2 and 3.}
\label{fig:op12}
\end{figure}

\smallskip\noindent{\bf Operation~1.}
Translate $a_i$ horizontally or vertically so that its depth increases
until $c_i=d_i$ or $a_i$ is blocked by some other square.
See Fig.~\ref{fig:op12}. Perform operation~1 successively for every
square $a_i$ in an arbitrary order. The operation can only increase the
contribution $c_i^2-(c_i-d_i)^2$ of $a_i$ since $c_i$ is fixed and the depth
$d_i$ can only increase.

\smallskip\noindent{\bf Operation~2.}
Translate $a_i$ horizontally or vertically
so that its depth remains fixed but $a_i$ moves closer to the
lower-left corner $p$ until it reaches $p$ or is blocked by another square.
See Fig.~\ref{fig:op12}. Perform operation~2 successively for every
square $a_i$ in the order determined by their distance from $p$.
This operation does not change the contribution of the squares $a_i$.

\smallskip\noindent{\bf Operation 3.}
If $a_i$ intersects the bottom (left) side of $s$ and its right (top)
side is not in contact with any other square or a corner of $s$,
then dilate $a_i$ from its upper-left (lower-right)
corner until it is blocked by another square or the boundary of $s$.
See Fig.~\ref{fig:op12}\,(right).

Since operation~3 increases $c_i$ and keeps $d_i$ fixed,
the contribution $(2c_i-d_i)d_i$ of $a_i$ increases.
Note that if the lower-left corner of $a_i$ is $p$ originally
and operation~3 is applied to $a_i$, then its lower-left corner
may move to the exterior of $s$.

\begin{figure}[htbp]
\centerline{\epsfxsize=6.3in \epsffile{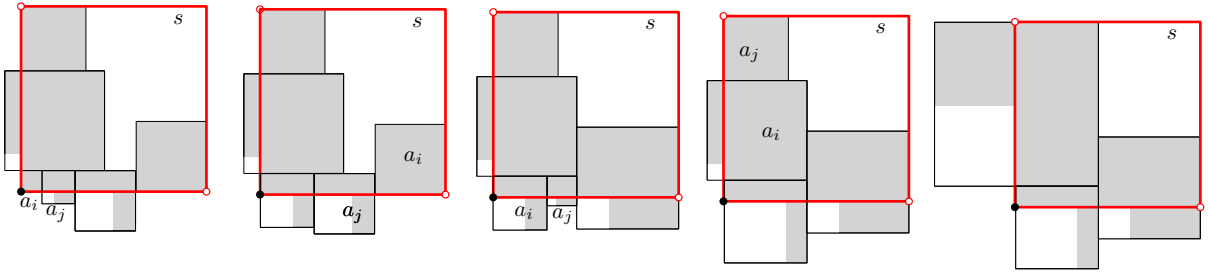}}
\caption{Operation 4.}
\label{fig:op4}
\end{figure}

\smallskip\noindent{\bf Operation~4.}
Consider two squares, $a_i$ and $a_j$,
such that both intersect the bottom side of $s$ and they are adjacent.
Without loss of generality, assume $a_i$ is left of $a_j$ and
$d_i\geq d_j$.  See Fig.~\ref{fig:op4}. We wish to increase $c_i$ and
decrease $c_j$ while $c_i+c_j$ is fixed, and such that their depths
does not increase. We distinguish two cases to ensure that $a_i$
intersects the bottom side of $s$ after the operation.
If $d_j<c_j$, then dilate $a_i$ from its upper-left corner and
$a_j$ from its upper-right corner simultaneously until $d_j=c_j$
or $a_i$ is blocked by some other square or a corner of $s$.
In this case, $d_i$ and $d_j$ remain constant, and the
contribution of $a_i$ and $a_j$ increases or remains the same.
Indeed, if $c_i$ increases and $c_j$ decreases by $\eps>0$, then
\begin{eqnarray*}
(2(c_i+\eps)-d_i)d_i+ (2(c_j-\eps)-d_j)d_j) &=& (2c_i-d_i)d_i+(2c_j-d_j)d_j +2\eps(d_i-d_j)\\
&\geq &  (2c_i-d_i)d_i+(2c_j-d_j)d_j.
\end{eqnarray*}
If $d_j=c_j$, then dilate $a_i$ from its upper-left corner and $a_j$
from its lower-right corner simultaneously until $c_j=0$ or
$a_i$ is blocked by some other square or a corner of $s$.
In this case, $d_i$ is constant, and $d_j=c_j$ decreases.
The contribution of $a_i$ and $a_j$ increases or remains the same:
\begin{eqnarray*}
(2(c_i+\eps)-d_i)d_i+ (d_j-\eps)^2 &=&(2c_i-d_i)d_i+ d_j^2 +2\eps(d_i-d_j)+\eps^2\\
&\geq& (2c_i-d_i)d_i+ d_j^2 +2\eps(d_i-d_j).
\end{eqnarray*}
We apply analogous operations to adjacent squares intersecting the left side of $s$.

\begin{figure}[htbp]
\centerline{\epsfxsize=4.5in \epsffile{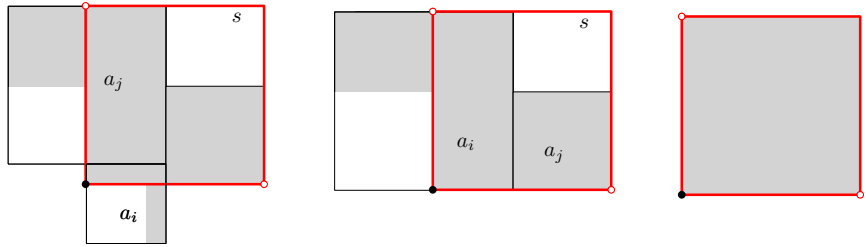}}
\caption{Operation 5.}
\label{fig:op5}
\end{figure}

\smallskip\noindent{\bf Operation~5.}
Consider two adjacent squares, $a_i$ and $a_j$,
such that $a_i$ is incident to the lower-left corner of $s$ and
intersects the bottom (resp., left) side of $s$, and $a_j$ intersects
the left (resp., bottom) side of $s$. See Fig.~\ref{fig:op5}.
Note that $d_j\leq c_i$, otherwise operation~2 could increase $c_i$;
and if $d_j<c_i$, then $c_j=d_j$, otherwise operation~1 could increase $d_j$.
Let $R$ be the axis-aligned bounding box of $(a_i\cup a_j)\cap s$,
which is incident to the lower-left corner
of $p$ and $\area(R)=c_i(d_i+c_j)$. We replace $a_i$ and $a_j$ with one new
square $a_k$ such that $a_k\cap s = R$.

It remains to show that operation~5 increases the total contribution of the squares.
We distinguish two cases depending on which side of $R$ is longer.

\smallskip
\noindent \emph{Case~1.}
If $d_i+c_j\geq c_i$, then $a_k$ has side-length $c_k=d_i+c_j$ and depth $d_k=c_i$.
In case $d_j=c_i$ (Fig.~\ref{fig:op4}, left), we have
\begin{eqnarray*}
(2c_i-d_i)d_i+(2c_j-d_j)d_j
&=& (2c_i-d_i)d_i+(2c_j-c_i)c_i\\
&=& 2c_id_i-d_i^2+2c_jc_i-c_i^2\\
&\leq&  2c_id_i+2c_ic_j-c_i^2\\
&=&(2(d_i+c_j)-c_i)c_i\\
&=&(2c_k-d_k)d_k.
\end{eqnarray*}
Otherwise we have $d_j=c_j<c_i$ and $d_j=c_j$ (Fig.~\ref{fig:op4},
right). Thus,
\begin{eqnarray*}
(2c_i-d_i)d_i+c_j^2
&=& 2c_id_i+c_j^2-d_i^2\\
&\leq& 2c_id_i+c_j^2-(c_i-c_j)^2\\
&=&  2c_id_i+2c_ic_j-c_i^2\\
&=&(2(d_i+c_j)-c_i)c_i\\
&=&(2c_k-d_k)d_k.
\end{eqnarray*}

\smallskip
\noindent \emph{Case~2.}
If $d_i+c_j\leq c_i$, then $a_k$ has side-length $c_k=c_i$ and depth $d_k=d_i+c_j$.
In this case, we have $d_j=c_i$, otherwise operation~1 could increase $d_j$. Thus,
\begin{eqnarray*}
(2c_i-d_i)d_i+c_j^2
&=& 2c_id_i+c_j^2-d_i^2\\
&=& 2c_id_i+2c_j^2-(c_j^2+d_i^2)\\
&=& 2c_id_i+2c_ic_j -2(c_i-c_j)c_j -(d_i^2+c_j^2)\\
&\leq&  2c_id_i+2c_ic_j -2d_ic_j -(d_i^2+c_j^2)\\
&=& 2c_id_i+2c_ic_j-(d_i+c_j)^2\\
&=&(2c_i-(d_i+c_j))(d_i+c_j)\\
&=&(2c_k-d_k)d_k.
\end{eqnarray*}

After successively applying operation~5, we obtain a single square
$a_k=s$, whose contribution is $\area(a_k)=\area(s)$, as
required.
\end{proof}

By Lemma~\ref{lem:sides},
$s_\ell$ receives a charge of at most $\area(s_\ell)$ from squares of type {\bf T2}.
It follows that each $s_\ell$ received a
charge of at most $3 \, \area(s_\ell)$.
Consequently,
$$ A =\area(\opt)
= \sum_{i=1}^n \area(a_i)
\leq 3 \sum_{\ell=1}^n \area(s_\ell)
=3 \, G, $$
and thus $G \geq A/3$.
This completes the proof of Theorem~\ref{thm:1/3}.
\end{proof}

\section{Approximation Schemes}\label{sec:schemes}

In this section, we show that there is a PTAS (resp., QPTAS) for the
maximum anchored square (resp., rectangle) problem
using reductions to the minimum weight independent set (MWIS) problem
for axis-aligned squares (resp., rectangles) in the plane.
The current best approximation ratio for MWIS with axis-aligned rectangles is
$O(\log n/ \log \log n)$~\cite{CH12}, and $O(\log \log n)$ in the unweighted case~\cite{CC09}.
There is a PTAS for MWIS with axis-aligned squares~\cite{EJS01,Cha03} (even a local search
strategy works in the unweighted case~\cite{AM06}); and there is a QPTAS for MWIS with
axis-aligned rectangles~\cite{AW13}. Specifically, for $n$ axis-aligned squares an
$(1-\eps)$-approximation for MWIS which can be computed in time
$n^{O(1/\eps)}$~\cite{Cha03}, and for $n$ axis-aligned rectangles in
time $\exp({\rm poly}(\log n/\eps))$~\cite{AW13}.

\subsection{The Reach of Lower-Left Anchored Packings}

An anchored rectangle can be also viewed from the perspective of robotics applications
such as \emph{a rectangular robotic arm} anchored at the given point.
In this context, an anchored packing of maximum total area represents
the maximum collective \emph{reach} of a collection of nonoverlapping
rectangular robotic arms.
The maximum area of a lower-left anchored rectangle (resp., square) packing can
be arbitrarily small, \ie, with no constant lower-bound guarantee,
when the points in $P$ are close to the top or right boundary of $[0,1]^2$,
or for square packings, when $n$ is large and the points are suitably placed,
\eg, along the diagonal of unit slope.
However, we show below that a greedy packing still covers a constant fraction
of the region that could potentially be covered by lower-left anchored rectangles
(resp., squares). We then employ this lower bound in the design of a QPTAS
(resp., PTAS) for  lower-left anchored rectangles (resp., squares)
in Section~\ref{ssec:approx-schemes}.

\paragraph{The reach of lower-left anchored rectangles.}
For a set $P$ of $n$ points in $U$, define the \emph{reach of lower-left
anchored rectangles} as $W(P)=\bigcup_{p\in P}r(p)$, where $r(p)$ is the axis-aligned
rectangle spanned by $p$ and the upper-right corner of $[0,1]^2$. Clearly,
$W(P)$ is a simple orthogonal polygon in which every lower-left corner is an element in $P$.
Vertical lines through the lower-left corners of $W(P)$ subdivide $W(P)$ into axis-aligned
rectangles whose lower left-corners are in $P$. Each of these rectangles $r$ admits
a lower-left anchored rectangle packing of area at least $0.091\,
\area(r)$ by the main result in~\cite{DT15},
consequently $\area(\opt)\geq 0.091\, \area(W(P))$.

\paragraph{The reach of lower-left anchored squares.}
Similarly, we define the \emph{reach of lower-left
anchored squares} as $W_\textup{sq}(P)=\bigcup_{p\in P}s(p)$, where
$s(p)$ is the (unique) maximal square contained in $U$ whose
lower-left corner is $p$ and that is empty of points of $P$ in its
interior. Observe that $W_\textup{sq}(P)$ is the union of
candidate squares in the first iteration of Algorithm~\ref{alg:llsq}.

\begin{lemma}\label{lem:3s-ll}
Let $P=\{p_1,\ldots,p_n\} \subset U$ be a point set with a lower-left anchored
square packing $S=\{s_1,\ldots ,s_n\}$ returned by Algorithm~\ref{alg:llsq}.
Then $W_\textup{sq}(P)\subset \bigcup_{i=1}^n 3s_i$.
\end{lemma}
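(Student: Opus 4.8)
The plan is to reduce the set-containment to a pointwise, per-square statement and then invoke a single geometric fact about intersecting axis-aligned squares. Since $W_\textup{sq}(P)=\bigcup_{i=1}^n s(p_i)$, it suffices to show that each first-iteration candidate square $s(p_i)$ is contained in $3s_j$ for some $j$; here $3s_j$ denotes the square concentric with $s_j$ and scaled by a factor of $3$. Taking the union over $i$ then yields $W_\textup{sq}(P)\subseteq \bigcup_{j=1}^n 3s_j$. Throughout I order the greedy squares by the order in which Algorithm~\ref{alg:llsq} places them.

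Fix $i$ and let $L$ be the side length of $s(p_i)$. The first step is a simple dichotomy. If the greedy square anchored at $p_i$ — call it $t\in S$ — equals $s(p_i)$, then $s(p_i)=t\subseteq 3t$ and we are done. Otherwise the candidate that Algorithm~\ref{alg:llsq} assigns to $p_i$ is strictly smaller than $s(p_i)$; since $s(p_i)$ is by definition the largest lower-left anchored square at $p_i$ that is empty of points, the only possible cause is that the interior of $s(p_i)$ meets the interior of some greedy square placed before $p_i$ is processed. Let $s_j$ be the \emph{first} greedy square, in placement order, whose interior meets $\inter(s(p_i))$.

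The crux of the argument is the size inequality $\textup{side}(s_j)\geq L$, which I obtain from the greedy rule. By the choice of $s_j$ as the first blocker, every greedy square placed strictly before $s_j$ is interior-disjoint from $s(p_i)$. Hence, at the moment Algorithm~\ref{alg:llsq} selects $s_j$, the square $s(p_i)$ is still a legal candidate for the as-yet-unprocessed point $p_i$: it is anchored at $p_i$, empty of points, and interior-disjoint from all previously placed squares. Since the algorithm selects a candidate of maximum area over all remaining points, $\area(s_j)\geq \area(s(p_i))$, that is, $\textup{side}(s_j)\geq L$.

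It remains to combine this with the intersection $s_j\cap s(p_i)\neq\emptyset$ via the elementary fact that if two axis-aligned squares $A$ and $B$ intersect and $\textup{side}(A)\geq \textup{side}(B)$, then $B\subseteq 3A$: meeting $A$ pins each coordinate range of $B$ to within one side length of the corresponding range of $A$, and the concentric tripling of $A$ extends it by $\textup{side}(A)\geq\textup{side}(B)$ on every side. Applying this with $A=s_j$ and $B=s(p_i)$ gives $s(p_i)\subseteq 3s_j$, completing both cases and hence the lemma. I expect the only delicate point to be the verification that $s(p_i)$ is genuinely a legal candidate at the step $s_j$ is chosen — i.e.\ that ``first blocker'' correctly guarantees interior-disjointness from all earlier squares — since this is precisely what licenses the comparison $\area(s_j)\geq\area(s(p_i))$ through the greedy choice.
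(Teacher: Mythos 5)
Your proposal is correct and takes essentially the same route as the paper's proof: both identify the first greedy square $s_j$ whose interior meets the maximal candidate square $s(p_i)$ (noting that otherwise the greedy square anchored at $p_i$ is $s(p_i)$ itself), use the greedy selection rule to conclude $\area(s_j)\geq \area(s(p_i))$, and then deduce $s(p_i)\subseteq 3s_j$ from the intersection together with the side-length comparison. The delicate point you flag — that the ``first blocker'' definition guarantees $s(p_i)$ is still a legal candidate for the not-yet-processed $p_i$ at step $j$ — is exactly the crux of the paper's argument as well, and your verification of it is sound.
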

\begin{proof}
We need to show that every point $q\in W_\textup{sq}(P)$ lies in $\bigcup_{i=1}^n 3s_i$.
This claim clearly holds when $q$ is covered by one of the squares $s_1,\ldots , s_n$.

Let $q \in W_\textup{sq}(P) \setminus P$ be a point left uncovered by the greedy square packing.
By the definition of $W_\textup{sq}(P)$, point $q$ lies in some
lower-left anchored square $s=s(p_i)$, $p_i\in P$,
of maximum area contained in $[0,1]^2$ but empty of points of $P$ in its interior.
Note that the boundary of $s$ contains at least one point from $P$, namely $p_i\in P$.

Let $s_j$ be the first square chosen by Algorithm~\ref{alg:llsq}
that intersects the interior of $s$.
We are guaranteed the existence of $s_j$ because if no previous greedy
square intersects the interior of $s$,
then $s_i=s$ (possibly $i=j$). At the beginning of the step $j$ of Algorithm~\ref{alg:llsq},
no square in $S$ intersects $s$. By the greedy choice, we have $\area(s_j)\geq \area(s)$.
Since $s_j\cap s\neq \emptyset$ and the side-length of $s$ is less than or equal to the
side-length of $s_j$, we have $q \in s \subset 3s_j$.
\end{proof}

\begin{theorem}\label{thm:1/9-ll}
For every finite point set $P$ in $[0,1]^2$, there is a lower-left anchored square
packing of total area at least $\frac{1}{9}\,\area(W_\textup{sq}(P))$.
\end{theorem}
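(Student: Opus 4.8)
The plan is to leverage Lemma~\ref{lem:3s-ll} directly, which already establishes the geometric covering $W_\textup{sq}(P)\subset \bigcup_{i=1}^n 3s_i$. The key observation is that the reach $W_\textup{sq}(P)$ is covered by the scaled-up greedy squares $3s_i$, so an area bound follows by the standard union-bound argument combined with the scaling behavior of area under dilation.

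First I would take areas on both sides of the inclusion from Lemma~\ref{lem:3s-ll}. By monotonicity and subadditivity of area under unions, I obtain
\begin{equation*}
\area(W_\textup{sq}(P)) \leq \area\!\left(\bigcup_{i=1}^n 3s_i\right) \leq \sum_{i=1}^n \area(3s_i).
\end{equation*}
The crucial point is that dilating a square $s_i$ by a factor of $3$ multiplies its side-length by $3$, and hence its area by $9$, so $\area(3s_i)=9\,\area(s_i)$. Substituting this gives
\begin{equation*}
\area(W_\textup{sq}(P)) \leq 9\sum_{i=1}^n \area(s_i) = 9\,\area(S) = 9\,G,
\end{equation*}
where $G=\sum_{i=1}^n \area(s_i)$ is the total area of the greedy packing returned by Algorithm~\ref{alg:llsq}.

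Rearranging the inequality immediately yields $G \geq \frac{1}{9}\,\area(W_\textup{sq}(P))$. Since the greedy output $S$ is itself a valid lower-left anchored square packing for $P$, this establishes the existence of a lower-left anchored square packing of total area at least $\frac{1}{9}\,\area(W_\textup{sq}(P))$, as claimed.

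I do not anticipate any serious obstacle here, since all the geometric work is encapsulated in Lemma~\ref{lem:3s-ll}. The only subtlety worth flagging is ensuring that the dilation factor is interpreted as a scaling of side-length (giving the factor $9$ in area) rather than some other convention; this matters because the whole constant $1/9$ hinges on it. One should also note that subadditivity of area suffices even though the dilated squares $3s_i$ need not be disjoint, so no disjointness assumption on the $3s_i$ is required.
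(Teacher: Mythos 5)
Your proposal is correct and follows essentially the same argument as the paper's own proof: apply Lemma~\ref{lem:3s-ll}, use monotonicity and subadditivity of area over the union of the dilated squares, and conclude via $\area(3s_i)=9\,\area(s_i)$. Your extra remarks on the dilation convention and on not needing disjointness of the $3s_i$ are accurate but not substantively different from the paper.
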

\begin{proof}
Given a point set $P = \{p_1,\ldots ,p_n\}\subset [0,1]^2$ and a
greedy lower-left anchored square packing $S=\{s_1,\ldots  ,s_n\}$,
Lemma~\ref{lem:3s-ll} yields
$$ \area(W_\textup{sq}(P))\leq \area \left(\bigcup_{i=1}^n 3s_i\right)\leq
\sum_{i=1}^n \area(3s_i)= 9\sum_{i=1}^n \area(s_i), $$
which implies $\sum_i^n\area(s_i) \geq \frac19 \, \area(W_\textup{sq}(P))$, as claimed.
\end{proof}

\paragraph{The reach of anchored squares.}
For a finite (nonempty) set $P\subset [0,1]^2$, the \emph{reach of anchored squares}
$R_\textup{sq}(P)$ is the union of all maximal squares anchored at a
point in $P$ and contained in $U$. Obviously, we have $R_\textup{sq}(P) \geq 5/32$
by Theorem~\ref{thm:5/32}, and we suspect that $\area(R_\textup{sq}(P))\geq \frac12$
for every~$P$. If true, this bound would be the best possible:
choose a small $\eps>0$ and place all $n$ points in the $\eps$-neighborhood
of $(\frac{1}{2},0)$ on the lower side of $U$.
The side-length of any anchored square is at most $\frac{1}{2}+\eps$, which is an upper bound
for the distance of these points from the left and right side of $[0,1]^2$.
Consequently all anchored squares lie below the horizontal line $y=\frac{1}{2}+\eps$,
and so the area of their reach is at most $\frac{1}{2}+\eps$.

\subsection{The approximation schemes}
\label{ssec:approx-schemes}

\paragraph{Anchored rectangles.}
The anchored rectangle packing problem can be discretized~\cite{BT15},
as one can specify a set of $\Theta(n^3)$ anchored rectangles that contains
the optimum packing: consider the Hanan grid~\cite{Ha66} induced by the points,
\ie, the union of $\partial U$ and the $2n$ unit horizontal and vertical unit segments
incident to the $n$ points and contained in $U$. Then an optimal rectangle
packing can be found in this grid (with at most $n^2$ candidates for
each anchor). To formulate the anchored rectangle packing problem
as a maximum weight independent set (MWIS) problem, the set of rectangles
anchored at the same point are slightly enlarged so that they contain
the common anchors in their interiors. For a set of points in general position,
this procedure yields an QPTAS for finding a maximum area
anchored rectangle packing for a set of $n$ points in $U$.

\paragraph{Anchored squares.}
It is worth noting that no similar discretization is known for the anchored
square packing problem. With an $\eps$-factor loss of the total area, however,
we can construct $O(n\log(n/\eps))$ squares
for which MWIS gives an anchored packing of area at least $(1-\eps)\area(\opt)$.
After a similar transformation ensuring that the set of squares anchored at
the same point contain the point in their interiors, and under the assumption
of points in general position, we obtain a PTAS for the maximum-area anchored
square packing problem.

\begin{theorem}\label{thm:ptas-sq}
There is a PTAS for the maximum area anchored square problem.
For every $\eps>0$, there is an algorithm that computes,
for a set of $n$ points in general position in $[0,1]^2$, an anchored square packing
of area at least $(1-\eps)\area(\opt)$ in time $n^{O(1/\eps)}$.
\end{theorem}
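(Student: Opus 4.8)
The plan is to reduce the maximum-area anchored square packing problem to MWIS for axis-aligned squares, for which a PTAS running in time $n^{O(1/\eps)}$ is available by~\cite{Cha03}. The central difficulty, as the preceding remark emphasizes, is that--unlike the rectangle case--there is \emph{no} finite discretization (such as a Hanan grid) guaranteed to contain the optimum square packing: the side-lengths of optimal squares can be arbitrary real numbers. Thus the first and most important step is to construct, with only an $\eps$-factor loss in total area, a \emph{polynomial-size} family $\S$ of candidate squares such that some sub-family of interior-disjoint, empty, anchored squares from $\S$ has area at least $(1-\eps)\area(\opt)$. I would aim for $|\S|=O(n\log(n/\eps))$, as claimed in the statement.

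To build $\S$, I would fix each anchor $p_i$ and each of the four possible anchoring corners, and then discretize the side-length of the candidate square geometrically. Concretely, the maximal empty anchored square at $p_i$ (in a given corner orientation, inside $U$ and empty of $P$ in its interior) has some side-length $\sigma$; I would include in $\S$ the squares of side-length $\sigma, \sigma(1-\eps), \sigma(1-\eps)^2, \ldots$ down to a threshold of order $\eps/n$ times the largest candidate, giving $O(\log(n/\eps))$ squares per anchor-orientation, hence $O(n\log(n/\eps))$ in total. The key rounding lemma to establish is that any optimal anchored square $a_i$ can be replaced by the largest square in $\S$ that is no larger than $a_i$ and anchored at the same corner; this replacement keeps the packing interior-disjoint and empty, while losing at most a $(1-\eps)$ factor in each square's area. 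Summing over the at most $n$ squares of positive area, the rounded packing retains area at least $(1-\eps)\area(\opt)$. The squares dropped at the $\eps/n$ threshold contribute negligibly, which one checks by bounding their total area against $\eps\,\area(\opt)$ (using that each such tiny square has area at most $\eps/n$ of the relevant scale and there are at most $n$ of them).

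Next I would address the formal mismatch between anchored packings and MWIS. In an anchored packing the squares share their anchor points on their boundaries, so two distinct candidate squares at the \emph{same} anchor are interior-disjoint yet must not both be selected; conversely, MWIS for squares only forbids pairs with overlapping interiors. I would resolve this exactly as in the rectangle paragraph: for points in general position, slightly enlarge each candidate square so that it contains its own anchor in its interior (an $O(\eps)$-scale perturbation that does not affect the area bound in the limit). After this transformation, any two candidate squares sharing an anchor now overlap in their interiors and hence cannot both lie in an independent set, while candidate squares from the disjointness constraints of distinct anchors overlap iff their geometric interiors did. Thus interior-disjoint anchored sub-families of $\S$ correspond exactly to independent sets in the square-intersection graph on $\S$, with square area as weight.

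Finally I would invoke the PTAS for MWIS on squares~\cite{Cha03}: running it on the $O(n\log(n/\eps))$ weighted squares in $\S$ with error parameter $\eps$ yields an independent set of weight at least $(1-\eps)$ times the optimal MWIS value, which is itself at least $(1-\eps)\area(\opt)$ by the rounding step; composing the two factors and rescaling $\eps$ gives a $(1-\eps)$-approximation. The running time is $|\S|^{O(1/\eps)} = (n\log(n/\eps))^{O(1/\eps)} = n^{O(1/\eps)}$, absorbing the logarithmic factor into the exponent. The main obstacle throughout is the discretization lemma of the first two paragraphs--proving that geometric rounding of side-lengths loses only an $\eps$-fraction of area while preserving emptiness and disjointness--since everything downstream is a routine invocation of the MWIS PTAS and the standard anchor-enlargement trick already used for rectangles.
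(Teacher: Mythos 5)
Your proposal is correct, and its overall skeleton matches the paper's: geometrically discretize the side-lengths of empty anchored squares into a near-linear-size candidate family, perturb so that candidates at a common anchor pairwise intersect, reduce to MWIS for axis-aligned squares, and invoke Chan's PTAS~\cite{Cha03} to get $(n\log(n/\eps))^{O(1/\eps)}=n^{O(1/\eps)}$ time. The one genuinely different step is how you charge the truncation loss. The paper drops all optimal squares of area below the \emph{absolute} threshold $\eps/n$, incurring an \emph{additive} loss of at most $\eps$, and then must invoke the constant lower bound $\area(\opt)\geq 5/32$ of Theorem~\ref{thm:5/32} to convert this into a multiplicative $(1-32\eps/5)$ factor. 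You instead truncate at a \emph{relative} threshold, $\eps/n$ times the maximal candidate at each anchor-orientation; since the maximal empty anchored square at any single anchor is by itself a feasible packing (all other squares degenerate), its area is at most $\area(\opt)$, so the at most $n$ dropped squares cost at most $\eps\,\area(\opt)$ directly --- do state this one-line observation explicitly, as it is the linchpin of your bound. Your variant is thus self-contained and slightly more general: it works verbatim when $\area(\opt)$ is not bounded below by a constant, which is exactly the situation for lower-left anchored squares, where the paper has to patch its argument by ``adjusting the resolution'' via the reach bound of Theorem~\ref{thm:1/9-ll}. Two small points to tighten: (1) a geometric progression of side-lengths with ratio $1-\eps$ over a dynamic range ${\rm poly}(n/\eps)$ has $O(\eps^{-1}\log(n/\eps))$ terms per anchor-orientation, so the candidate family has size $O((n/\eps)\log(n/\eps))$ rather than $O(n\log(n/\eps))$ --- harmless, since the exponent absorbs it; (2) for the anchor-in-interior trick, pure \emph{enlargement} can make two originally touching disjoint squares overlap, so one direction of your claimed exact correspondence fails as stated; the paper instead \emph{translates} each candidate by a tiny $\delta$ without crossing the Hanan grid lines, and in either version the slack created by the preceding shrinking step (plus general position) is what makes the rounded optimum survive as an independent set.
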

\begin{proof}
First we show that the problem can be discretized if we are willing
to lose a factor $\eps>0$ of the maximum area of a packing. Let
$P$ be a set of $n$ points in $[0,1]^2$, and let $\opt$ be an anchored
square packing of maximum area. For a given $\eps>0$, drop all squares
in $\opt$ of area less than $\eps/n$, and shrink all remaining
squares in $\opt$ so that its anchor remains the same, and its
area decreases by a factor of at most $1+\eps$ to $(\eps/n)(1+\eps)^k$
for some integer $k\geq 0$. Denote by $\opt_\eps$ the resulting anchored
square packing. We have dropped small squares of total area at most $\eps$,
and the area of any remaining square decreased by a factor of at most $1+\eps$.
Consequently, $(\area(\opt)-\eps)/(1+\eps)\leq \area(\opt_\eps)$.
For anchored square packings, we have $\area(\opt)\geq 5/32$ by
Theorem~\ref{thm:5/32}, which yields
\begin{equation}\label{eq:ptas}
(1-32\eps/5) \, \area(\opt)\leq \area(\opt_\eps).
\end{equation}

Let $\eps>0$ be given. For every set $P$ of $n$ points in general
position in $[0,1]^2$, we compute a set $\widehat{C}$ of
$O((n/\eps) \log (n/\eps))$  \emph{candidate} squares, and show that a MWIS of
$\widehat{C}$ (where the weight of each square is its area) has area
at least $(1-\eps) \, \area(\opt)$. For each point $p\in P$, consider a
set $C(p)$ of empty squares anchored at $s$ and contained in $U$
whose areas are of the form $(\eps/n)(1 +\eps)^k$,
for $k=0,\ldots ,\lfloor\log(n/\eps)/\log(1+\eps)\rfloor =O(\eps^{-1}\log(n/\eps))$.
Translate each square in $C(p)$, for all $p\in P$, by a sufficiently small $\delta$
so that each square contains its anchor in its interior, but does not cross any new
horizontal or vertical line through the points in $P$; denote by
$\widehat{C}(p)$ the resulting set of squares.
Let $\widehat{C}=\bigcup_{p\in P} \widehat{C}(p)$.
Note that the squares in $\widehat{C}(p)$ pairwise intersect for each
$p\in P$, hence an MWIS contains precisely one square
(possibly of 0 area) for each point in $P$. The number of squares is
$$ |\widehat{C}|=
\sum_{p\in P} |\widehat{C}(p)|=n \cdot O(\eps^{-1}\log (n/\eps))=O((n/\eps) \log (n/\eps)). $$
By \eqref{eq:ptas}, the area of a MWIS
for the candidate squares in $\widehat{C}$ is at least $(1-\eps) \, \area(\opt)$.

Using the PTAS for MWIS with axis-aligned squares due to Chan~\cite{Cha03}:
an $(1-\eps)$-approximation for a set of $m=O((n/\eps) \log (n/\eps))$ weighted
axis-aligned squares can be computed in time
$m^{O(1/\eps)}=n^{O(1/\eps)}$.
\end{proof}

\paragraph{Lower-left anchored packings.}
Even though the area of \emph{lower-left} anchored square packings is
not bounded by any constant, and Theorem~\ref{thm:5/32} no longer holds,
the PTAS in Theorem~\ref{thm:ptas-sq} can be extended for this variant.
Recall that for a set $P\subset [0,1]^2$, the reach of lower-left
anchored squares, denoted $W_\textup{sq}(P)$, is the union of all maximal
squares in $[0,1]^2$ whose lower-left corner is in $P$, and are empty
from points of $P$ in their interior. It is clear that
$W_\textup{sq}(P)$ contains all squares of a lower-left
anchored square packing for $P$, and we have
$\area(\opt)\geq \frac{1}{9}\, \area(W_\textup{sq}(P))$ by Theorem~\ref{thm:1/9-ll}.
By adjusting the resolution of the candidate squares, we obtain a PTAS
for the maximum area lower-left anchored square packing problem.

Similarly, we obtain a QPTAS for the maximum area lower-left anchored rectangle problem.
Consequently, for $n$ anchors, an $(1-\eps)$-approximation for the
lower-left anchored square packing can be computed in time
$n^{O(1/\eps)}$; and for the lower-left anchored rectangle packing
in time $\exp({\rm poly}(\log n/\eps))$.

\section{Conclusion} \label{sec:conclusion}

We conclude with a few open problems:

\begin{enumerate} \itemsep 1pt
\item Is the problem of computing the maximum-area anchored rectangle
    (respectively, square) packing NP-hard?
\item Is there a polynomial-time approximation scheme for the problem
of computing an anchored rectangle packing of maximum area?
\item What lower bound on $A(n)$ can be obtained by extending
  Lemma~\ref{lem:rectangle} concerning rectangles from $2$ to $3$
  points? Is there a short proof of Lemma~\ref{lem:rectangle}?
\item Does Algorithm~\ref{alg:sq} for computing an anchored square packing
      of maximum area achieve a ratio of $1/4$? By Theorem~\ref{thm:9/47}
      and the construction in Fig.~\ref{fig:alg-sq}, the approximation ratio is
      between $9/47=1/5.22\ldots$ and $1/4$. Improvements beyond the
      $1/5$ ratio are particularly exciting.
\item Is the reach of anchored squares always at least $1/2$?
(Does $\area(R_\textup{sq}(P)) \geq \frac12$ hold for every nonempty set~$P$?)
\later{\item Can Algorithm~\ref{alg:sq} for computing an anchored square
  packing be implemented in $O(n \log n)$ time for $n$ points?
  } 
\item Is $A(n) =\frac{2}{3}$? Is $A_\textup{sq}(n)=\frac{7}{27}$?
\item What upper and lower bounds on $A(n)$ and $A_\textup{sq}(n)$ can
    be established in higher dimensions?
\item A natural variant of anchored squares is one where the anchors
    must be the centers of the squares. What approximation can be
    obtained in this case?
\end{enumerate}

\paragraph{Acknowledgment.} The authors are thankful to Ren\'e Sitters for
constructive feedback on the problems discussed in this paper. In particular,
the preliminary approximation ratio $1/6$ in Theorem~\ref{thm:1/6} incorporates
one of his ideas.

\appendix

\section{Proof of Lemma~\ref{lem:rectangle}} \label{sec:rectangle}

We have $P=\{p_1,p_2\}$ with $p_i=(x_i,y_i)$ for $i\in \{1,2\}$, and $y_1=0$.
We can assume that $x_1 \leq x_2$ by applying a reflection with respect to
a vertical line if necessary.
Consider the grid induced by the two points in $[0,1]^2$, consisting of six
rectangles $r_i$, $i=1,\ldots ,6$, as shown in Fig.~\ref{fig:8packings}.
Put $a_i=\area(r_i)$, $i=1,\ldots ,6$. As a shorthand notation, a union of
rectangles, $r_{i_1} \cup \ldots \cup r_{i_k}$ is denoted $r_{i_1\ldots i_k}$,
and $a_{i_1\ldots i_k} = \area(r_{i_1\ldots i_k}) = a_{i_1}+\ldots +a_{i_k}$
denotes its area.

It is not difficult to show that every maximal rectangle packing for $P=\{p_1,p_2\}$
is one of the eight packings in Fig.~\ref{fig:8packings} (although, we
do not use this fact in the remainder of the proof).
Specifically, $p_1$ is a lower-left or lower-right corner of a rectangle.
If $p_1$ is the lower-right corner of a rectangle of height 1, then $p_2$
can be the anchor of any of $r_2$, $r_2$, $r_5$, and $r_6$, producing the
packings $r_{124}$, $r_{134}$, $r_{145}$, and $r_{146}$. If $p_1$ is the
lower-left corner of a rectangle of height 1, then $p_2$ can be the anchor
of $r_3$ or $r_6$, producing $r_{235}$ and $r_{256}$. If $p_1$ is the lower-left
corner of a rectangle of height $y_2$, then $p_2$ can be the anchor of $r_{12}$
or $r_3$, which yields the packings $r_{1256}$ and $r_{356}$.
If $p_1$ is the lower-right corner of a rectangle of height $y_2$,
then $p_2$ must be the anchor of $r_1$, and we obtain $r_{124}$ again.
\begin{figure}[htbp]
\centerline{\epsfxsize=0.7\linewidth \epsffile{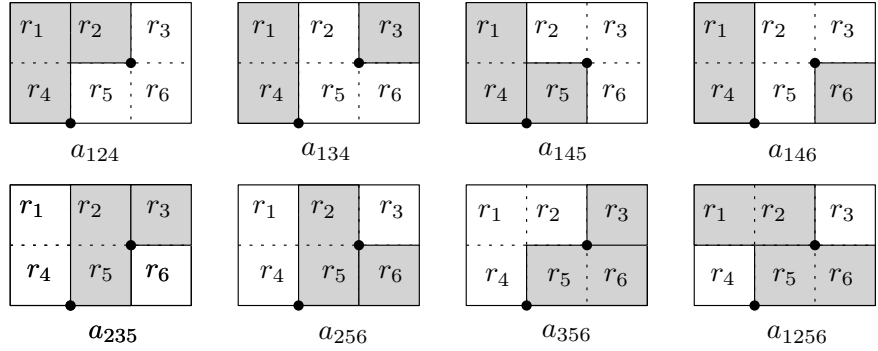}}
\caption{Eight possible anchored rectangle packings for $p_1$ and $p_2$.}
\label{fig:8packings}
\end{figure}

The area of each of the eight packing in Fig.~\ref{fig:8packings} is a
quadratic function in the variables $x_1,x_2,y_2$ over the
domain $$D=\{(x_1,x_2,y_2)\in \mathbb{R}^3: 0\leq x_1\leq x_2\leq 1
\mbox{
  \rm and }0\leq y_2\leq 1\}.$$
We wish to find the minimum $A$ of the upper envelope of these eight
3-variate functions over the domain $D$.
Observe that
\begin{equation} \label{eq:5}
a_{1256} + a_{134} = 1 + a_1.
\end{equation}
In particular, if $a_1 \geq 1/6$ holds, the desired lower bound, $7/12$, follows.
For the case analysis below, we also use the following fact.

\medskip
\emph{Fact 1.
Consider the quadratic equation $12 \tau^2 -7\tau +1=0$. Since its
roots are $\tau_1=1/4$ and $\tau_2=1/3$, we have
$12 \tau^2 -7\tau +1 \leq 0$ for $\tau \in [\tau_1,\tau_2]$. }

\paragraph{Case analysis.}
We distinguish six cases covering the range of $x_1$: $0 \leq x_1 \leq 1/2$.

\medskip
\emph{Case 1:} $x_1 \leq 1/4$. If $y_2 \geq 1/3$ then
$$ \max(a_{356},a_{1256}) \geq (1-x_1) y_2 + \frac{1}{2} (1-y_2) \geq
\frac{3}{4} y_2 + \frac{1}{2} (1-y_2) = \frac{1}{2} + \frac{y_2}{4}
\geq \frac{1}{2} + \frac{1}{12} = \frac{7}{12}, $$
as required.
If $y_2 \leq 1/3$, we distinguish two subcases:

If $x_2 \leq 1/2$, then select rectangle $r_3$ anchored at $p_2$,
and one of $r_{14}$ and $r_{25}$ anchored at $p_1$. If follows that
$$\max(a_{134},a_{235})\geq \frac{x_2}{2} + \frac{2}{3} (1-x_2)
=\frac{2}{3} - \frac{x_2}{6} \geq \frac{2}{3} - \frac{1}{12} = \frac{7}{12}, $$
as required.

If $x_2 \geq 1/2$, then
$$ a_{235} = (1-x_1) - a_6 = (1-x_1) - (1-x_2) y_2
\geq \frac{3}{4} - \frac{1}{2} \cdot \frac{1}{3}=
\frac{7}{12}, $$
as required.

\medskip
\emph{Case 2:} $1/4 \leq x_1 \leq 1/3$.
If $y_2 \geq 1/2$, we have
$$ \max(a_{356},a_{1256}) \geq (1-x_1) y_2 + \frac{1}{2} (1-y_2) \geq
\frac{2}{3} y_2 + \frac{1}{2} (1-y_2) = \frac{1}{2} + \frac{y_2}{6}
\geq \frac{1}{2} + \frac{1}{12} = \frac{7}{12}, $$
as required.

If $y_2 \leq 1/3$, we have
$$ a_1 = x_1 (1-y_2) \geq \frac{1}{4} \cdot \frac{2}{3} = \frac{1}{6}, $$
and the lower bound follows from~\eqref{eq:5}.

If $1/3 \leq y_2 \leq 1/2$, we distinguish two subcases:

\medskip
\emph{Case 2.1:} $x_2 \leq 1/2$.

If $\frac13 \leq y_2 \leq \frac{1}{6(1 -2x_1)}$, then
$$ a_1 = x_1 (1-y_2) \geq x_1 \left( 1 - \frac{1}{6(1 -2x_1)} \right)=
\frac{x_1(5-12 x_1)}{6(1-2x_1)} \geq \frac{1}{6}, $$
and the lower bound follows from~\eqref{eq:5} (due to Fact 1).

If $\frac{1}{6(1 -2x_1)} \leq y_2 \leq \frac12$, then
\begin{align*}
a_{356} &= (1-x_2) + (x_2 -x_1) y_2 \geq
(1-x_2) + (x_2 -x_1) \frac{1}{6(1 -2x_1)} \\
&= \frac{6-13x_1 -5x_2+ 12 x_1 x_2}{6(1-2x_1)} \geq \frac{7}{12},
\end{align*}
as required. The last inequality above is
equivalent to
$$ 5(1-2x_2) \geq 12 x_1 (1-2x_2), $$
which holds for $x_1 \leq 5/12$.

\medskip
\emph{Case 2.2:} $x_2 \geq 1/2$.

If $\frac13 \leq y_2 \leq 1-\frac{1}{6x_1}$, then
$$ a_1 = x_1 (1-y_2) \geq x_1 \, \frac{1}{6x_1} = \frac{1}{6}, $$
and the lower bound follows from~\eqref{eq:5}.

If $1-\frac{1}{6x_1} \leq y_2 \leq \frac12$, then
\begin{align*}
a_{1256} &= (1-x_1) y_2 + x_2 (1-y_2) =
x_2 + y_2 (1-x_1 -x_2) \geq
x_2 + \frac{6x_1 -1}{6x_1} (1-x_1 -x_2) \\
&=\frac{6x_1 x_2 +6x_1 -6 x_1^2 -6x_1 x_2 -1 + x_1 + x_2}{6x_1}
\geq \frac{7x_1 -6 x_1^2 -1/2}{6x_1} \geq \frac{7}{12},
\end{align*}
as required.
Indeed, the last inequality above is equivalent to
$12 x_1^2 -7x_1 +1 \leq 0$, which holds since
$x_1 \in [1/4,1/3]$ and according to Fact 1.

\medskip
\emph{Case 3:} $1/3 \leq x_1 \leq 2/5$.
If $y_2 \leq 1/2$, then
$$ a_1 = x_1 (1-y_2) \geq \frac{1}{3} \cdot \frac{1}{2} = \frac{1}{6}, $$
and the lower bound follows from~\eqref{eq:5}.
If $y_2 \geq 1/2$, we distinguish two subcases:

\medskip
\emph{Case 3.1:} $x_2 \leq 3x_1/2$. Then
\begin{align*}
a_{146} &=x_1 + (1-x_2) y_2 \geq x_1 + \left( 1-\frac{3x_1}{2} \right) y_2 \\
&\geq  x_1 + \frac12 \left( 1-\frac{3x_1}{2} \right) =
\frac{1}{2} + \frac{x_1}{4} \geq \frac{1}{2} + \frac{1}{12}=
\frac{7}{12},
\end{align*}
as required.

\medskip
\emph{Case 3.2:} $x_2 \geq 3x_1/2$. Then
\begin{align*}
a_{1256} &= (1-x_1) y_2 + x_2 (1-y_2) \geq
(1-x_1) y_2 + \frac{3x_1}{2} (1-y_2) \\
&= y_2 + \frac{3x_1}{2} - \frac{5 x_1 y_2}{2} =
\frac{3x_1}{2} + y_2 \left( 1-\frac{5x_1}{2} \right) \\
&\geq  \frac{3x_1}{2} + \frac{1}{2}  \left( 1-\frac{5x_1}{2} \right)
= \frac{1}{2} + \frac{x_1}{4} \geq \frac{1}{2} + \frac{1}{12}=
\frac{7}{12},
\end{align*}
as required.

\medskip
\emph{Case 4:} $2/5 \leq x_1 \leq 5/12$.
Observe that $x_1 \geq 2/5 > 7/19$, and consequently,
$19x_1/7 -1> 0$, as used below in Case 2.4.2. If $y_2 \leq 7/12$, then
$$ a_1 = x_1 (1-y_2) \geq \frac{2}{5} \cdot \frac{5}{12} = \frac{1}{6}, $$
and the lower bound follows from~\eqref{eq:5}.
If $y_2 \geq 7/12$, we distinguish two subcases:

\medskip
\emph{Case 4.1:} $x_2 \leq 12x_1/7$. Then
\begin{align*}
a_{146} &=x_1 + (1-x_2) y_2 \geq x_1 + \left( 1-\frac{12x_1}{7} \right) y_2 \\
&\geq  x_1 + \frac{7}{12} \left( 1-\frac{12x_1}{7} \right) =
\frac{7}{12},
\end{align*}
as required.

\medskip
\emph{Case 4.2:} $x_2 \geq 12x_1/7$. Then
\begin{align*}
a_{1256} &= (1-x_1) y_2 + x_2 (1-y_2) \geq
(1-x_1) y_2 + \frac{12 x_1}{7} (1-y_2) \\
&= \frac{12 x_1}{7} - y_2 \left(\frac{19 x_1}{7} -1 \right)
\geq \frac{12 x_1}{7} - \left(\frac{19 x_1}{7} -1 \right)\\
&= 1-x_1 \geq \frac{7}{12},
\end{align*}
as required.

\medskip
\emph{Case 5:} $5/12 \leq x_1 \leq 4/9$. If $y_2 \leq 3/5$, then
$$ a_1 = x_1 (1-y_2) \geq \frac{5}{12} \cdot \frac{2}{5} = \frac{1}{6}, $$
and the lower bound follows from~\eqref{eq:5}.
If $y_2 \geq 3/5$, we distinguish two subcases:

If $x_2 \leq 13/18$, then
$$ a_{146} =x_1 + (1-x_2) y_2 \geq \frac{5}{12} +
\frac{5}{18} \cdot \frac{3}{5} = \frac{7}{12}, $$
as required.

If $x_2 \geq 13/18$ and $y_2 \leq 5/6$, then
$$ a_{1256} = (1-x_1) y_2 + x_2 (1-y_2) \geq
\frac{5}{9} y_2 + \frac{13}{18} (1-y_2) =
\frac{13}{18} -\frac{y_2}{6} \geq
\frac{13}{18} -\frac{5}{36} = \frac{7}{12}, $$
as required.

If $x_2 \geq 13/18$ and $y_2 \geq 5/6$, then select $r_{45}$ anchored at
$p_2$ and a zero-area rectangle anchored at $p_1$:
$$ a_{45} = x_2 y_2 \geq \frac{13}{18} \cdot \frac{5}{6} =
\frac{65}{108} > \frac{63}{108} = \frac{7}{12}, $$
as required.

\medskip
\emph{Case 6:} $x_1 \geq 4/9$. Select $r_{14}$ and the largest
of the $4$ empty rectangles anchored at $p_2$; the total area is
$$ A \geq x_1 + \frac{1-x_1}{4} = \frac{1+3 x_1}{4} \geq \frac{7}{12}, $$
as required.

This completes the case analysis and thereby the proof of the lower
bound. To see that this lower bound is the best possible,
put $x_1=1/3$ and $x_2=y_2=1/2$ and verify that no system of anchored
rectangles covers an area larger than $7/12$.
\qed

\section{Proof of Theorem~\ref{thm:5/32} (continued)} \label{app:5/32}

The same general idea applies to all cases (as well as the notation in the figures):
two adjacent axis-aligned rectangles are specified, one empty of points and the other
containing points, that allows the selection of an empty square of a suitable size
anchored at an extreme point of the nonemepty rectangle.
The relevant subcases are illustrated in the figures
(nonempty squares or subsquares are shaded) and the compact arguments
are summarized in the tables. Any missing case is either similar or
obtained by reflection of a present case with respect to a line.
In each case:
\begin{itemize} \itemsep 0pt
\item [-] we consider a quadtree subdivision up to a certain level;
\item [-] in nonempty squares in the subdivision (shaded in the figures),
   some anchored square packing covers $c$ times the area of
   corresponding squares by induction;
\item [-] we find new anchored squares of area $1/4$, $1/16$,
  $1/64$, or other, (in bold lines), which partially cover some empty
  and nonempty squares of the subdivision;
\item [-] the new anchored squares supersede the previous packing (\ie, obtained
   by induction) in the nonempty squares of the subdivision that they intersect.
\end{itemize}

We repeatedly make use of the following easy fact:
\begin{observation}\label{obs:adjacent2}
  Let $R_1,R_2 \subseteq U$ be two axis-aligned interior-disjoint rectangles
  of sizes $a_1 \times b$ and $a_2 \times b$,  respectively, where $a_2 \geq b/2$,
  sharing a common edge of length $b$.
  If $R_1 \cap P \neq \emptyset$ and $\inter(R_2)\cap P = \emptyset$,
  then $R_1 \cup R_2$ contains an anchored empty square whose area is at least $b^2/4$.
\end{observation}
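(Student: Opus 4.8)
The plan is to mimic the proof of Observation~\ref{obs:adjacent1}, anchoring a square of side $b/2$ at the \emph{rightmost} point of $P$ in the nonempty rectangle and letting it reach into the empty rectangle. First I would normalize the configuration: by a rotation and a reflection we may assume that the shared edge of length $b$ is vertical, that the nonempty rectangle is $R_1=[0,a_1]\times[0,b]$ on the left and the empty one is $R_2=[a_1,a_1+a_2]\times[0,b]$ on the right, so that $b$ is the common height while $a_1,a_2$ are the widths. Since $R_1\cap P\neq\emptyset$, let $p=(x_p,y_p)$ be a point of $P\cap R_1$ of maximum $x$-coordinate (ties broken arbitrarily); then $x_p\le a_1$, and by maximality no point of $P\cap R_1$ lies strictly to the right of $p$.

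Next I would exhibit the square. If $y_p\le b/2$, take $s=[x_p,x_p+\tfrac b2]\times[y_p,y_p+\tfrac b2]$, anchored at $p$ as its lower-left corner; if $y_p\ge b/2$, take instead $s=[x_p,x_p+\tfrac b2]\times[y_p-\tfrac b2,y_p]$, anchored at $p$ as its upper-left corner. In either case the vertical extent of $s$ lies in $[0,b]$ by the choice of case, and horizontally $x_p+\tfrac b2\le a_1+\tfrac b2\le a_1+a_2$, using $x_p\le a_1$ together with the hypothesis $a_2\ge b/2$; hence $s\subseteq R_1\cup R_2\subseteq U$ and $\area(s)=(b/2)^2=b^2/4$, as required. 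It is worth flagging that the assumption $a_2\ge b/2$ is used \emph{precisely} here, to guarantee that a square of side $b/2$ fits horizontally into $R_1\cup R_2$; this is the one place the extra hypothesis (absent from Observation~\ref{obs:adjacent1}) enters.

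The one step needing care---the main, though mild, obstacle---is verifying $\inter(s)\cap P=\emptyset$. I would argue that any point $q\in P$ in $\inter(s)$ satisfies $x(q)>x_p$ and $y(q)\in(0,b)$. The maximality of $p$ then forces $q\notin R_1$, since a point of $P\cap R_1$ with $y$-coordinate in $[0,b]$ would have $x$-coordinate at most $x_p$; consequently $a_1<x(q)$, while $x(q)<x_p+\tfrac b2\le a_1+a_2$ from the horizontal fit, so $q\in\inter(R_2)$, contradicting the hypothesis $\inter(R_2)\cap P=\emptyset$. The only subtlety is the shared edge $x=a_1$: a point of $P$ there lies in $R_1$ and hence has $x$-coordinate $\le x_p$, so it is excluded unless $x_p=a_1$, in which case $s$ sits entirely in $R_2$ and the same emptiness conclusion applies verbatim. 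This shows $s$ is an anchored empty square of area at least $b^2/4$ contained in $R_1\cup R_2$, completing the argument.
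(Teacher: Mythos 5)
Your proposal is correct and follows essentially the same route as the paper's proof: pick the rightmost point $p$ of $P$ in the nonempty rectangle, anchor a side-$b/2$ square at $p$ as lower-left or upper-left corner according to whether $p$ lies in the lower or upper half of $R_1$, and use $a_2\geq b/2$ for the horizontal fit. Your write-up merely makes explicit the containment and emptiness checks (including the boundary case $x=a_1$) that the paper leaves implicit.
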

\begin{proof}
Assume that $R_1$ lies left of $R_2$ and let $p \in P$ be the rightmost point in $R_1$.
If $p$ lies in the lower half-rectangle of $u$ then the square of side $b/2$
whose lower-left anchor is $p$ is empty and has area $b^2/4$.
Similarly, if $p$ lies in the higher half-rectangle of $R_1$ then the square
of side $b/2$ whose upper-left anchor is $p$ is empty and has area $b^2/4$.
\end{proof}

To specify anchor points, we use the following notation. Let $ \emptyset \neq Q \subset P$.
Let $x^-(Q)$ and $x^+(Q)$ denote the leftmost and rightmost points of $Q$, respectively
(these points coincide if $|Q|=1$).
Let $y^-(Q)$ and $y^+(Q)$ denote the lowest and highest points of $Q$, respectively
(these points coincide if $|Q|=1$).

We next proceed with the details.
In Cases 1 and 2, subdivide $U_i$, $i=1,\ldots,4$ (in Case 1, $U_2$, only) into four
congruent squares $U_{ij}$, $j=1,\ldots,4$, labeled counterclockwise around the center
of $U_i$ according to the subquadrant containing the square.
Then partition $P_i$, $i=1,\ldots,4$ into four subsets $P_{ij}$, $j=1,2,3,4$,
$P_{ij} \subset U_i$ (ties are broken as specified earlier).
We distinguish several subcases depending on the number of empty sets
$P_{ij}$, $i=2$, $j =1,2,3,4$, in the second level of the quadtree subdivision.

\paragraph{Refined analysis of Case 1 ($P_1 =\emptyset$, only).}
Let $\lambda_2 \in \{0,1,2,3\}$ denote the number of empty $P_{2j}$.
The relevant subcases are illustrated in  Fig.~\ref{f10} and summarized
in Table~\ref{tab:case1}. For example, in the first two rows in the table
there are two inductive instances of area $1/4$ and two inductive instances of area $1/16$,
contributing $2c/4$ and $2c/16$, respectively, toward the total covered area.
\begin{figure}[htbp]
\centerline{\epsfxsize=4.5in \epsffile{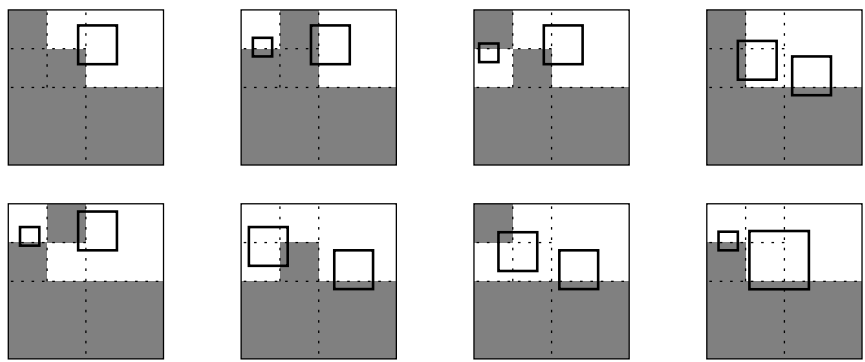}}
\caption{Case 1: the number of empty $P_{2j}$ is in $\{0,1,2,3\}$;
rows: $1,2$; left to right: $a,b,c,d$.}
\label{f10}
\end{figure}
\begin {table} [htbp]
\centering
\begin{tabular}{|c|c|c|c|c|}
\hline
Figure & $\lambda_2$ & Anchor(s) & Inductive/direct proof & Resulting $c$ \\
\hline
\hline
                       & $0$ & $x^+(P_{21} \cup P_{24})$ &
$c= \frac{2c}{4} + \frac{2c}{16} + \frac{1}{16}$ & $\frac{1}{6} > \frac{5}{32} $ \\ \hline
\ref{f10}\,(1a) & $1$ & $x^+(P_{21} \cup P_{24})$ &
$c= \frac{2c}{4} + \frac{2c}{16} + \frac{1}{16}$ & $\frac{1}{6} > \frac{5}{32} $ \\ \hline
\ref{f10}\,(1b,1c) & $2$ & $y^+(P_{23})$, $x^+(P_{21} \cup P_{24})$ &
$c= \frac{2c}{4} + \frac{1}{16} + \frac{1}{64}$ & $\frac{5}{32}$ \\ \hline
\ref{f10}\,(1d) & $2$ & $x^+(P_{22} \cup P_{23})$, $y^+(P_4)$ &
$c= \frac{c}{4} + \frac{2}{16}$ & $\frac{1}{6} > \frac{5}{32} $ \\ \hline
\ref{f10}\,(2a) & $2$ & $y^+(P_{23})$, $x^+(P_{21} \cup P_{24})$ &
$c= \frac{2c}{4} + \frac{1}{16} + \frac{1}{64}$ & $\frac{5}{32}$ \\ \hline
\ref{f10}\,(2b,2c) & $3$ & $x^-(P_{21} \cup P_{24})$, $y^+(P_4)$ &
$c= \frac{c}{4} + \frac{2}{16}$ & $\frac{1}{6} > \frac{5}{32} $ \\ \hline
\ref{f10}\,(2d) & $3$ & $y^+(P_{23})$, $y^+(P_{31} \cup P_{34} \cup P_4)$ &
$c= \frac{9}{64} + \frac{1}{64}$ & $\frac{5}{32}$ \\ \hline
\hline
\end{tabular}
\caption {Inductive/direct proofs for the subcases of Case 1.}
\label {tab:case1}
\end {table}

\vspace{-2\baselineskip}

\paragraph{Refined analysis of Case 2.}
If the two nonempty squares are adjacent, say $P_2, P_3 \neq \emptyset$,
then one of the two squares of side $1/2$, whose lower-left
or upper-left anchor is the rightmost point in $P_2 \cup P_3$ is empty and
contained in $U$; its area is $1/4 > 5/32$, as required.
Assume henceforth that the two nonempty squares lie in opposite quadrants,
\eg, $P_2,P_4 \neq \emptyset$. For $i=2,4$, let $\lambda_i$ denote the number
of empty sets $P_{ij}$; by symmetry, we can assume that $\lambda_2 \leq \lambda_4$;
thus $0 \leq \lambda_2 \leq \lambda_4 \leq 3$. We further distinguish
Cases 2a, 2b, 2c, 2d, as determined by the different values of $\lambda_2$.

\smallskip\noindent{\bf Case 2a: $\lambda_2=0$.}
The relevant subcases are illustrated in  Fig.~\ref{f11} and summarized
in Table~\ref{tab:case2a}. Fig.~\ref{f11}\,(3b) and Fig.~\ref{f11}\,(3c)
correspond to $|P_{42}| \geq 2$ and $|P_{42}|=1$, respectively.
\begin{figure}[htbp]
\centerline{\epsfxsize=4.5in \epsffile{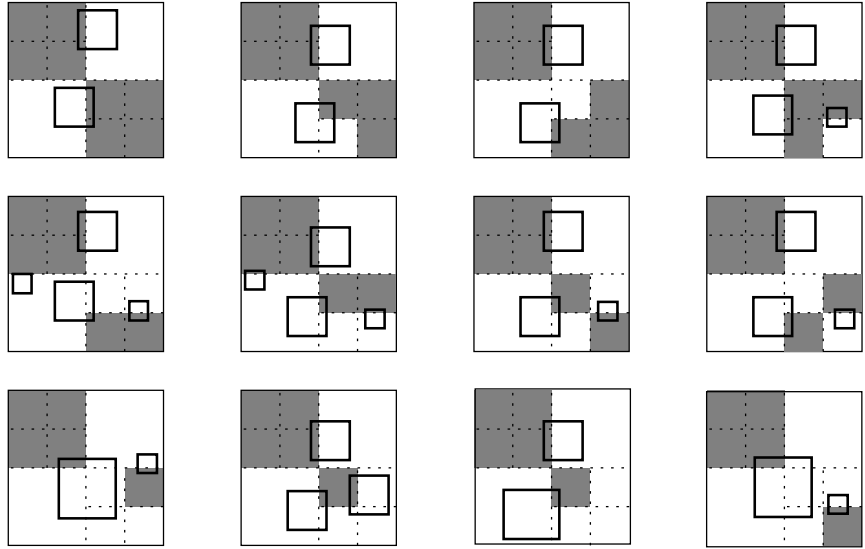}}
\caption{Case 2a: the number of empty $P_{2j}$ is $0$; rows: $1,2,3$;
  left to right: $a,b,c,d$.} 
\label{f11}
\end{figure}
\begin {table} [htbp]
\centering
\begin{tabular}{|c|c|c|c|c|c|}
\hline
Figure & $\lambda_2$ & $\lambda_4$ & Anchor(s) & Inductive/direct proof & Resulting $c$ \\
\hline
\hline
\ref{f11}\,(1a) & $0$ & $0$ & $x^+(P_{21} \cup P_{24})$, $x^-(P_{42} \cup P_{43})$ &
$c= \frac{4c}{16} + \frac{2}{16}$ & $\frac{1}{6} > \frac{5}{32} $ \\ \hline
\ref{f11}\,(1b,1c) & $0$ & $1$ & $x^+(P_{21} \cup P_{24})$, $x^-(P_{42} \cup P_{43})$ &
$c= \frac{4c}{16} + \frac{2}{16}$ & $\frac{1}{6} > \frac{5}{32} $ \\ \hline
\ref{f11}\,(1d) & $0$ & $1$ & $x^+(P_{21} \cup P_{24}), x^-(P_{42} \cup P_{43}), \ldots$ &
$c= \frac{2c}{16} + \frac{2}{16} + \frac{1}{64}$ & $\frac{9}{56} > \frac{5}{32}$ \\ \hline
\ref{f11}\,(2a,2b) & $0$ & $2$ & $x^+(P_{21} \cup P_{24}), y^-(P_{23}), \ldots$ &
$c= \frac{2}{16} + \frac{2}{64}$ & $\frac{5}{32}$ \\ \hline
\ref{f11}\,(2c,2d) & $0$ & $2$ & $x^+(P_{21} \cup P_{24}), x^-(P_{42} \cup P_{43}), \ldots$ &
$c= \frac{2c}{16} + \frac{2}{16} + \frac{1}{64}$ & $\frac{9}{56} > \frac{5}{32}$ \\ \hline
\ref{f11}\,(3a) & $0$ & $3$ & $y^-(P_{23} \cup P_{24}), y^+(P_{41})$ &
$c= \frac{9}{64} + \frac{1}{64}$ & $\frac{5}{32}$ \\ \hline
\ref{f11}\,(3b) & $0$ & $3$ & $x^+(P_{21} \cup P_{24}), x^-(P_{42} \cup P_{43}), \ldots$ &
$c= \frac{3}{16}$ & $\frac{3}{16} > \frac{5}{32}$ \\ \hline
\ref{f11}\,(3c) & $0$ & $3$ & $x^+(P_{21} \cup P_{24}), x^-(P_{42} \cup P_{43})$ &
$c \geq \frac{1}{16} + \frac{25}{256} $ &
$\frac{41}{256} > \frac{5}{32}$ \\ \hline
\ref{f11}\,(3d) & $0$ & $3$ & $y^-(P_{23} \cup P_{24}), y^+(P_{44})$ &
$c \geq \frac{9}{64} + \frac{1}{64}$ & $\geq \frac{5}{32}$ \\ \hline
\hline
\end{tabular}
\caption {Inductive/direct proofs for Case 2a.}
\label {tab:case2a}
\end {table}

\smallskip\noindent{\bf Case 2b: $\lambda_2=1$.}
The relevant subcases for $\lambda_4=1,2,3$ are are illustrated in
Figures~\ref{f12},~\ref{f13},~\ref{f14}, respectively; the proofs are
summarized in Table~\ref{tab:case2b}.

For example: if $\lambda_4=3$, so that $P_{42} \neq \emptyset$
as in Fig.~\ref{f14}\,(1b,2b,3b),
we proceed depending on whether $|P_{42}| \geq 2$ or $|P_{42}|=1$.
If $|P_{42}| \geq 2$ we proceed as in Fig.~\ref{f14}\,(1b);
if $|P_{42}|=1$ and the corresponding point, say $p$, lies
in $U_{422} \cup U_{423}$, then $p$ is set as a left anchor
and one can get an empty covered area at least $9/64 + 1/64 =5/32$;
if $|P_{42}|=1$ and the corresponding point, say $p$, lies
in $U_{421} \cup U_{424}$, then $p$ is set as an upper-left anchor
and the lowest point in $P_{24}$ is set as an upper anchor
as shown in Fig.~\ref{f14}\,(2b,3b), with an empty covered area
at least $25/256 + 1/16 = 41/256 > 5/32$.

\begin{figure}[htbp]
\centerline{\epsfxsize=4.5in \epsffile{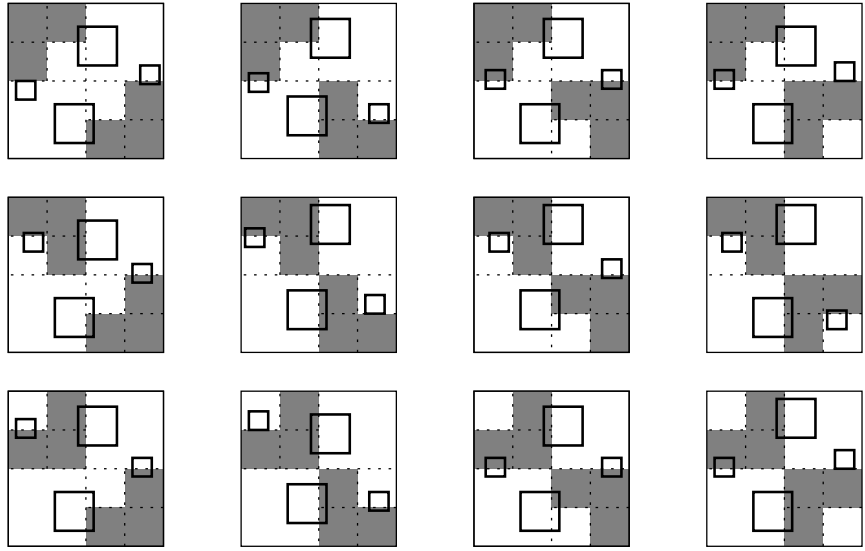}}
\caption{Case 2b with one empty $P_{2j}$ and one empty $P_{4j}$;
rows: $1,2,3$; left to right: $a,b,c,d$.}
\label{f12}
\end{figure}

\begin{figure}[htbp]
\centerline{\epsfxsize=4.5in \epsffile{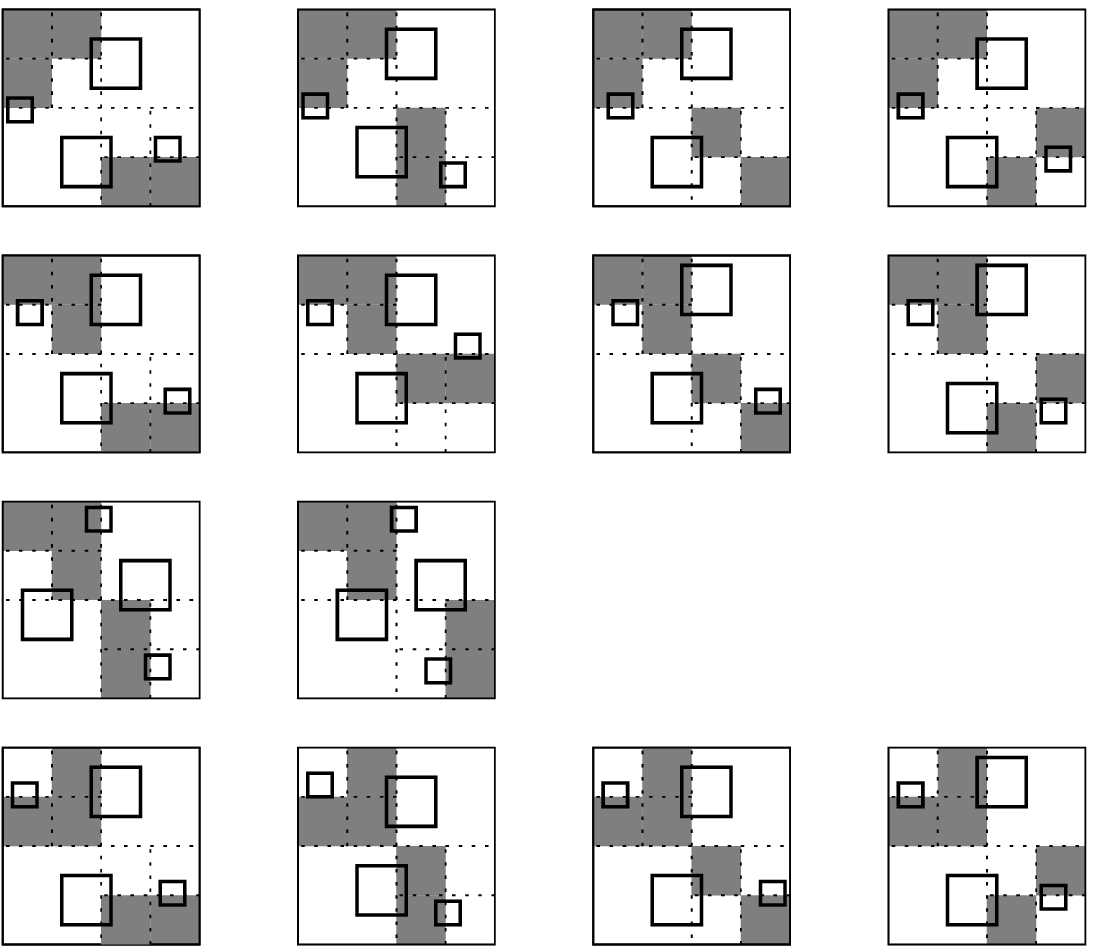}}
\caption{Case 2b with one empty $P_{2j}$ and two empty $P_{4j}$;
rows: $1,2,3,4$; left to right: $a,b,c,d$.}
\label{f13}
\end{figure}

\begin{figure}[htbp]
\centerline{\epsfxsize=4.5in \epsffile{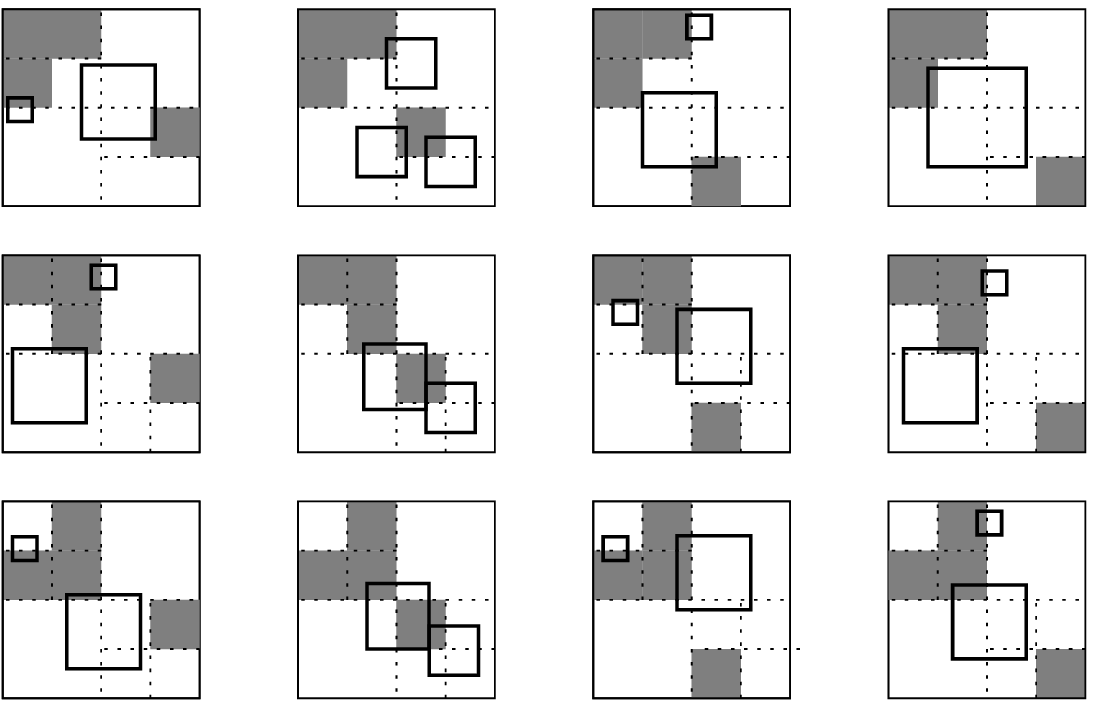}}
\caption{Case 2b with one empty $P_{2j}$ and three empty $P_{4j}$;
rows: $1,2,3$; left to right: $a,b,c,d$.}
\label{f14}
\end{figure}

\begin {table} [htbp]
\centering
\begin{tabular}{|c|c|c|c|c|c|}
\hline
Figure & $\lambda_2$ & $\lambda_4$ &  Anchor(s) & Direct proof & Resulting $c$ \\
\hline
\hline
\ref{f12}\,(1a,1b,1c,1d) & $1$ & $1$ & $x^+(P_{21} \cup P_{24}), y^-(P_{23}), \ldots$ &
$c\geq \frac{2}{16} + \frac{2}{64}$ & $ \geq \frac{5}{32}$ \\ \hline
\ref{f12}\,(2a,2b,2c,2d) & $1$ & $1$ & $x^+(P_{21} \cup P_{24}), y^-(P_{22}), \ldots$ &
$c\geq \frac{2}{16} + \frac{2}{64}$ & $ \geq \frac{5}{32}$ \\ \hline
\ref{f12}\,(3a,3b,3c,3d) & $1$ & $1$ & $x^+(P_{21} \cup P_{24}), y^+(P_{23}), \ldots$ &
$c\geq \frac{2}{16} + \frac{2}{64}$ & $ \geq \frac{5}{32}$ \\ \hline
\ref{f13}\,(1a through 4d)& $1$ & $2$ & &
$c\geq \frac{2}{16} + \frac{2}{64}$ & $ \geq \frac{5}{32}$ \\ \hline
\ref{f14}\,(1a,1c) & $1$ & $3$ & &
$c \geq \frac{9}{64} + \frac{1}{64}$ & $\geq \frac{5}{32}$ \\ \hline
\ref{f14}\,(1b) & $1$ & $3$ & &
$c \geq \frac{25}{256} + \frac{1}{16}$ & $\geq \frac{5}{32}$ \\ \hline
\ref{f14}\,(1d) & $1$ & $3$ & &
$c \geq \frac{1}{4}$ & $\geq \frac{1}{4} > \frac{5}{32} $ \\ \hline
\ref{f14}\,(2a,2c,2d) & $1$ & $3$ & &
$c \geq \frac{9}{64} + \frac{1}{64}$ & $\geq \frac{5}{32}$ \\ \hline
\ref{f14}\,(2b) & $1$ & $3$ & &
$c \geq \frac{25}{256} + \frac{1}{16}$ & $\geq \frac{5}{32}$ \\ \hline
\ref{f14}\,(3a,3c,3d) & $1$ & $3$ & &
$c \geq \frac{9}{64} + \frac{1}{64}$ & $\geq \frac{5}{32}$ \\ \hline
\ref{f14}\,(3b) & $1$ & $3$ & &
$c \geq \frac{25}{256} + \frac{1}{16}$ & $\geq \frac{5}{32}$ \\ \hline\hline
\end{tabular}
\caption {Direct proofs for Case 2b.}
\label {tab:case2b}
\end {table}

\smallskip\noindent{\bf Case 2c: $\lambda_2=2$.}
The relevant subcases are illustrated in  Fig.~\ref{f15} and summarized
in Table~\ref{tab:case2c2d}.
Assume first that $\lambda_4=2$.
If $P_{22}$ and $P_{24}$ are nonempty, as in Fig.~\ref{f15}\,(1a,1b),
we proceed according to whether the two nonempty $P_{4j}$
are in adjacent vertical strips of width $1/4$ or in the same vertical strip
of width $1/4$: in the former case, we select empty squares as in Fig.~\ref{f15}\,(1a)
while in the latter case, we select empty squares as in Fig.~\ref{f15}\,(1b).
If $P_{21}$ and $P_{23}$ are nonempty, as in Fig.~\ref{f15}\,(1c,1d),
we proceed analogously.
If $P_{22}$ and $P_{23}$ are nonempty, as in Fig.~\ref{f15}\,(2a,2b),
it suffices to consider the cases when the two nonempty squares $P_{4j}$
are in the same strip of width $1/4$ (the complementary cases have been
covered in row 1); then the empty covered area is at least $9/64 + 1/64 =5/32$, or
at least $3/16 > 5/32$, as required.
If $P_{21}$ and $P_{24}$ are nonempty, as in Fig.~\ref{f15}\,(2c,2d),
it again suffices to consider the cases when the two nonempty squares $P_{4j}$
are in the same strip of width $1/4$, and then the empty covered area is at least
$3/16 > 5/32$, as required.

\begin{figure}[htbp]
\centerline{\epsfxsize=4.5in \epsffile{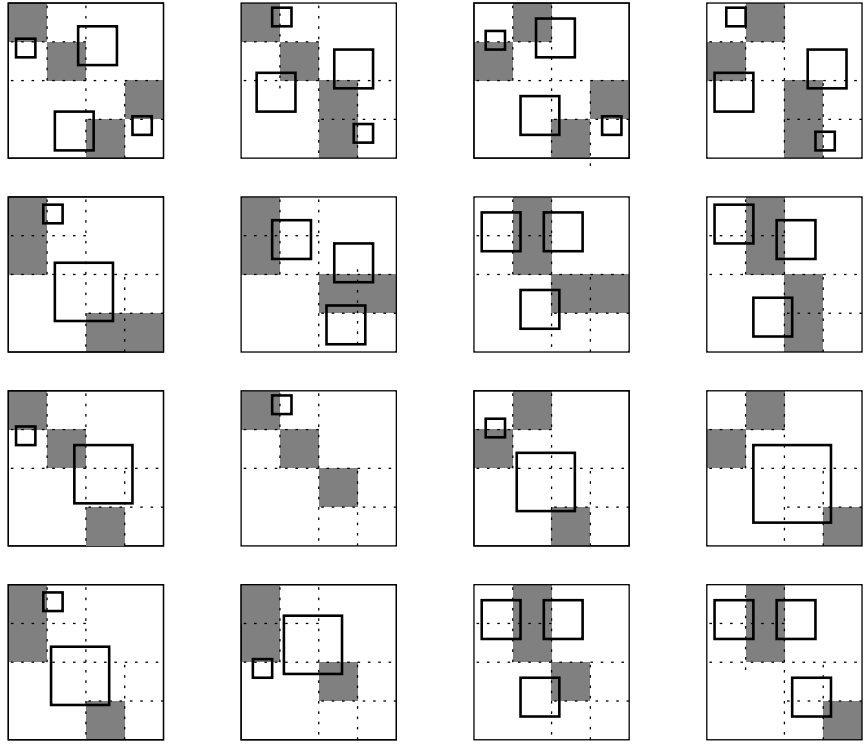}}
\caption{Case 2c: two empty $P_{2j}$ and two or three empty $P_{4j}$;
rows: $1,2,3,4$; left to right: $a,b,c,d$.}
\label{f15}
\end{figure}

Assume now that $\lambda_4=3$.
If $P_{22}$ and $P_{24}$ are nonempty, as in Fig.~\ref{f15}\,(3a,3b),
we proceed as follows\footnote{Observe that our quadtree decomposition analysis
``matches'' the lower bound construction in Proposition~\ref{prop:7/27} in the sense that
  the hardest case to deal with in our proof is the one in which the
  lower bound construction fits in.}: if $P_{41}$, $P_{43}$ or $P_{44}$ are nonempty,
the empty covered area is at least $9/64 + 1/64 =5/32$;
if $P_{42}$ is nonempty, consider $|P_{22}|$, $|P_{24}|$, and $|P_{42}|$.
If $|P_{24}| \geq 2$ or $|P_{42}| \geq 2$,
select two empty squares of area $1/16$ anchored at points of the
respective ``large'' set, and another empty square of area $1/16$ anchored
at the point of the other set (among $P_{24}$ and $P_{42}$);
we thereby have a total empty area $3/16 > 5/32$, as required.
If $|P_{24}|=|P_{42}|=1$, the empty covered area is at least $9/64 + 1/64 =5/32$,
unless the point in $P_{24}$, say $p$, lies in $U_{242}$ and
the point in $P_{42}$, say $q$, lies in $U_{424}$;
then an empty square whose upper-left anchor is $p$ and side $x(q)-x(p)$
and an empty square whose lower-left anchor is $q$ and side $1-x(q)$
cover an area at least
$2 \left(\frac{5}{16} \right)^2 = \frac{25}{128} > \frac{5}{32}$, as required.
If $P_{21}$ and $P_{23}$ are nonempty, as in Fig.~\ref{f15}\,(3c,3d),
the empty covered area is at least $9/64 + 1/64 =5/32$, or at least $1/4$, as required.
If $P_{22}$ and $P_{23}$ are nonempty, as in Fig.~\ref{f15}\,(4a,4b),
there are two possibilities: $P_{4j}$ is nonempty where $j \in \{1,2\}$,
or where $j \in \{3,4\}$; each case is dealt with easily.
If $P_{21}$ and $P_{24}$ are nonempty, as in Fig.~\ref{f15}\,(4c,4d),
each case is dealt with easily.

\smallskip\noindent{\bf Case 2d: $\lambda_2=3$}.
Since $\lambda_2 \leq \lambda_4$, we have $\lambda_4=3$.
The relevant subcases are illustrated in  Fig.~\ref{f16} and summarized
in Table~\ref{tab:case2c2d}.
If $P_{21}$ or $P_{22}$ or $P_{23}$ are nonempty, as in Fig.~\ref{f16}\,(1a,1b),
the empty covered area is at least $9/64 + 1/64 = 5/32$, as required.
If $P_{24}$ is nonempty, as in Fig.~\ref{f16}\,(1c), at least one
of the nonempty sets has at least $2$ points, and so the empty covered area
is at least $3/16 > 5/32$, as required.

\begin{figure}[htbp]
\centerline{\epsfxsize=3.3in \epsffile{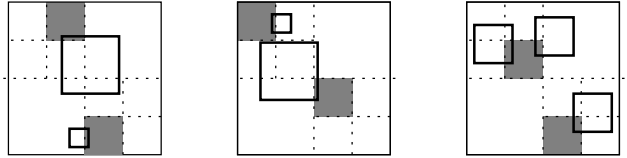}}
\caption{Case 2d: three empty $P_{2j}$ and three empty $P_{4j}$; left to right: $a,b,c$.}
\label{f16}
\end{figure}

\begin {table} [htbp]
\centering
\begin{tabular}{|c|c|c|c|c|}
\hline
Figure & $\lambda_2$ & $\lambda_4$ & Direct proof & Resulting $c$ \\
\hline
\hline
\ref{f15}\,(1a,1b) & $2$ & $2$ & $c \geq \frac{2}{16} + \frac{2}{64}$ &
$ \geq \frac{5}{32}$ \\ \hline
\ref{f15}\,(1c,1d) & $2$ & $2$ & $c \geq \frac{2}{16} + \frac{2}{64}$ &
$ \geq \frac{5}{32}$ \\ \hline
\ref{f15}\,(2a,2b) & $2$ & $2$ & $c \geq \frac{9}{64} + \frac{1}{64}$ &
$ \geq \frac{5}{32}$ \\ \hline
\ref{f15}\,(2c,2d) & $2$ & $2$ & $c \geq \frac{3}{16}$ &
$ \geq \frac{3}{16} > \frac{5}{32}$ \\ \hline
\ref{f15}\,(3c,4a,4b) & $2$ & $3$ & $c \geq \frac{9}{64} + \frac{1}{64} = \frac{5}{32}$ &
$ \geq \frac{5}{32}$ \\ \hline
\ref{f15}\,(3d) & $2$ & $3$ & $c \geq \frac{1}{4} $ &
$ \geq \frac{1}{4} > \frac{5}{32} $ \\ \hline
\ref{f15}\,(4c,4d) & $2$ & $3$ & $c \geq \frac{3}{16}$ &
$ \geq \frac{3}{16} > \frac{5}{32}$ \\ \hline
\ref{f16}\,(a,b) & $3$ & $3$ & $c \geq \frac{9}{64} + \frac{1}{64}$ &
$ \geq \frac{5}{32}$ \\ \hline
\ref{f16}\,(c) & $3$ & $3$ & $c \geq \frac{3}{16}$ &
$ \geq \frac{3}{16} > \frac{5}{32}$ \\ \hline
\hline
\end{tabular}
\caption {Direct proofs for Case 2c and Case 2d.}
\label {tab:case2c2d}
\end {table}

This completes the case analysis and thereby the proof of Theorem~\ref{thm:5/32}.
\qed


\begin{thebibliography}{99}

\bibitem{ACK+11}
A. K.~Abu-Affash, P.~Carmi, M.~J. Katz, and G.~Morgenstern,
Multi cover of a polygon minimizing the sum of areas,
\emph{Int. J. Comput. Geometry \& Appl.} {\bf 21(6)} (2011), 685--698.

\bibitem{AW13}
A.~Adamaszek and A.~Wiese,
Approximation schemes for maximum weight independent set of rectangles,
in \emph{Proc. 54th Sympos. on Found. of Comp. Sci.}, IEEE, 2013.

\bibitem{AW15}
A.~Adamaszek and A.~Wiese,
A quasi-PTAS for the two-dimensional geometric knapsack problem,
in \emph{Proc. 26th ACM-SIAM Sympos. Discrete Algor.}, SIAM, 2015.

\bibitem{AM06}
P.~K.~Agarwal and N.~H.~Mustafa,
Independent set of intersection graphs of convex objects in 2D,
\emph{Comput. Geom.} {\bf 34(2)} (2006), 83--95.

\bibitem{AS87}
A. Aggarwal and S. Suri,
Fast algorithms for computing the largest empty rectangle,
in: \emph{Proc. 3rd Ann. Sympos. Comput. Geometry},
ACM, 1987, pp.~278--290.

\bibitem{Aj73}
M.~Ajtai,
The solution of a problem of T. Rado,
\emph{Bulletin de l'Acad\'emie Polonaise des Sciences,
S\'erie des Sciences Math\'ematiques, Astronomiques et Physiques}
\textbf{21} (1973), 61--63.

\bibitem{A05}
S. Aluru, Quadtrees and octrees, Ch.~19 in
\emph{Handbook of Data Structures and Applications
  (D. P. Mehta and S. Sahni, editors)},
Chapman \& Hall/CRC, 2005.

\bibitem{BT15}
K.~Balas and Cs.~D. T\'oth,
On the number of anchored rectangle packings for a planar point set,
in \emph{Proc. 21st Ann. Internat. Comput. Combin. Conf.},
Springer, 2015, pp.~377--389.

\bibitem{BK14}
N. Bansal and A. Khan,
Improved approximation algorithm for two-dimensional bin packing,
in \emph{Proc. 25th ACM-SIAM Sympos. Discrete Algor.}, SIAM, 2014, pp.~13--25.

\bibitem{BDJ10a}
S.~Bereg, A.~Dumitrescu, and M.~Jiang,
Maximum area independent set in disk intersection graphs,
\emph{Internat. J. Comput. Geom. Appl.}
\textbf{20} (2010), 105--118.

\bibitem{BDJ10b}
S. Bereg, A. Dumitrescu and M. Jiang,
On covering problems of Rado,
\emph{Algorithmica}
\textbf{57} (2010), 538--561.

\bibitem{BVX13}
S. Bhowmick, K. Varadarajan, and S.-K. Xue,
A constant-factor approximation for multi-covering with disks,
in \emph{Proc. 29th ACM Sympos. on Comput. Geom.}, ACM, 2013,  pp.~243--248.

\bibitem{BVX13b}
S. Bhowmick, K. Varadarajan, and S.-K. Xue,
Addendum to ``A constant-factor approximation for multi-covering with disks'',
manuscript, 2014, available
at~\url{http://homepage.cs.uiowa.edu/~kvaradar/papers.html}.

\bibitem{BCK+05}
V. B\'{\i}lo, I. Caragiannis, C. Kaklamanis, and P. Kanellopoulos,
Geometric clustering to minimize the sum of cluster sizes,
in \emph{Proc. European Sympos. Algor.}, LNCS~3669, 2005, pp.~460--471.

\bibitem{CC09}
P. Chalermsook and J. Chuzhoy,
Maximum independent set of rectangles,
in \emph{Proc. 20th ACM-SIAM Sympos. Discrete Algor.}, SIAM, 2009, pp.~892--901.

\bibitem{Cha03}
T.~M. Chan,
Polynomial-time approximation schemes for packing and piercing fat objects,
\emph{J. Algorithms} {\bf 46} (2003), 178--189.

\bibitem{CH12}
T. M. Chan and S. Har-Peled,
Approximation algorithms for maximum independent set of pseudodisks,
\emph{Discrete Comput. Geom.} {\bf 48(2)} (2012), 373--392.

\bibitem{C83}
K. L. Clarkson, Fast algorithms for the all nearest neighbors problem,
in \emph{Proc. 24th Annu. IEEE Sympos. Found. Comput. Sci.}, IEEE, 1983, pp.~226--232

\bibitem{CDL86}
B. Chazelle, R.L. Drysdale, and D.T. Lee,
Computing the largest empty rectangle,
\emph{SIAM J. Comput.} {\bf 15(1)} (1986), 300--315.

\bibitem{DJ13a}
A.~Dumitrescu and M.~Jiang,
On the largest empty axis-parallel box amidst $n$ points,
\emph{Algorithmica} {\bf  66(2)} (2013), 225--248.

\bibitem{DJ13b}
A.~Dumitrescu and M.~Jiang,
Computational Geometry Column 56,
\emph{SIGACT News Bulletin}
\textbf{44(2)} (2013), 80--87.

\bibitem{DT15}
A.~Dumitrescu and  Cs.~D. T\'oth,
Packing anchored rectangles,
\emph{Combinatorica} {\bf 35(1)} (2015), 39--61.

\bibitem{EJS01}
T.~Erlebach, K.~Jansen, and E.~Seidel,
Polynomial-time approximation schemes for geometric intersection graphs,
\emph{SIAM J. Comput.} {\bf 34(6)} (2006), 1302--1323.

\bibitem{FW91}
M.~Formann and F. Wagner,
A packing problem with applications to lettering of maps,
in \emph{Proc. 7th Sympos. Comput. Geometry}, 1991, ACM Press, pp.~281--288.

\bibitem{Ha66}
M.~Hanan,
On Steiner's problem with rectilinear distance,
{\em SIAM J. Appl. Math.} \textbf{14} (1966), 255--265.

\bibitem{IL03}
C.~Iturriaga and A.~Lubiw,
Elastic labels around the perimeter of a map,
\emph{J. Algorithms} \textbf{47(1)} (2003), 14--39.

\bibitem{JC04}
J.-W. Jung and K.-Y. Chwa,
Labeling points with given rectangles,
\emph{Inf. Process. Lett.} \textbf{89(3)} (2004), 115--121.

\bibitem{KT13}
K.~G.~Kakoulis and I.~G.~Tollis, Labeling algorithms,
Ch.~28 in \emph{Handbook of Graph Drawing and Visualization (R.~Tamassia, ed.)},
CRC Press, 2013.

\bibitem{KR92}
D. Knuth and A. Raghunathan, The problem of compatible representatives,
\emph{SIAM J. Disc. Math.} \textbf{5} (1992), 36--47.

\bibitem{KNN+02}
A.~Koike, S.-I. Nakano, T.~Nishizeki, T.~Tokuyama, and S.~Watanabe,
Labeling points with rectangles of various shapes,
\emph{Int. J. Comput. Geometry Appl.}
\textbf{12(6)} (2002), 511--528.

\bibitem{KSW99}
M. van Kreveld, T. Strijk, and A. Wolff, Point labeling with sliding labels,
\emph{Comput. Geom.} {\bf 13} (1999), 21--47.

\bibitem{Ra49}
R.~Rado,
Some covering theorems (I),
\emph{Proc. Lond. Math. Soc.}
\textbf{51} (1949), 232--264.

\bibitem{Ra51}
R.~Rado,
Some covering theorems (II),
\emph{Proc. Lond. Math. Soc.}
\textbf{53} (1951), 243--267.

\bibitem{Ra68}
R.~Rado,
Some covering theorems (III),
\emph{Proc. Lond. Math. Soc.}
\textbf{42} (1968), 127--130.

\bibitem{Ra28}
T.~Rado,
Sur un probl\`eme relatif \`a un th\'eor\`eme de Vitali,
\emph{Fund.\ Math.},
\textbf{11} (1928), 228--229.

\bibitem{Tu69}
  W.~Tutte,
  {\it Recent Progress in Combinatorics:
    Proc. 3rd Waterloo Conf. Combin.},
  May 1968, Academic Press, New York, 1969.

\bibitem{Wi07}
P.~Winkler, Packing rectangles, in {\it Mathematical Mind-Benders},
A.K. Peters Ltd., Wellesley, MA, 2007, pp.~133--134.

\bibitem{ZJ06}
B. Zhu and M. Jiang,
A combinatorial theorem on labeling squares with points and its application,
\emph{J. Comb. Optim.}
{\bf 11(4)} (2006), 411--420.

\end{thebibliography}
\end{document}